\DeclareMathOperator*{\argmin}{arg~min}
\DeclareMathOperator{\supp}{supp}
\newcommand{\bfx}{{\bf x}}
\newcommand{\bfu}{{\bf u}}
\newcommand{\bfv}{{\bf v}}
\newcommand{\bfy}{{\bf y}}
\newcommand{\bfe}{{\bf e}}
\newcommand{\bfz}{{\bf z}}
\newcommand{\bfw}{{\bf w}}
\newcommand{\bfr}{{\bf r}}
\newcommand{\bfL}{{\bf L}}
\newcommand{\bfX}{{\bf X}}
\newcommand{\bfY}{{\bf Y}}
\newcommand{\bfE}{{\bf E}}
\newcommand{\bfA}{{\bf A}}
\newcommand{\bfone}{{\bf 1}}
\def\3bar{{|\hspace{-.02in}|\hspace{-.02in}|}}
\theoremstyle{plain} % Plain style
\newtheorem{theorem}{Theorem}[section]
\newtheorem*{theorem*}{Theorem}
\newtheorem{lemma}[theorem]{Lemma}
\theoremstyle{definition}% definition style
\newtheorem{remark}[theorem]{Remark}
\newtheorem{definition}[theorem]{Definition}
\numberwithin{equation}{section}
\begin{document}
\algdef{SE}[SUBALG]{Indent}{EndIndent}{}{\algorithmicend\ }%
\algtext*{Indent}
\algtext*{EndIndent}

\title{A Compressive Sensing Approach to Community Detection\\ with Applications}
\author{Ming-Jun Lai\footnote{mjlai@uga.edu. Department of Mathematics,
University of Georgia, Athens, GA 30602.
This research is partially supported by 
the National Science Foundation under the grant \#DMS 1521537. }
\and
Daniel Mckenzie \footnote{mckenzie@math.uga.edu. Department of 
Mathematics, University of Georgia, Athens, GA 30602. The second author gratefully acknowledges the financial support of the NRF of South Africa while this research was conducted}}
\maketitle
\begin{abstract}
The community detection problem for graphs asks one to partition the $n$
 vertices $V$ of a graph $G$ into $k$ communities, or clusters, such 
that there are many intracluster edges and few intercluster edges. Of 
course this is equivalent to finding a permutation matrix $\mathbf{P}$ such 
that, if $\bfA$ denotes the adjacency matrix of $G$, then $\mathbf{P}\bfA\mathbf{P}^\top $ is 
approximately block diagonal. As there are $k^n$ possible partitions of 
$n$ vertices into $k$ subsets, directly determining the optimal 
clustering is clearly infeasible. Instead one seeks to solve a more 
tractable approximation to the clustering problem. In this paper 
we reformulate the community detection problem via sparse solution of 
a linear system associated with the Laplacian of a graph $G$ 
and then develop a two-stage approach based on a thresholding technique and a  
compressive sensing algorithm to find a sparse solution which 
corresponds to the community containing a vertex of 
interest in $G$. Crucially, our approach results in an algorithm which is able 
to find a single cluster of size $n_0$ in $\mathcal{O}(n\ln(n) n_0)$ operations and all $k$ clusters in fewer than 
$\mathcal{O}(n^2\ln(n))$ operations. This is a marked 
improvement over the classic spectral clustering algorithm, which is 
unable to find a single 
cluster at a time and takes approximately $\mathcal{O}(n^3)$ operations 
to find all $k$ clusters. Moreover, we are able to provide robust 
guarantees of success for the case where $G$ is drawn at random from 
the Stochastic Block Model, a popular model for graphs with clusters. 
Extensive numerical results are also provided, showing the efficacy 
of our algorithm on both synthetic and real-world data sets.
\end{abstract}

\section{Introduction}
The  clustering problem for a graph $G = (V,E)$ is to 
divide the vertex set $V$ into subsets $V = C_{1}\cup \ldots \cup C_{k}$
 such that there are many intracluster edges 
(edges between vertices in the same cluster) and few intercluster edges 
(edges between vertices in different clusters) in $E$. This is a widely studied problem in exploratory data analysis, as one can reasonably assume that vertices in the same cluster a `similar', in some sense.  We refer the reader to the survey article \cite{Fortunato2010} for further details and a thorough overview of existing algorithmic approaches. We note that \cite{Fortunato2010} refers to the clustering problem as the community detection problem, and we shall use these two phrases interchangeably.  As is well-known, detecting clusters in $G$ is equivalent to finding a permutation matrix $\mathbf{P}$ such that if $\bfA$ is the adjacency matrix of $G$, then $\mathbf{P}\bfA\mathbf{P}^\top $ is almost block diagonal. Thus, we can think of the clustering problem as a special case of the matrix reduction problem where the matrix in question has binary entries. 

One class of robust and accurate algorithms used to solve the clustering problem are the spectral algorithms. Loosely, they work as follows. Suppose that $|V| = n$ and let $\bfL$ denote the graph Laplacian, while $\bfone_{C_a}$ denotes the indicator vector of the $a$-th cluster (both to be defined in \S \ref{section:GraphTheory}). Suppose further that it is known \emph{a priori} that $G$ has $k$ clusters. Let $\bfv_{1},\ldots, \bfv_{k}$ be orthogonal eigenvectors associated to the $k$ smallest eigenvalues of $L$, and consider the subspace $\text{span}\{\bfv_1,\ldots, \bfv_k\}\subset \mathbb{R}^{n}$. One can show that, under certain conditions, this subspace is `close' to $\text{span}\{\bfone_{C_1},\ldots, 
\bfone_{C_k}\}$ and hence one can use the basis $\{\bfv_1,\ldots, \bfv_k\}$ to infer the supports of the basis  $\{\bfone_{C_1},\ldots, \bfone_{C_k}\}$ , thus determining the 
clusters $C_{1}, \ldots, C_{k}$ (of course $\supp(\bfone_{C_a}) = C_a$). We refer the reader to \cite{Ng2001}, \cite{Luxburg2007} or \cite{Nascimento2011} for details. \\

Despite its theoretical and experimental success,the spectral approach has three main drawbacks:
\begin{enumerate}
\item The number of clusters $k$, needs to be known \emph{a priori}. Clearly this is not always the case for real data sets. 
\item 
The algorithm cannot be used to find only a few clusters. As in many applications one is only interested in finding one or two clusters ( thinking of the problem of identifying friends or associates of a given user from a social network data set). Moreover in other cases where the data set is extremely large, or only partially known, it might be computationally infeasible to identify all clusters.
\item Computing an eigen decomposition of $\bfL$ typically requires $\mathcal{O}(n^3)$ operations, making spectral methods prohibitively slow for truly large data sets, such as those arising from electronic social networks like Facebook or LinkedIn, or those arising from problems in Machine Learning.
\end{enumerate}
 
 However, in many situations (the social network example being one such case), the expected size of the clusters, $n_0$, is small compared to $n$, and hence the indicator vectors $\mathbf{1}_{C_a}$ will be \emph{sparse}. Our approach is to adapt sparse recovery algorithms from the compressive 
sensing study to solve the following problem:
\begin{equation}
\argmin ||\bfL\bfx ||_{2}: \quad \text{ subject to } x_{i} = 1 \text{ and } ||
\bfx ||_{0} \leq n_0\label{eq: CompClustering}
\end{equation}
to determine, directly, an approximation to the indicator vector of the cluster containing the vertex of interest $\bfv_i$. 
One can then recover the cluster by considering the support of this vector. If desired, one can iterate the algorithm to find all the remaining clusters of $G$. 

Adapting compressive sensing algorithms to solve \eqref{eq: 
CompClustering} proves challenging, as $\bfL$ is a poorly conditioned 
sensing matrix.  In general, a greedy type algorithm such as orthogonal 
matching pursuit or iterative hard thresholding (cf. 
\cite{Foucart2013}) work very well when there are no intercluster 
edges (in this case finding clusters reduces to finding connected components). 
Unfortunately, in the presence of even a small number of intercluster 
edges, the first few iterations of a greedy algorithm are likely to 
pick some indices outside the desired cluster. To overcome this 
difficulty, we propose a novel two stage algorithm  (see Algorithm~\ref
{algorithm:CompClust} in \S 8) in which the first stage identifies a 
subset $\Omega \subset V$ which contains the cluster of interest with 
high probability. The second stage then extracts the cluster of 
interest from $\Omega$ using a greedy algorithm (we use Subspace 
Pursuit cf. \cite{Dai2009}). In addition to the aforementioned 
algorithm, the main contributions of this paper are the following: 
\begin{enumerate}
\item An analysis of the Restricted Isometry and Coherence properties 
of the graph Laplacian.  In particular, we provide a series of 
probabilistic bounds on the restricted isometry constants and coherence 
of Laplacians of graphs drawn from a well-studied model of random 
graphs (the Stochastic Block Model (SBM)). See \S \ref{section:RIPLaplacian} and 
\S \ref{section:CoherenceProperties}. 
\item A proof that the Optimal Matching Pursuit (OMP) algorithm 
can be successfully used to detect connected 
components of any graph, by solving \eqref{eq: CompClustering}  
(see \S \ref{section:CompClustforq_zero}.) 
\item A proof that, given a vertex $\bfv_i$, our Single Cluster Pursuit (SCP)   
Algorithm~\ref{algorithm:CompClust}  successfully finds the cluster 
containing $\bfv_i$ when $G$ is drawn from the Stochastic Block Model (SBM), 
for a certain range of parameters and with probability tending to $1$ as $n\to\infty$. We 
achieve this by combining the bounds of contribution  with the theory 
of totally perturbed compressive sensing, e.g. in \cite{Herman2010}.
\item An analysis of the computational complexity of Algorithm \ref{algorithm:CompClust}, showing that it finds a single cluster in 
$\mathcal{O}(n^2  +\ln(n)nn_0)$ time and all clusters in $\mathcal{O}(kn^2 + \ln(n)n^2)$ time.
\end{enumerate}
        
The structure of this paper is as follows. After briefly reviewing 
related work in \S 2, in \S 3 and \S 4 we acquaint the reader with the 
necessary concepts from spectral graph theory and  compressive sensing,
respectively. In \S 5 and \S 6 we study the restricted isometry property
and coherence property of Laplacians of random graphs. In \S 7 and 
\S 8, we describe two algorithms to handle graphs from 
the Stochastic Block Model ${\cal G}(n, k, p, q)$ for $q=0$ and $q>0$, 
with respectively. 
We shall show that, under certain mild assumptions on the graph $G$, 
the algorithms will indeed find the correct clustering with high 
probability.  \S 9 contains the computational 
complexity analysis, some possible extensions and several future 
research directions. Finally, in \S 10 we present the results of 
several numerical experiments to demonstrate the accuracy and speed of 
our algorithm and its powerful performance. 

%%%%%%%%%%%%%% Related Work %%%%%%%%%%%%%%%
\section{Related Work}
The notion of community detection in graphs arises independently in 
multiple fields of applied science, such as Sociology (\cite{Zachary1977},\cite{Newman2002}), Computer Engineering (\cite{Hagen1992}), Machine Learning (\cite{Shi2000}) and Bioinformatics (\cite{Cline2007}). In addition, many data sets can be represented as graphs by considering data points as vertices and attaching edges between vertices that are `close' with respect to an appropriate metric. Thus, community detection algorithms can also be used to detect clusters in general data sets, and indeed they have been shown to be superior to other clustering algorithms (for example $k$-means) at detecting non-convex clusters (\cite{Jain2010}). \\
The canonical probabilistic model of a graph containing communities is the Stochastic Block Model (SBM), first 
explicitly introduced in the literature in the early 1980's by 
Holland,  Laskey, and Leinhardt in \cite{Holland1983}. Since then, there has been an explosion of interest in the SBM, driven in large part by its many applications. In \cite{Abbe2017}, Abbe identifies three forms of the community detection problem, based on the kind of asymptotic accuracy we require (here, as in the rest of the paper, when we speak of asymptotics we are considering the situation where the number of vertices, $n$, goes to $\infty$). In this paper we shall be concerned with the \emph{Exact Recovery Problem}, where we require that $\mathbb{P}\left( \# \text{misclassified vertices} > 0\right) = o(1)$. A fundamental information theoretic barrier to exact recovery is given by the following result (cf. 
\cite{Abbe2015}):

\begin{theorem}
The exact recovery problem , i.e. $\mathbb{P}\left( \# \text{misclassified vertices} > 0\right) = o(1)$ with respect to $n$ for the symmetric SBM $\mathcal{G}(n,k,p,q)$ is solvable in polynomial time if, 
writing $p = P\ln(n)/n$ and $q = Q\ln(n)/n$:
\begin{equation}
\frac{1}{k}\left(\sqrt{P} - \sqrt{Q}\right) > 1
\label{eq:DetectionBound}
\end{equation}
and not solvable if:
\begin{equation}
\frac{1}{k}\left(\sqrt{P} - \sqrt{Q}\right) < 1
\end{equation}
\end{theorem}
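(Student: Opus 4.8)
The plan is to establish the two halves of the statement separately: the \emph{impossibility} half by a genie-aided second-moment argument, and the \emph{achievability} half (solvability in polynomial time) by an efficient two-round algorithm. Both hinge on one common ingredient --- a sharp large-deviations estimate for the event that a single vertex is misclassified when every other label is known.

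For the impossibility direction, suppose an oracle reveals the true community of every vertex except one, call it $v$, which truly lies in community $a$. Conditioned on the revealed labels the edges at $v$ are independent Bernoulli variables, present with probability $p$ towards revealed community-$a$ vertices and with probability $q$ towards all others; hence the posterior on $v$'s label is a function only of the counts $(d_1(v),\dots,d_k(v))$ of $v$'s neighbours in each community, and the maximum-likelihood (equivalently MAP) decision errs precisely when $a$ fails to be the strict likelihood maximiser. A Chernoff/Hellinger computation for $\mathrm{Bernoulli}(p)$ versus $\mathrm{Bernoulli}(q)$ with $p = P\ln(n)/n$ and $q = Q\ln(n)/n$ shows that, for a fixed ordered pair of communities, the probability that the swap $a\mapsto b$ is not likelihood-penalised is $n^{-(1+o(1))\rho}$, where $\rho=\rho(P,Q,k)$ is the Chernoff--Hellinger exponent of the symmetric SBM; evaluating $\rho$ identifies the regime $\rho<1$ with the failure condition $\frac{1}{k}(\sqrt P-\sqrt Q)<1$. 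When $\rho<1$, a Poissonisation / second-moment argument over the $n$ vertices --- which must control the mild dependence coming from shared edges --- shows that the number of vertices misclassified by the optimal genie-aided estimator is strictly positive with probability $1-o(1)$. Since revealing labels can only help, no estimator whatsoever (regardless of running time) can make $\mathbb{P}(\#\text{misclassified}>0)=o(1)$, which is the second claim.

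For achievability, note first that the MLE attains the information-theoretic threshold but is not known to be polynomial-time; instead one produces an efficient estimator by a two-round scheme combined with an edge-splitting trick. Split the edge set once into two independent random graphs $G_1,G_2$ on $V$ (each edge assigned to $G_1$ or $G_2$ independently). On $G_1$ run a spectral method (top $k$ eigenvectors of the centred adjacency matrix followed by $k$-means) or a semidefinite relaxation to obtain an \emph{almost}-exact labelling --- at most $\varepsilon n$ vertices misclassified with probability $1-o(1)$, for any fixed $\varepsilon>0$ --- which is available throughout the logarithmic-degree regime via Davis--Kahan-type perturbation bounds (or Grothendieck-inequality bounds for the SDP) once one has concentration of $\bfA$ about its expectation. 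In the second round, relabel each vertex $v$ to the community with which it has the most $G_2$-edges, counted according to the round-one labelling. Because $G_2$ is independent of $G_1$ and the round-one labelling is correct on a $(1-\varepsilon)$-fraction of each community, the round-two error probability for a fixed $v$ is the genie-aided one up to lower-order factors, i.e. $n^{-(1+o(1))\rho}$ with the same $\rho$; a union bound over the $n$ vertices then yields exact recovery whenever $\rho>1$, i.e. under \eqref{eq:DetectionBound}, and the whole procedure is polynomial-time.

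The main obstacle is getting both sharp estimates to line up at the same constant. On the converse side the delicate part is the matching upper and lower bounds on $\mathbb{P}[\text{fixed }v\text{ misclassifiable}]$ with the \emph{exact} exponent, and then converting ``each vertex bad with probability $n^{-\rho}$, $\rho<1$'' into ``some vertex bad with high probability'': this genuinely requires a second-moment computation, since the bad events for different vertices share edges and are not independent, and one must show the second moment is not much larger than the square of the first. On the achievability side the subtle step is decoupling --- without edge-splitting the round-one labelling is a function of the same randomness that drives round two, so one cannot simply invoke the single-vertex estimate; the splitting device sidesteps this at the cost of replacing $(P,Q)$ by $(P/2,Q/2)$-type parameters, a loss one must then argue is absorbable (or removable by a more careful ``leave-one-out'' analysis). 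Establishing the coarse round-one guarantee at the correct density --- the spectral or SDP analysis --- and verifying that its $\varepsilon n$ errors perturb each vertex's refinement only multiplicatively by $1\pm o(1)$ is where essentially all of the technical effort is concentrated.
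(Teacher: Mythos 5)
The paper does not actually prove this theorem; the remark immediately following it says the result ``was proved in \cite{Abbe2015} (and see also \cite{Abbe2017})'', and it is invoked purely as a benchmark for the authors' own guarantees. Your outline is exactly the Abbe--Sandon strategy cited there: a genie-aided converse via a Chernoff--Hellinger large-deviations estimate combined with a second-moment argument, and an achievability half via an efficient almost-exact-then-refine two-round algorithm with edge splitting. So there is no ``paper proof'' to compare against, and your sketch correctly identifies the structure of the argument the paper is relying on.

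One substantive caution, though, if you were to carry the computation through. You write that evaluating the Chernoff--Hellinger exponent ``identifies the regime $\rho<1$ with the failure condition $\frac{1}{k}(\sqrt P-\sqrt Q)<1$.'' That identification is not what the computation gives. For the symmetric SBM with $p=P\ln(n)/n$, $q=Q\ln(n)/n$ and $k$ equal blocks, the per-vertex degree profile towards each community is (approximately) Poisson with mean $\frac{P\ln n}{k}$ in-community and $\frac{Q\ln n}{k}$ out-of-community, and the Chernoff--Hellinger divergence between the two relevant profiles works out to $\frac{1}{k}\bigl(\sqrt P-\sqrt Q\bigr)^{2}$, not $\frac{1}{k}\bigl(\sqrt P-\sqrt Q\bigr)$. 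The theorem as printed in the paper appears to be missing the square (equivalently, the condition should read $\sqrt P-\sqrt Q>\sqrt k$). You have faithfully reproduced the un-squared expression from the statement, but an honest execution of the genie-aided Chernoff step would yield the squared exponent, so the threshold your argument actually establishes does not coincide with the one you are nominally asked to prove. This is a transcription error in the paper rather than a flaw in your strategy, but as written your proof would not close without either squaring the displayed quantity or exhibiting a nonstandard exponent calculation.
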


\begin{remark}
This result was proved in \cite{Abbe2015} (and see also \cite{Abbe2017}). We note that in the cases our algorithm is guaranteed to solve the exact recovery problem (see Theorem~\ref{thm:MainSuccess}), we assume that $\frac{1}{k}\left(\sqrt{P} - \sqrt{Q}\right)  \to \infty$, well above this theoretical bound.
\end{remark} 
Given the bound \eqref{eq:DetectionBound}, the challenge then is to construct efficient algorithms to solve the Exact Recovery Problem. There are myriad algorithmic approaches, such as the spectral approach (originally proposed by Fiedler in \cite{Fiedler1975} for the two cluster case), hierarchical approaches like the DIANA algorithm popular in bioinformatics  (\cite{Kaufman1990}) and message passing algorithms (\cite{Frey2007}), to name a few. Recently, two new classes of algorithms, namely degree-profiling (\cite{Abbe2015}) and semidefinite progamming approaches (\cite{Abbe2016}, \cite{Montanari2015}, \cite{Hajek2016} and \cite{Le2015} among others) have been shown to solve the exact recovery problem with high probability right down to the theoretical bound \eqref{eq:DetectionBound}.  Degree-profiling even runs in quasi-linear time, although it requires the parameters $p$, $q$ and $k$ as inputs, making it less than ideal for analysing real world data sets. As the new algorithm we propose is most closely related to the spectral approach, let us recall this algorithm here (as formulated by Ng, Jordan and Weiss in \cite{Ng2001}). 

\begin{algorithm}[H]
\caption{The Spectral Clustering Algorithm (SC)}
\label{algorithm:SC}
{\bf Input:} the adjacency matrix $\bfA$.
\begin{algorithmic}
\State (1) Form the degree matrix $\mathbf{D} = \text{diag}\left(d_1,\ldots, d_n\right)$ and Laplacian $\bfL := \mathbf{D}^{-1/2}\bfA\mathbf{D}^{-1/2}$
\State (2) Find the $k$ (orthogonal) eigenvectors $\bfv_1,\ldots, \bfv_k$ corresponding to the $k$ largest eigenvalues of $\bfL$.
\State (3) Form the matrix $\mathbf{U} = [\bfv_1,\ldots, \bfv_k]$ and normalize the \emph{rows} to get $\mathbf{V}$
\State (4) Sort the rows of $\mathbf{V}$ into $k$ clusters $B_1,\ldots, B_k$ using $k$-means.
\State (5) Assign vertex $i$ to community $C_j$ if and only if row $i$ is in $B_j$.
\end{algorithmic}
{\bf Output:} Communities $C_1,\ldots, C_k$.
\end{algorithm}
 
We mention that notions from Compressive Sensing have been applied to community detection before,  notably in the semidefinite programming approaches mentioned above, and in \cite{Tremblay2016} where signal processing techniques for functions defined on a graph $G$ are used to speed up the computation of the eigenvectors of $\bfL$. Our approach is distinct from these.  To the best of the authors' knowledge, the study in this paper is the first attempt to find the indicator vectors $\bfone_{C_i}$ directly using sparse recovery.    

%%%%%%%%%%%%%% Discussion about Graph Theory %%%%%%%%%%%%
\section{Preliminary on Graph Theory}
\label{section:GraphTheory}
\subsection{Some Elementary Notions and Definitions}
Formally, by a graph $G$ we mean a set of vertices $V$ together with a 
subset $E\subset \{\{u,v\}:\ u,v \in V\}$ of edges\footnote{We only consider undirected graphs}. As we are only 
concerned with finite graphs, we shall always identify the vertex set 
$V$ with a finite set of consecutive  
natural numbers: $V = [n] := \{1,2,\ldots, n\}$.  The \emph{degree} of any vertex $i\in G$ is the total 
number of edges incident to $i$, that is  $d_{i} = |\{\{i,j\}\in E\}| $. 

A \emph{subgraph} $G^{'}$ of $G$ is a subset of vertices 
$V^{'}\subset V$ together with a subset of vertices \mbox{$E^{'} \subset 
E\cap V^{'}\times V^{'}$}. Given any subset $S\subset V$, we denote by 
$G_{S}$ the subgraph with vertex set $S$ and edge set \emph{all} edges 
$\{i,j\}$ with $i,j\in S$. A \emph{path} in $G$ is a set of `linked' edges 
$\{\{i_{1},i_{2}\},\{i_{2},i_{3}\},\ldots, \{i_{k-1},i_{k}\}$, and we 
say that $G$ is \emph{connected} if there is a path linking any two 
vertices $i,j\in V$, and 
\emph{disconnected otherwise}. If $G$ is disconnected, any subgraph 
$G_{S}\subset G$ which 
is connected and maximal with respect to the property of being 
connected is called a \emph{connected 
component}. 

If $G$ is connected, we define the \emph{diameter} of $G$ 
to be the length of (i.e. the number of edges in) the longest path.  
Given any $i\in V$ and any non-negative integer $j$, we define the ball 
$B_{j}(i)\subset V$ to be the set of all vertices connected to $i$ be a 
path of length $j$ or shorter. A good reference on elementary graph theory is \cite{Aldous2000a}.

\subsection{The Graph Laplacian}
To any graph $G$ with $|V|=n$ we associate a symmetric, $n\times n$, 
non-negative matrix called the \emph{adjacency matrix} $\bfA$, defined as $\bfA_{ij} = \bfA_{ji} = 1$ if $\{i,j\}\in E$ and $\bfA_{ij} = 0$ otherwise. The graph Laplacians of $G$ are defined as follows. 
\begin{definition}
Let $\bfA$ denote the adjacency matrix of a graph $G$ and let $\mathbf{D}$ denote 
the matrix $\text{diag}(d_{1},\ldots, d_{n})$ where $d_{i}$ is the 
degree of the $i$-th vertex. We define the \emph{normalized, symmetric 
graph Laplacian} of $G$ as $\bfL_{s} := \mathbf{I}- \mathbf{D}^{-1/2}\bfA\mathbf{D}^{-1/2}$ and 
the \emph{normalized, random walk graph Laplacian} as 
$\bfL_{rw} := \mathbf{I} - \mathbf{D}^{-1}\bfA$.
\label{Laplace_defn}
\end{definition}

We first have a few basic properties of graph Laplacians. 
\begin{theorem} Suppose that $\bfL = \bfL_{s}$ or $\bfL_{rw}$. 
We have the following properties:
\label{thm:SpectralProp}
\begin{enumerate}
\item The eigenvalues of $\bfL$ are real and non-negative.
\item $\lambda_{n-1} \leq 2$
\item Let $\lambda_{1}\leq \lambda_{2} \leq \ldots \leq \lambda_{n}$ 
denote the eigenvalues of $L$ in ascending order. Let $k$ denote the 
number of connected components of $G$. Then $\lambda_{i} = 0$ for 
$i\leq k$ and $\lambda_{i} > 0$ for $i> k$. 
\end{enumerate}
\end{theorem}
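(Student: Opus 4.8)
The plan is to deduce all three statements from properties of the symmetric Laplacian $\bfL_s$ together with the elementary observation that $\bfL_{rw}$ is similar to $\bfL_s$. Assuming, as is implicit throughout, that $G$ has no isolated vertices so that $\mathbf{D}$ is invertible, one checks directly that $\mathbf{D}^{1/2}\bfL_{rw}\mathbf{D}^{-1/2} = \mathbf{I} - \mathbf{D}^{-1/2}\bfA\mathbf{D}^{-1/2} = \bfL_s$. Hence $\bfL_{rw}$ and $\bfL_s$ have the same eigenvalues, and $\bfx$ is an eigenvector of $\bfL_s$ for $\lambda$ if and only if $\mathbf{D}^{-1/2}\bfx$ is an eigenvector of $\bfL_{rw}$ for $\lambda$. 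Since $\bfL_s$ is real and symmetric, its eigenvalues are real, proving the first half of item~1 for both Laplacians. So it suffices to work with $\bfL_s$.

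Next I would analyse the quadratic form of $\bfL_s$. For $\bfx\in\R^n$ set $\bfy = \mathbf{D}^{-1/2}\bfx$. Using $\bfx^\top\bfx = \sum_i d_i y_i^2 = \sum_{\{i,j\}\in E}(y_i^2 + y_j^2)$ and $\bfy^\top\bfA\bfy = 2\sum_{\{i,j\}\in E} y_i y_j$, a short expansion gives the identity
\[
\bfx^\top \bfL_s \bfx = \bfx^\top\bfx - \bfy^\top\bfA\bfy = \sum_{\{i,j\}\in E}(y_i - y_j)^2 \ge 0 ,
\]
so $\bfL_s\succeq 0$ and all its eigenvalues are $\ge 0$, finishing item~1. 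Running the same computation with $2\mathbf{I}-\bfL_s$ in place of $\bfL_s$ yields $\bfx^\top(2\mathbf{I}-\bfL_s)\bfx = \sum_{\{i,j\}\in E}(y_i + y_j)^2 \ge 0$, hence every eigenvalue is $\le 2$; thus the spectrum lies in $[0,2]$, which is item~2.

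For item~3 I would use the displayed identity to characterise $\ker\bfL_s$: since the form is a sum of squares, $\bfx^\top\bfL_s\bfx = 0$ iff $y_i = y_j$ whenever $\{i,j\}\in E$, i.e. iff $\bfy$ is constant on each connected component of $G$. The space of vectors constant on each of the $k$ components has dimension exactly $k$, being spanned by the indicators $\bfone_{C_1},\dots,\bfone_{C_k}$; as $\bfx\mapsto\bfy$ is a linear bijection, $\dim\ker\bfL_s = k$, with kernel spanned by $\mathbf{D}^{1/2}\bfone_{C_a}$. Because $\bfL_s\succeq 0$, its eigenvalues satisfy $\lambda_1=\dots=\lambda_k=0<\lambda_{k+1}\le\dots\le\lambda_n$; transporting through the similarity above gives the identical conclusion for $\bfL_{rw}$, now with kernel spanned by $\bfone_{C_1},\dots,\bfone_{C_k}$ themselves (the vectors that reappear in the spectral algorithm).

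This result is entirely classical, so I do not expect a genuine obstacle; the only points demanding a little care are the bookkeeping in the quadratic-form expansion, the use of the $\bfL_{rw}$–$\bfL_s$ similarity to handle both matrices simultaneously, and recording the standing assumption of no isolated vertices so that $\mathbf{D}^{-1/2}$ is defined.
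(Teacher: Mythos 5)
Your proof is correct and follows essentially the same route as the paper: both reduce to $\bfL_s$ via the similarity $\bfL_{rw} = \mathbf{D}^{-1/2}\bfL_s\mathbf{D}^{1/2}$ and then invoke the standard spectral facts for the symmetric normalized Laplacian. The only difference is that where the paper simply cites Lemma 1.7 of Chung's book for items 1–3 of $\bfL_s$, you re-derive those facts directly from the quadratic-form identity $\bfx^\top \bfL_s\bfx = \sum_{\{i,j\}\in E}(y_i-y_j)^2$ — which is precisely the argument that underlies the cited lemma.
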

\begin{proof}
For $\bfL = \bfL_{s}$, Items 
$1$ to $3$ follow from Lemma 1.7 in \cite{Chung1999}. If $\bfL = 
\bfL_{rw}$, observe that $\bfL_{rw} = \mathbf{D}^{-1/2}\bfL_{s}\mathbf{D}^{1/2}$, 
and so the eigenvalues of $\bfL_{rw}$ and $\bfL_{s}$ coincide. 
Hence the above hold for $\bfL_{rw}$ as well.
\end{proof}

For any subset $S \subset [n]$ we define its \emph{indicator vector}, denoted $\mathbf{1}_{S}\in\mathbb{R}^{n}$, by $(\mathbf{1}_{S})_{i} = 1$ if $i \in S$ and  zero otherwise. Let $C_1, \cdots, C_k$ be the clusters of $G$. One can check (and see also \cite{Luxburg2007} proposition 2) that $\bfL_{rw} \bfone_{C_i}=0$ for $i=1, \cdots, 
k$. Thus we have the following:
\begin{theorem}
\label{thm:IndicatorVecsKernel}
 The indicator vectors of the connected components of $G$, 
$\bfone_{C_1},\ldots, \bfone_{C_k}$, form a basis for the zero 
eigenspace (i.e. the kernel) of $\bfL_{rw}$.
\end{theorem}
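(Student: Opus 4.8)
The plan is to establish two things: first, that each $\bfone_{C_i}$ lies in the kernel of $\bfL_{rw}$, and second, that these $k$ vectors form a basis for that kernel. The first point is essentially already granted in the text (the remark that $\bfL_{rw}\bfone_{C_i} = 0$), but I would verify it directly: for a vertex $j$ in the connected component $C_i$, every neighbor of $j$ also lies in $C_i$, so $(\mathbf{D}^{-1}\bfA\bfone_{C_i})_j = \frac{1}{d_j}\sum_{\ell\sim j} (\bfone_{C_i})_\ell = \frac{1}{d_j}\cdot d_j = 1 = (\bfone_{C_i})_j$; and for $j\notin C_i$, no neighbor of $j$ lies in $C_i$, so both sides are $0$. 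Hence $(\mathbf{I} - \mathbf{D}^{-1}\bfA)\bfone_{C_i} = 0$. (Here I am implicitly assuming no isolated vertices so that $\mathbf{D}^{-1}$ makes sense; this is standard and I would note it.)

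Next I would argue linear independence: the connected components $C_1,\ldots,C_k$ are pairwise disjoint with union $V = [n]$, so the vectors $\bfone_{C_1},\ldots,\bfone_{C_k}$ have disjoint supports and are each nonzero, hence linearly independent. It remains to show they span the whole kernel, i.e. that $\dim\ker(\bfL_{rw}) \le k$. For this I would invoke Theorem~\ref{thm:SpectralProp}, Item 3: the eigenvalues of $\bfL_{rw}$ satisfy $\lambda_i = 0$ for $i \le k$ and $\lambda_i > 0$ for $i > k$, where $k$ is the number of connected components. Since $\bfL_{rw}$ is similar to the symmetric matrix $\bfL_s$ (via $\bfL_{rw} = \mathbf{D}^{-1/2}\bfL_s\mathbf{D}^{1/2}$, as used in the proof of that theorem), it is diagonalizable, so the algebraic multiplicity of the eigenvalue $0$ equals its geometric multiplicity, namely $k$. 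Therefore $\dim\ker(\bfL_{rw}) = k$, and the $k$ linearly independent vectors $\bfone_{C_1},\ldots,\bfone_{C_k}$ must span it.

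The only genuinely delicate point is the exact-versus-approximate distinction: the theorem statement as written speaks of "connected components," so the argument above is clean and the kernel is spanned exactly by indicator vectors. If instead the $C_i$ were merely clusters (connected components of an idealized graph with no intercluster edges), one would have to pass to a perturbation argument, but that is not what is being asserted here. So I expect no real obstacle; the proof is a short combination of the direct computation $\bfL_{rw}\bfone_{C_i}=0$, disjointness of supports for independence, and the eigenvalue count from Theorem~\ref{thm:SpectralProp} together with diagonalizability of $\bfL_{rw}$ for the dimension bound. I would present it in roughly four sentences, citing \cite{Luxburg2007} for the kernel computation as the text already does.
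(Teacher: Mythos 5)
Your proposal is correct and follows the same route the paper intends: the paper gives no formal proof, simply noting that ``one can check'' $\bfL_{rw}\bfone_{C_i}=0$ (citing \cite{Luxburg2007}) and relying on Theorem~\ref{thm:SpectralProp}, which is exactly the combination you flesh out. Your explicit appeal to diagonalizability of $\bfL_{rw}$ (via similarity to the symmetric $\bfL_s$) to convert the eigenvalue count of Theorem~\ref{thm:SpectralProp}(3) into a kernel-dimension bound is the one step that genuinely needs to be said and that the paper leaves implicit; it is correct and well placed.
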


For the rest of this paper, by $\bfL$ we shall mean $\bfL_{rw}$. We shall refer to the $i$-th column of $\bfL$ as 
$\ell_{i}$. One can easily check that:
\begin{equation*}
(\ell_{i})_{k} = \delta_{ik} - \frac{\bfA_{ik}}{d_k}.
\end{equation*}
Finally, we shall denote by $\bfL_{-i}$ the submatrix of $\bfL$ obtained by dropping the column $\ell_i$.

\subsection{Random Graph Theory}
\label{sec:RandomGraphTheory}
As outlined in \S 2, the Stochastic Block Model (SBM) is a widely used mathematical model 
of a random graph with clusters. 
\begin{definition}
Given $n = kn_0$, fix a partition of $V := [n]$ 
into $k$ subsets $C_1,\ldots, C_k$ of equal size $n_0$.  We say $G$ is drawn from the SBM $ 
\mathcal{G}(n,k,p,q)$ if, for all $i,j \in [n]$ with $i\neq j$ , the 
edge $\{i,j\}$ is inserted independently and with probability 
$p \text{ if } i,j \in C_{a}$ for some $a=1, \cdots, k$ and 
$q$ otherwise.
\end{definition}
As we area interested in clustering we assume that $q << p$. We emphasize that the partition $V = C_{1}\cup C_{2} \cup \ldots \cup 
C_{k}$ is fixed before any edges are assigned. We note that the subgraphs $G_{C_a}$ are i.i.d instances of a simpler 
random graph model, the Erd\"{o}s-R\'{e}nyi (ER) model 
$\mathcal{G}(n_0,p)$, first introduced in \cite{Erdos59}
\begin{definition}
We say $H$ is drawn from the ER model 
$\mathcal{G}(n_0,p)$ if $H$ has $n_0$ vertices and for all $i,j\in 
[n_0]$ the edge $\{i,j\}$ is inserted independently and with 
probability $p$.
\end{definition}
Returning to the SBM, for any vertex $i$ in 
community $C_a$, we define its \emph{in-community degree} as $d^{0}_{i} 
= \#\{\{i,j\}\in E: j \in C_{a}\}$ and its \emph{out-of-community 
degree} as $d_{i}^{\epsilon} := \#\{\{i,j\}\in E: j \notin C_{a}\}$. 
One can easily see that 
\begin{equation*}
\mathbb{E}[d^{0}_{i}] = p(n_0-1) \text{ and } 
\mathbb{E}[d_{i}^{\epsilon}] = q(n-n_0)
\end{equation*} 
and by definition $d_{i} = d^{0}_{i} + d^{\epsilon}_{i}$. In fact 
$d_{i}^{0}$ is the degree of $i$ considered as a vertex in the ER subgraph 
$G_{C_a}$. An important fact about degrees in ER random graphs is that 
they concentrate around their mean:
\begin{theorem}
Suppose $G$ is drawn from  $\mathcal{G}(n,k,p,q)$. 
\begin{enumerate}
\item For any $\alpha > 0$, if $p \geq \dfrac{4\ln(n)}{\alpha^2n_0}$ then with probability at least $1 - 1/n$:
\begin{equation}
(1-\alpha)n_0p \leq d^{0}_{i} \leq (1+\alpha)n_0p \quad \text{ for all } i\in \{1, 2, \cdots, n\}
\label{eq:AlphaDegBound}
\end{equation}
\item In particular, if $p \geq \dfrac{4k(\ln(n))^{2}}{n}$ then \eqref{eq:AlphaDegBound} holds, with probability at least $1-1/n$, for $\alpha = 1/\sqrt{\ln(n)} = o(1)$.
\end{enumerate}
\label{lemma:AlphaDegBound}
\end{theorem}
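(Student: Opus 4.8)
To prove the theorem I would recognize $d_i^0$ as a Binomial random variable and apply a standard multiplicative Chernoff bound together with a union bound over the $n$ vertices. Fix a vertex $i$, lying in some community $C_a$. By the definition of $\mathcal{G}(n,k,p,q)$ the edges $\{i,j\}$ with $j\in C_a\setminus\{i\}$ are inserted independently, each with probability $p$, so $d_i^0=\sum_{j\in C_a\setminus\{i\}}X_{ij}$, where $X_{ij}$ is the indicator of $\{i,j\}\in E$, is a sum of $n_0-1$ i.i.d. Bernoulli$(p)$ variables; hence $d_i^0$ is Binomial$(n_0-1,p)$ with mean $(n_0-1)p$. (Across distinct vertices the $d_i^0$ are not independent, but this is irrelevant for the union bound.)

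Next I would invoke the multiplicative Chernoff inequality (e.g. as in \cite{Foucart2013}): for $0<\delta<1$,
\[
\mathbb{P}\big(\,|d_i^0-(n_0-1)p|\ge \delta\,(n_0-1)p\,\big)\ \le\ 2\exp\!\big(-c\,\delta^2(n_0-1)p\big)
\]
for an absolute constant $c$ (one may take $c=\tfrac12$ for the lower tail, and the upper tail is at least as good in the regime of small $\delta$). The only mildly delicate point is that the target window in \eqref{eq:AlphaDegBound} is centred at $n_0p$ rather than at the true mean $(n_0-1)p$; this is absorbed by running Chernoff with a parameter $\delta$ slightly below $\alpha$ — any $\delta\le(\alpha n_0-1)/(n_0-1)$, which for the relevant sizes of $n_0$ equals $\alpha(1-o(1))$, will do — so that the event $|d_i^0-(n_0-1)p|\le\delta(n_0-1)p$ is contained in $(1-\alpha)n_0p\le d_i^0\le(1+\alpha)n_0p$. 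Substituting the hypothesis $p\ge 4\ln(n)/(\alpha^2 n_0)$ gives $c\,\delta^2(n_0-1)p\ge (2-o(1))\ln n$, so the per-vertex failure probability is at most $n^{-2}$ up to constant and $o(1)$ factors, and a union bound over the $n$ choices of $i$ yields \eqref{eq:AlphaDegBound} with probability at least $1-1/n$.

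Part (2) is then a direct specialization: taking $\alpha=1/\sqrt{\ln n}$ gives $\alpha^2=1/\ln n$, and since $n_0=n/k$ the condition $p\ge 4\ln(n)/(\alpha^2 n_0)$ becomes exactly $p\ge 4k(\ln n)^2/n$, while $\alpha=1/\sqrt{\ln n}\to 0$, i.e. $\alpha=o(1)$; so \eqref{eq:AlphaDegBound} holds with vanishing relative error. The only real work is the constant-chasing in the Chernoff step (the factor $4$, the factor $2$ from the two tails, and the $n_0$ versus $n_0-1$ discrepancy); there is no conceptual obstacle.
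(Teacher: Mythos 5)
Your proof is correct and follows essentially the same route as the paper: recognize $d_i^0$ as Binomial$(n_0-1,p)$, apply a multiplicative Chernoff bound, then take a union bound over the $n$ vertices. You are somewhat more careful than the paper about the $n_0$ versus $n_0-1$ discrepancy and the factor of $2$ from bounding both tails (the paper quietly absorbs both into its Chernoff estimate $\mathbb{P}[|d_i^0 - n_0p|\ge\alpha n_0 p]\le e^{-\alpha^2 n_0 p/2}$), but this is a difference in bookkeeping rather than in approach.
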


\begin{proof}
This theorem is a variation on a well known result for ER graphs (e.g. theorem 3.6 in \cite{Frieze2016}). Each $d^{0}_{i}$ follows the binomial distribution with parameters $n_0-1$ and $p$, so by the Chernoff bound: $\mathbb{P}\left[|d^{0}_{i} - n_0p|\geq \alpha n_0p\right] \leq e^{-\alpha^2n_0p/2}$. Hence:
\begin{align*}
\mathbb{P}\left[\max_{i\in [n]}|d^{0}_{i} - n_0p|\geq \alpha n_0p\right] & = \mathbb{P}\left[|d^{0}_{1} - n_0p|\geq \alpha n_0 p \text{ or } |d^{0}_{2} - n_0p|\geq \alpha n_0 p \text{ or } \ldots \text{ or } |d^{0}_{n} - n_0p|\geq \alpha n_0 p \right] \\
&\leq \sum_{i=1}^{n} \mathbb{P}\left[ |d^{0}_{i} - n_0p| \geq \alpha n_0p \right] = ne^{-\alpha^2n_0p/2}
\end{align*}
Thus $\displaystyle \mathbb{P}\left[\max_{i\in [n_0]}|d^{0}_{i} - n_0p|\leq \alpha n_0p\right] = 1 - ne^{-\alpha^2n_0p/2}$ If $p \geq \dfrac{4\ln(n)}{\alpha^2n_0} = \dfrac{2\ln(n^{2})}{\alpha^2n_0}$ then:
\begin{equation*}
1 - ne^{-\alpha^2n_0p/2} \leq 1 - ne^{-\ln(n^2)} = 1 - 1/n
\end{equation*}
This proves part $1$. Part $2$ follows by taking $\alpha = 1/\sqrt{\ln(n)}$, in which case the lower bound on $p$ becomes:
\begin{equation*}
\frac{4\ln(n)}{\alpha^2n_0} = \frac{4(\ln(n))^{2}}{n_0} = \frac{4(\ln(n))^{2}}{n/k} =  \frac{4k(\ln(n))^{2}}{n}
\end{equation*}
\end{proof}

The second remarkable property of the ER model is that the eigenvalues of $\bfL$ also concentrate around their mean:
\begin{theorem}
\label{theorem:ChungLuVu}
Let $\bfL$ be the Laplacian of a random graph drawn from $\mathcal{G}(n_0,p)$ with $p >> (\ln(n_0))^2/n_0$ and let $\lambda_1 \leq \lambda_2 \leq \ldots \leq \lambda_{n_0}$ denote its eigenvalues. Then almost surely  \footnote{Given a family of random graph models $\mathcal{G}_{n}$ we say that some graph property $P$ holds almost surely if \mbox{$\mathbb{P}[G\in  \mathcal{G}_{n}, G \text{ does not have P}] = o(1)$} with respect to $n$}:
\begin{equation*}
\max_{i\neq 1} |1 - \lambda_{i}| \leq \left(1 + o(1)\right)\frac{4}{\sqrt{pn_0}}  + \frac{g(n_0)\log^{2}(n_0)}{pn_0}
\end{equation*}
where $g(n_0)$ is a function tending to infinity arbitrarily slowly.
\end{theorem}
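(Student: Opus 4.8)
The plan is to reduce the statement to the Chung--Lu--Vu eigenvalue estimate for random graphs with a prescribed expected degree sequence, using two observations. First, as noted in the proof of Theorem~\ref{thm:SpectralProp}, $\bfL = \bfL_{rw}$ is similar to the symmetric Laplacian $\bfL_{s} = \mathbf{I} - \mathbf{D}^{-1/2}\bfA\mathbf{D}^{-1/2}$ (indeed $\bfL_{rw} = \mathbf{D}^{-1/2}\bfL_{s}\mathbf{D}^{1/2}$), so the two matrices have the same eigenvalues and it suffices to analyse the spectrum of $\bfL_{s}$. Second, the Erd\"{o}s--R\'{e}nyi model $\mathcal{G}(n_0,p)$ is exactly the special case of the expected-degree model in which every vertex has the same expected degree $\bar{d} := (n_0-1)p \sim pn_0$, so that $\bar d$ is simultaneously the minimum expected degree and the ``second-order'' average expected degree; substituting this value into the Chung--Lu--Vu bound gives the claimed estimate, and the hypothesis $p \gg (\ln(n_0))^2/n_0$ is precisely the regime $\bar d \gg \log^{2}(n_0)$ in which their theorem is valid --- it is also what lets one choose $g$ growing slowly enough that $g(n_0)\log^{2}(n_0)/(pn_0) = o(1)$.

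For completeness I would recall the mechanism behind that bound. The matrix $\mathbf{D}^{-1/2}\bfA\mathbf{D}^{-1/2}$ has $\mathbf{D}^{1/2}\bfone$ as an \emph{exact} eigenvector with eigenvalue $1$, since $\mathbf{D}^{-1/2}\bfA\mathbf{D}^{-1/2}(\mathbf{D}^{1/2}\bfone) = \mathbf{D}^{-1/2}\bfA\bfone = \mathbf{D}^{-1/2}\mathbf{D}\bfone = \mathbf{D}^{1/2}\bfone$; hence $\lambda_{1} = 0$ and $\max_{i\neq 1}|1-\lambda_i|$ is exactly the operator norm of $\mathbf{D}^{-1/2}\bfA\mathbf{D}^{-1/2}$ on the hyperplane $(\mathbf{D}^{1/2}\bfone)^{\perp}$. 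Writing $\bfA = \mathbb{E}[\bfA] + (\bfA - \mathbb{E}[\bfA])$ with $\mathbb{E}[\bfA] = p(\mathbf{J} - \mathbf{I})$ ($\mathbf{J}$ the all-ones matrix), one first invokes the degree concentration of Theorem~\ref{lemma:AlphaDegBound} in its $\mathcal{G}(n_0,p)$ form --- all degrees equal $(1+o(1))\bar d$ almost surely, precisely because $p \gg (\ln(n_0))^2/n_0$ --- to replace $\mathbf{D}$ by $\bar d\,\mathbf{I}$; restricted to $(\mathbf{D}^{1/2}\bfone)^{\perp}$, this replacement, together with the rank-one term $p\mathbf{J}$ (which up to a lower-order error lives in the near-constant direction that has just been projected out), contributes at order $1/\sqrt{\bar d}$, and what remains is $\bar d^{-1}\|\bfA - \mathbb{E}[\bfA]\|$ up to such errors.

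The technical heart of the argument --- and the step I expect to be the main obstacle --- is the sharp spectral-norm estimate for the centred adjacency matrix: $\bfA - \mathbb{E}[\bfA]$ is a symmetric random matrix with zero diagonal and independent, mean-zero off-diagonal entries of variance $\sigma^{2} = p(1-p)$, and one needs $\|\bfA - \mathbb{E}[\bfA]\| \leq (1+o(1))\,2\sigma\sqrt{n_0}$ plus a correction of order $g(n_0)\log^{2}(n_0)$, almost surely. This is the F\"{u}redi--Koml\'{o}s / Vu bound on norms of random symmetric matrices, and it is exactly where the $\log^{2}$ factors and the hypothesis $pn_0 \gg \log^{2}(n_0)$ enter: the hypothesis places the matrix in the regime where the semicircle-edge value $2\sigma\sqrt{n_0}$ is attained up to a lower-order correction, and it simultaneously guarantees the degree concentration used above. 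Dividing by $\bar d \sim pn_0$ turns $2\sigma\sqrt{n_0}$ into a term of size $\sim 2/\sqrt{pn_0}$, and combining it with the same-order contribution coming from the degree-approximation step assembles into the stated constant $4/\sqrt{pn_0}$ together with the correction $g(n_0)\log^{2}(n_0)/(pn_0)$. Since the careful accounting of all the $o(1)$ error terms --- in particular the interaction between the random matrix $\bfA - \mathbb{E}[\bfA]$ and the random diagonal $\mathbf{D}$, and the passage from an in-expectation to an almost-sure norm bound --- is carried out in detail by Chung, Lu and Vu, I would simply invoke their result rather than reproduce the computation here.
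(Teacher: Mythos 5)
Your proof takes essentially the same route as the paper, whose argument is a one-line citation of Theorem~3.6 in Chung, Lu and Vu (2003) after observing that every vertex of $\mathcal{G}(n_0,p)$ has expected degree $\sim pn_0$. Your elaborations --- that $\bfL_{rw}$ and $\bfL_{s}$ share a spectrum, that $\mathcal{G}(n_0,p)$ is the constant-expected-degree case of the Chung--Lu model (so the average, minimum, and second-order average expected degrees all coincide with $pn_0$), that $p\gg(\ln n_0)^2/n_0$ is exactly the hypothesis under which that theorem applies, and the sketch of the F\"{u}redi--Koml\'{o}s / Vu mechanism --- are accurate and simply make explicit what the paper leaves implicit before invoking the same external result.
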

\begin{proof}
Given that the expected degree of each vertex in $G$ is $pn_0$, 
this is just Theorem  3.6 in \cite{Chung2003}
\end{proof}

\begin{remark}
\label{remark:LambdaMin}
For our purposes, it will be enough to note that this gives:
\begin{equation*}
\lambda_{2} \geq 1 - \frac{4/\sqrt{p}}{\sqrt{n_0}} - o(\frac{1}{\sqrt
{n_0}}) \text{ and } \lambda_{n_0} \leq 1 + 
\frac{4/\sqrt{p}}{\sqrt{n_0}} + o(\frac{1}{\sqrt{n_0}})
\end{equation*}
almost surely. 
\end{remark}

%%%%%%%%%%% Discussion about compressed sensing etc %%%%%%%%%%%%
\section{Preliminaries on Compressive Sensing}
\label{section:SparseRecovery}
Let $\Phi\in \mathbb{R}^{m\times N}$ and $\bfy\in \mathbb{R}^{m}$ with 
$m < N$. We say that a vector $\bfx$ is \emph{sparse} if it has few 
non-zero entries relative to its length. We follow the convention of 
defining the `$0$ quasi-norm' as:
\begin{equation*}
\|\bfx\|_{0} = \#\{x_{i}: \ x_{i} \neq 0 \}
\end{equation*}
and we say $\bfx$ is $s$-sparse if $\|\bfx\|_{0} \leq s$. Compressive sensing is concerned with solving
\begin{equation}
\label{eq:SparseRecovery}
\argmin\{ \|\bfx\|_0: \quad \bfx\in \mathbb{R}^N, \ \Phi \bfx= \bfy\}
\end{equation}
in the case where $m < N$ (that is, when the linear system is underdetermined). We call \eqref{eq:SparseRecovery} the \emph{Sparse Recovery Problem}. One also considers the \emph{Perturbed Sparse Recovery Problem} where $\bfy  = \Phi\bfx^{*} + \bfe$ with $\|\bfe\|_{2} \leq \eta << \|\bfy\|_{2}$ and we wish to solve:
\begin{equation}
\label{eq:SparseRecoveryPerturbedFirst}
\bfx^{\#} :=  \argmin\{ \|\bfx\|_0: \quad \bfx\in \mathbb{R}^N, \ \|\Phi \bfx - \bfy\|_{2} \leq \eta \}
\end{equation}
while guaranteeing that $\bfx^{\#}$ is a good approximation to $\bfx^{*}$ by bounding $\|\bfx^{\#} - \bfx^{*}\|_{2}$? We remark that problem \eqref{eq:SparseRecoveryPerturbedFirst} is equivalent to the dual problem:
\begin{equation}
\label{eq:SparseRecoveryPerturbed}
\bfx^{\#} :=  \argmin\{ \|\Phi \bfx - \bfy\|_{2}: \quad \bfx\in \mathbb{R}^N, \  \|\bfx\|_0 \leq n_0 \}
\end{equation}
We refer the reader to \cite{Foucart2013} for an excellent introduction to the area. 

\subsection{Computational Algorithms}
Many numerical algorithms have been invented for solving \eqref{eq:SparseRecovery} and \eqref{eq:SparseRecoveryPerturbed}, for example, $\ell_1$ convex minimization and its variations, hard thresholding iteration and its variations, greedy approaches such as orthogonal matching pursuit (OMP) 
as well as more exotic approaches like $\ell_q$ nonconvex 
minimization. See, for example, \cite{CT05}, 
\cite{CWB08}, \cite{BT09},  
\cite{Tropp2004}, \cite{BD09}, \cite{F11}, \cite{FL09}.
Due to its efficiency, 
we shall focus on the greedy approach in this paper, specifically the 
Orthogonal Matching Pursuit (OMP) and Subspace Pursuit (SP) algorithms, see Algorithms \ref{algorithm:OMP} and \ref{algorithm:SP}, respectively. For notational convenience, we shall follow \cite{Foucart2013} and denote by $\Phi_{S}$ the column submatrix of $\Phi$ consisting of the columns indexed by the subset $S \subset [N]$. For a vector $\mathbf{x} \in \mathbb{R}^{N}$ we denote by $\mathbf{x}_{S}$ either the subvector in $\mathbb{R}^{|S|}$ consisting of the entries $x_i$ for $i$ indexed by $S$, or the vector 
\begin{equation*}
(\mathbf{x}_{S})_{i} = \left\{ \begin{array}{cc} x_i & \text{ if } i \in S \\ 0 & \text{ if } i \notin S \end{array}\right.
\end{equation*}
It should always be clear from the context which definition we are referring to. We also remind the reader of the following operations on vectors $\bfv\in\mathbb{R}^{n}$, defined in \cite{Foucart2013}:
\begin{align*}
& \mathcal{L}_{s}(\bfv) = \text{ index set of $s$ largest absolute entries of } \bfv \hbox{ and }
\mathcal{H}_{s}(\bfv) = \bfv_{\mathcal{L}_{s}(\bfv)} = \left\{ \begin{array}{cc} v_i & \text{ if } i \in S \\ 0 & \text{ if } i \notin S \end{array}\right.
\end{align*}
where $\mathcal{H}_{s}$ is sometimes referred to as the \emph{Hard Thresholding Operator}.

\begin{algorithm}[H]
\caption{The OMP Algorithm}
\label{algorithm:OMP}
Inputs: $\bfy$ and $\Phi$
\begin{algorithmic}
\State Initialize: $\bfr^{(0)} = \bfy$, $S^{(0)} = \emptyset$, 
$\bfx^{(0)} = \mathbf{0}$ and $\alpha = 0$.
\While{True}
	\State $\alpha \gets \alpha +1$
	\State $i^{(\alpha)} := \mathcal{L}_{1}(\Phi^\top \bfr^{\alpha-1})$
	\State $S^{(\alpha)} = S^{(\alpha-1)}\cup\{i^{(\alpha)}\}$
	\State $\bfx^{(\alpha)} := \argmin 
\{||\Phi_{S^{(\alpha)}}\mathbf{z}_{S^{(\alpha)}} - \bfy||_{2}: \ 
\mathbf{z}\in\mathbb{R}^{N} \text{ and } \supp(\mathbf{z})\subset 
S^{(\alpha)}\}$
	\State $\bfr^{(\alpha)} := \bfy - \Phi\bfx^{(\alpha)}$
	\If{Stopping Criterion met}
		\State Output $\bfx^{\#} := \bfx^{\alpha}$
		\State Break
	\EndIf
\EndWhile
\end{algorithmic}
\end{algorithm} 

\begin{algorithm}[H]
\caption{The SP algorithm (\cite{Dai2009})}
\label{algorithm:SP}
Inputs: $\bfy$, $\Phi$ and an integer $s\ge 1$
\begin{algorithmic}
\State {\bf Initialization}:
	\Indent
	\State (1) $T^{0} = \mathcal{L}_{s}(\Phi^\top \bfy)$.
	\State (2) $\bfx^{0} = \argmin\{ \|\bfy - \Phi_{T^{0}}\bfx\|_{2}: \ \supp(\bfx) \subset T^{0}\}$
	\State (3) $\bfr^{0} = \bfy - \Phi_{T^{0}}\bfx^{0}$
	\EndIndent
\State {\bf Iteration}:
	\Indent
	\For{$k = 1:k_{\max}$}  
	\State (1) $\hat{T} ^{k} = T^{k-1}\cup 	\mathcal{L}_{s}\left(\Phi^\top \bfr^{k-1}\right)$
	\State (2) $\bfu = \argmin\{  \|\bfy - \Phi_{\hat{T}^{k}}\bfx\|_{2} : \ \mathbf{x}\in\mathbb{R}^{N} \text{ and } \supp(\bfx) \subset \hat{T}^k\}$
	\State (3) $T^{k} = 	\mathcal{L}_{s}(\bfu)$ and $\bfx^{k} = \mathcal{H}_{s}(\bfu)$
	\State (4) $\bfr^{k} = \bfy - \Phi_{T^{k}}\bfx^{k}$
	\EndFor
	\EndIndent
\end{algorithmic}
\end{algorithm}

The most common stopping criteria for Algorithm~\ref{algorithm:OMP} are $\alpha = k$, or 
$\|\bfr^{(\alpha)}\|_{2} < \epsilon$ for a given $k$ or $\epsilon$.  
A sufficient condition to guarantee the convergence of  
Algorithm~\ref{algorithm:OMP} is the following

\begin{theorem}
For any $S\subset [N]$, if  $\Phi_S$ is injective and 
satisfies
\begin{equation}
 \| \Phi_{S}^{\dagger}\Phi_{S^{c}}\|_{1\rightarrow 1} < 1,  
\label{Tropp} 
\end{equation}
where $\Phi^{\dagger}_{S}$ is the pseudo-inverse of $\Phi_{S}$, then any vector $\mathbf{x}$ with support $S$ is recovered in 
at most $s = |S|$ steps of OMP.
\label{thm:ExactRecovCond}
\end{theorem}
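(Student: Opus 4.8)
\textsc{Proof proposal.}
The plan is to run OMP with input $\bfy = \Phi\bfx$ where $\supp(\bfx) = S$ and $|S| = s$, and to show by induction on the iteration count $\alpha$ that every index selected by the rule $i^{(\alpha)} = \mathcal{L}_1(\Phi^\top\bfr^{(\alpha-1)})$ lies in $S$. Since the least-squares update makes the residual orthogonal to the columns already chosen (i.e. $\Phi_{S^{(\alpha)}}^\top\bfr^{(\alpha)} = 0$), no index is selected twice, so the sets $S^{(\alpha)}$ strictly increase inside $S$; after at most $s$ steps $S^{(s)} = S$, and then $\bfx^{(s)}$, being the least-squares solution supported on $S$ of an equation $\Phi_S\bfx^{(s)}_S = \bfy = \Phi_S\bfx_S$, equals $\bfx$ by injectivity of $\Phi_S$.

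The first step is to record the key invariant: the residual always lies in $\ima\Phi_S$. Indeed, by the inductive hypothesis $S^{(\alpha-1)}\subseteq S$, hence $\supp(\bfx^{(\alpha-1)})\subseteq S$, and so $\bfr^{(\alpha-1)} = \bfy - \Phi\bfx^{(\alpha-1)} = \Phi_S(\bfx_S - \bfx^{(\alpha-1)}_S)\in\ima\Phi_S$. It therefore suffices to prove that for every nonzero $\bfr\in\ima\Phi_S$,
\[
\|\Phi_{S^c}^\top\bfr\|_\infty < \|\Phi_S^\top\bfr\|_\infty ,
\]
since then the largest absolute entry of $\Phi^\top\bfr^{(\alpha-1)}$ over all of $[N]$ is attained only at coordinates in $S$, so $\mathcal{L}_1$ returns an index of $S$.

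The core computation is as follows. Write $\bfr = \Phi_S\bfw$; since $\Phi_S$ is injective, $\Phi_S^\top\Phi_S$ is invertible and $\bfw = (\Phi_S^\top\Phi_S)^{-1}\Phi_S^\top\bfr$, whence
\[
\Phi_{S^c}^\top\bfr = \Phi_{S^c}^\top\Phi_S\bfw = \Phi_{S^c}^\top\Phi_S(\Phi_S^\top\Phi_S)^{-1}\big(\Phi_S^\top\bfr\big) = (\Phi_S^\dagger\Phi_{S^c})^\top\big(\Phi_S^\top\bfr\big),
\]
using $\Phi_S^\dagger = (\Phi_S^\top\Phi_S)^{-1}\Phi_S^\top$. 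Taking $\infty$-norms, and using that the $\infty\!\to\!\infty$ operator norm of a matrix equals the $1\!\to\!1$ operator norm of its transpose (max absolute row sum versus max absolute column sum), we obtain
\[
\|\Phi_{S^c}^\top\bfr\|_\infty \le \|(\Phi_S^\dagger\Phi_{S^c})^\top\|_{\infty\rightarrow\infty}\,\|\Phi_S^\top\bfr\|_\infty = \|\Phi_S^\dagger\Phi_{S^c}\|_{1\rightarrow 1}\,\|\Phi_S^\top\bfr\|_\infty < \|\Phi_S^\top\bfr\|_\infty,
\]
where the final strict inequality invokes hypothesis \eqref{Tropp} together with the observation that $\bfr\neq 0$ forces $\Phi_S^\top\bfr\neq 0$ (otherwise $\|\bfr\|_2^2 = \bfw^\top\Phi_S^\top\Phi_S\bfw = \bfw^\top\Phi_S^\top\bfr = 0$).

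Finally I would close the induction: at step $\alpha$ the residual $\bfr^{(\alpha-1)}$ is either zero — in which case $\Phi_S(\bfx_S - \bfx^{(\alpha-1)}_S) = 0$ and injectivity gives $\bfx^{(\alpha-1)} = \bfx$, so OMP has already recovered $\bfx$ — or nonzero, in which case the above shows a new index of $S$ is appended to $S^{(\alpha-1)}$. Since $|S| = s$, this forces exact recovery in at most $s$ iterations. I do not expect a genuine obstacle here: the only care needed is the bookkeeping that identifies $\Phi_{S^c}^\top\Phi_S(\Phi_S^\top\Phi_S)^{-1}$ with $(\Phi_S^\dagger\Phi_{S^c})^\top$ and matches the $1\!\to\!1$ and $\infty\!\to\!\infty$ operator norms, plus the standard non-repetition property of OMP; everything else is the residual-invariance observation followed by a one-line estimate.
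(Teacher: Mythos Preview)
Your argument is correct and is precisely the standard proof of Tropp's Exact Recovery Condition: the residual stays in $\ima\Phi_S$, the identity $\Phi_{S^c}^\top\bfr = (\Phi_S^\dagger\Phi_{S^c})^\top(\Phi_S^\top\bfr)$ together with the duality $\|M^\top\|_{\infty\to\infty}=\|M\|_{1\to1}$ gives the strict greedy inequality, and orthogonality of the residual to already-chosen columns prevents repetition. The paper does not supply its own proof of this theorem at all---it simply cites \cite{Tropp2004} and Remark~3.6 of \cite{Foucart2013}---so your write-up is exactly what those references contain and there is nothing to compare.
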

We refer to \cite{Tropp2004} for a proof. See also Remark 3.6 in 
\cite{Foucart2013}.

\begin{remark}
In \cite{Dai2009} where the SP algorithm is introduced, they 
suggest solving the least squares problems (that is $(2)$ in 
Initialization and $(2)$ in Iteration) exactly. In our 
implementation, we use {\tt MATLAB}'s {\tt lsqr } algorithm 
to solve them approximately, to a high precision. As pointed 
out by \cite{Needell2009} in their analysis of CoSaMP, a 
very similar algorithm, this does not affect the convergence 
analysis of the algorithm.
\end{remark}
	
\subsection{Fundamental Concepts}	 
A very important concept, the \emph{restricted isometry property} (RIP), 
introduced by Cand\'{e}s and Tao (\cite{CT05}) plays a critical 
role in the study of the existence and uniqueness of a sparse solution 
from a sensing matrix and whether the solution of an $\ell_1$ minimization 
is the sparse solution. Another concept, \emph{mutual coherence}, 
introduced by Donoho and his collaborators in (\cite{Donoho2003} is also used in this study.

\begin{definition}
\label{RIPdef}
Letting $0<s<m < N$ be an integer and  $\Phi_S$ be a submatrix of $\Phi$ 
which consists of columns of $\Phi$ whose column indices are in  
$S\subset [N]$. The \emph{restricted isometry constant} 
(RIC) $\delta_s$ of $\Phi$ is the smallest quantity such that
\begin{equation}
\label{RIP}
(1-\delta_s) \|{\bf x}\|_2^2 \le \|\Phi_S {\bf x}\|_2^2 \le (1+
\delta_s)\|{\bf x}\|_2^2
\end{equation}
for all subsets $S$ with with cardinality $|S|\le s$. If a matrix 
$\Phi$ has such a constant $\delta_s< 1$ for some $s$, $\Phi$ is said 
to possesses RIP of order $s$. It is known that
\begin{equation}
\label{Foucart}
\delta_s = \min_{S\subset \{1, \cdots, n\}\atop |S|\le s} 
\|\Phi_S^\top \Phi_S - I_s\|_{2\to 2},
\end{equation}
where $I_s$ is the identity matrix of size $s\times s$. 
\end{definition}

\begin{definition}
The \emph{coherence} of a matrix $\Phi$, denoted $\mu$, is the largest 
normalized inner product between its columns:
\begin{equation*}
\mu := \max_{i\neq j} \frac{|\langle \phi_{i},\phi_{j}\rangle |}{
\|\phi_i\|_{2}\|\phi_{j}\|_{2}}. 
\end{equation*}
\end{definition}

\subsection{Totally Perturbed Compressive Sensing}
\label{section:PerturbedCS}
Frequently it is useful to modify problem 
\eqref{eq:SparseRecoveryPerturbed} further to allow for small 
perturbations in the observed measurement matrix. That is, suppose that 
$\bfy = \Phi\bfx^{*} + \bfe$ and let $\hat{\Phi} = \Phi + \bfE$, where $\bfE$ 
is a small perturbation matrix. Denoting again:
\begin{equation}
\label{eq:SparseRecoveryTotallyPerturbed}
\bfx^{\#} :=  \argmin\{\|\hat{\Phi} \bfx - \bfy\|_{2}: \quad \bfx\in \mathbb{R}^N, \ \|\bfx\|_0  \leq s \}, 
\end{equation}
can we still guarantee that $\bfx^{\#}$ is a good approximation to 
$\bfx^{*}$ by bounding $\|\bfx^{\#} - \bfx^{*}\|_{2}$? 
Problem \eqref{eq:SparseRecoveryTotallyPerturbed} is called 
the \emph{Totally Perturbed Sparse Recovery Problem}. Analyzing \eqref{eq:SparseRecoveryTotallyPerturbed} is often more important for applications than analyzing \eqref{eq:SparseRecoveryPerturbed}, as frequently we only know the measurement matrix $\Phi$ to within a certain error tolerance.

\begin{theorem}[Herman and Strohmer, \cite{Herman2010}]
\label{thm:PerturbedRIC}
Suppose that $\hat{\Phi} = \Phi + \bfE$. Let $\delta_{s}$ and $\hat{\delta}_{s}$ denote the $s$ restricted isometry constants of $\Phi$ and $\hat{\Phi}$ respectively. Define\footnote{For a matrix $\Phi$ we denote by $\|\Phi\|_{p}$ the induced operator norm $\|\Phi\|_{p} = \max_{\bfx \neq \mathbf{0}}\frac{\|\Phi\bfx\|_{p}}{\|\bfx\|_{p}}$ By $\|\Phi\|_{p,s}$ we mean the semi-norm $\displaystyle\max_{S\subset [n]\atop |S| = s}\|\Phi_{S}\|_{p}$} $\epsilon^{s}_{\Phi} := \|E\|_{2,s} /\|\Phi \|_{2,s}$. Then:
\begin{equation*}
\hat{\delta}_{s} \leq (1+\delta_{s})\left(1+\epsilon^s_{\Phi}\right)^{2} - 1
\end{equation*}
\end{theorem}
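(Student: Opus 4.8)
The plan is to reduce to a single support set and then exploit the Gram-matrix formulation of the restricted isometry constant, which is cleaner here than arguing directly from the two-sided bound \eqref{RIP}. Recall that $\delta_s$ equals the maximum over all $S\subset[N]$ with $|S|\le s$ of $\|\Phi_S^\top\Phi_S-I_s\|_{2\to 2}$ (this is the standard equivalent of the RIP definition, cf.\ \eqref{Foucart}), and likewise $\hat\delta_s=\max_{|S|\le s}\|\hat\Phi_S^\top\hat\Phi_S-I_s\|_{2\to 2}$. Hence it suffices to fix one $S$ with $|S|\le s$ and bound $\|\hat\Phi_S^\top\hat\Phi_S-I_s\|_{2\to 2}$ by $(1+\delta_s)(1+\epsilon^s_\Phi)^2-1$; taking the maximum over $S$ then gives the theorem.

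First I would record the two norm estimates the hypotheses supply. The RIP of $\Phi$ gives $\|\Phi_S\bfx\|_2\le\sqrt{1+\delta_s}\,\|\bfx\|_2$ for every $\bfx$ supported in $S$, hence $\|\Phi_S\|_{2\to 2}\le\sqrt{1+\delta_s}$; maximizing over all $S$ with $|S|\le s$ gives $\|\Phi\|_{2,s}\le\sqrt{1+\delta_s}$. Combining this with the definition $\epsilon^s_\Phi=\|\bfE\|_{2,s}/\|\Phi\|_{2,s}$ yields
\begin{equation*}
\|\bfE_S\|_{2\to 2}\le\|\bfE\|_{2,s}=\epsilon^s_\Phi\,\|\Phi\|_{2,s}\le\epsilon^s_\Phi\sqrt{1+\delta_s}.
\end{equation*}

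The heart of the argument is then a one-line expansion. Writing $\hat\Phi_S=\Phi_S+\bfE_S$ and subtracting the identity,
\begin{equation*}
\hat\Phi_S^\top\hat\Phi_S-I_s=\left(\Phi_S^\top\Phi_S-I_s\right)+\Phi_S^\top\bfE_S+\bfE_S^\top\Phi_S+\bfE_S^\top\bfE_S,
\end{equation*}
so by the triangle inequality and submultiplicativity of $\|\cdot\|_{2\to 2}$ (and $\|\Phi_S^\top\|_{2\to 2}=\|\Phi_S\|_{2\to 2}$), using $\|\Phi_S^\top\Phi_S-I_s\|_{2\to 2}\le\delta_s$ together with the two estimates just recorded,
\begin{equation*}
\|\hat\Phi_S^\top\hat\Phi_S-I_s\|_{2\to 2}\le\delta_s+2\sqrt{1+\delta_s}\cdot\epsilon^s_\Phi\sqrt{1+\delta_s}+\left(\epsilon^s_\Phi\right)^2(1+\delta_s)=\delta_s+(1+\delta_s)\left((1+\epsilon^s_\Phi)^2-1\right).
\end{equation*}
The right-hand side is exactly $(1+\delta_s)(1+\epsilon^s_\Phi)^2-1$, which is the claimed bound.

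I do not expect a genuine obstacle. The only points needing care are: (i) consistently using the $s$-restricted operator seminorm, so that $\|\bfE_S\|_{2\to 2}$ (for the specific $S$ at hand) is controlled by $\|\bfE\|_{2,s}$ and $\|\Phi_S\|_{2\to 2}$ by $\|\Phi\|_{2,s}$; and (ii) resisting the temptation to prove the bound straight from \eqref{RIP}, which forces the mildly unpleasant algebraic verification that $(\sqrt{1-\delta_s}-\epsilon^s_\Phi\sqrt{1+\delta_s})^2\ge 2-(1+\delta_s)(1+\epsilon^s_\Phi)^2$, plus a separate trivial argument for the degenerate regime where the left-hand base is negative (there the right-hand side is $\le 0$, so nothing is needed). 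The Gram-matrix route sidesteps all of this.
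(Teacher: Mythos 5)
The paper cites this result from Herman and Strohmer (\cite{Herman2010}) without giving its own proof, so there is no in-paper argument to compare against; the relevant question is simply whether your proof is correct, and it is. The algebra checks out: $\delta_s + 2\epsilon^s_\Phi(1+\delta_s) + (\epsilon^s_\Phi)^2(1+\delta_s) = \delta_s + (1+\delta_s)\bigl[(1+\epsilon^s_\Phi)^2-1\bigr] = (1+\delta_s)(1+\epsilon^s_\Phi)^2 - 1$, and the two preliminary estimates $\|\Phi\|_{2,s}\le\sqrt{1+\delta_s}$ and $\|\bfE_S\|_{2\to 2}\le\epsilon^s_\Phi\sqrt{1+\delta_s}$ follow correctly from the RIP and the definition of $\epsilon^s_\Phi$ (together with the observation that for column submatrices $\|M_{S'}\|_{2\to 2}\le\|M_S\|_{2\to 2}$ whenever $S'\subset S$, so the paper's convention $|S|=s$ in $\|\cdot\|_{2,s}$ agrees with the $|S|\le s$ maximum you need).

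Your choice to work with the Gram-matrix characterization \eqref{Foucart} rather than the two-sided inequality \eqref{RIP} is a genuine, and arguably preferable, variation. The standard presentation in Herman--Strohmer derives the bound by writing $\|\hat\Phi_S\bfx\|_2\le\|\Phi_S\bfx\|_2+\|\bfE_S\bfx\|_2\le\sqrt{1+\delta_s}(1+\epsilon^s_\Phi)\|\bfx\|_2$, which handles the upper RIP inequality cleanly but then requires a separate (and, as you note, mildly annoying) verification that the matching lower bound $\|\hat\Phi_S\bfx\|_2\ge(\sqrt{1-\delta_s}-\epsilon^s_\Phi\sqrt{1+\delta_s})_{+}\|\bfx\|_2$ is also at least $\sqrt{2-(1+\delta_s)(1+\epsilon^s_\Phi)^2}\,\|\bfx\|_2$ in the non-degenerate regime. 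By instead expanding $\hat\Phi_S^\top\hat\Phi_S-I_s=(\Phi_S^\top\Phi_S-I_s)+\Phi_S^\top\bfE_S+\bfE_S^\top\Phi_S+\bfE_S^\top\bfE_S$ and applying the triangle inequality in the $\|\cdot\|_{2\to 2}$ norm, you obtain both sides of the RIP inequality in a single step, since $\|\hat\Phi_S^\top\hat\Phi_S-I_s\|_{2\to 2}$ is exactly the smallest $\delta$ for which $(1-\delta)\|\bfx\|_2^2\le\|\hat\Phi_S\bfx\|_2^2\le(1+\delta)\|\bfx\|_2^2$ holds. The two routes deliver the same numerical bound; the Gram-matrix route is shorter and avoids the case analysis.
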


To bound the error $\|\bfx^{\#} - \bfx^{*}\|_{2}$, we need the following result.
Define $\bfy = \Phi\bfx $  and suppose that $\hat{\bfy} = \Phi\bfx +\bfe$ is the observed (perturbed) measurement vector. Suppose further that we only have access to $\hat{\Phi} = 
\Phi + \bfE$, a small perturbation of the measurement matrix.  
Define $\epsilon_{\bfy} := \|\bfe\|_{2}/\|\bfy\|_{2}$, $\epsilon_{\Phi} = \|\bfE\|_{2}/\|\Phi\|_{2}$ and let $\epsilon^{s}_{\Phi}$ be as above. 

\begin{theorem}
\label{thm:PerturbedSP}
Suppose that $\bfx$ is $s$-sparse. Define the following constants:
\begin{equation*}
\hat{\rho} = \frac{\sqrt{2\hat{\delta}_{3s}^{2}(1+\hat{\delta}_{3s}^{2})}}{1 - \hat{\delta}_{3s}^{2}} \quad \text{ and } \quad  \hat{\tau} = \frac{(\sqrt{2} + 2)\hat{\delta}_{3s}}{\sqrt{1 - \hat{\delta}_{3s}^{2}}}(1 - \hat{\delta}_{3s})(1 - \hat{\rho}) + \frac{2\sqrt{2}+1}{(1 - \hat{\delta}_{3s})(1-\hat{\rho})}. 
\end{equation*}
Suppose that $\hat{\delta}_{3s} \leq 0.4859$. Then after $m = \ln(\epsilon_{\Phi} + \epsilon_{\bfy})/\ln(\hat{\rho})$ iterations  of Subspace Pursuit (Algorithm \ref{algorithm:SP}) applied to problem \eqref{eq:SparseRecoveryTotallyPerturbed}, we have
\begin{equation*}
\frac{\|\bfx - \bfx^{m}\|_{2}}{\|\bfx\|_{2}} \leq \left(\hat{\tau}\frac{\sqrt{1 + \hat{\delta}_{s}}}{1 - \epsilon^{s}_{\Phi}} + 1\right)(\epsilon^{s}_{\Phi} + \epsilon_{\bfy}).
\end{equation*}
\end{theorem}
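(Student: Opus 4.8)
The plan is to reduce the totally perturbed problem to an ordinary perturbed‑\emph{measurement} problem for which the Subspace Pursuit convergence theory applies directly, and then to convert the resulting estimate into the stated form by elementary operator–norm bookkeeping. (Note that we are handed $\hat{\delta}_{3s}\le 0.4859$ as a hypothesis, so Theorem~\ref{thm:PerturbedRIC} is not needed here; it enters only when one later wants to verify this hypothesis from $\delta_{3s}$.)

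\textbf{Step 1: pass to an effective noise vector.} Running Algorithm~\ref{algorithm:SP} on problem \eqref{eq:SparseRecoveryTotallyPerturbed} with sensing matrix $\hat{\Phi}$ and observed data $\hat{\bfy}=\Phi\bfx+\bfe$ is literally the same as running it on the clean‑matrix model $\hat{\bfy}=\hat{\Phi}\bfx+\tilde{\bfe}$, where the \emph{effective noise} is $\tilde{\bfe}:=\hat{\bfy}-\hat{\Phi}\bfx=\bfe-\bfE\bfx$. Since $\bfx$ is $s$‑sparse, say $\supp(\bfx)=T$, we have $\bfE\bfx=\bfE_{T}\bfx_{T}$, hence $\|\bfE\bfx\|_{2}\le\|\bfE_{T}\|_{2}\|\bfx\|_{2}\le\|\bfE\|_{2,s}\|\bfx\|_{2}=\epsilon^{s}_{\Phi}\|\Phi\|_{2,s}\|\bfx\|_{2}$. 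Combining $\|\hat{\Phi}\|_{2,s}\le\sqrt{1+\hat{\delta}_{s}}$ with $\|\Phi\|_{2,s}\le\|\hat{\Phi}\|_{2,s}+\|\bfE\|_{2,s}=\|\hat{\Phi}\|_{2,s}+\epsilon^{s}_{\Phi}\|\Phi\|_{2,s}$ yields $\|\Phi\|_{2,s}\le\sqrt{1+\hat{\delta}_{s}}/(1-\epsilon^{s}_{\Phi})$; likewise $\|\bfe\|_{2}=\epsilon_{\bfy}\|\bfy\|_{2}=\epsilon_{\bfy}\|\Phi_{T}\bfx_{T}\|_{2}\le\epsilon_{\bfy}\|\Phi\|_{2,s}\|\bfx\|_{2}$. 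Therefore
\begin{equation*}
\|\tilde{\bfe}\|_{2}\le\|\bfe\|_{2}+\|\bfE\bfx\|_{2}\le\left(\epsilon^{s}_{\Phi}+\epsilon_{\bfy}\right)\frac{\sqrt{1+\hat{\delta}_{s}}}{1-\epsilon^{s}_{\Phi}}\,\|\bfx\|_{2},
\end{equation*}
which is precisely the factor $\sqrt{1+\hat{\delta}_{s}}/(1-\epsilon^{s}_{\Phi})$ occurring in the claim.

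\textbf{Step 2: invoke and iterate the SP analysis.} I would then apply the Subspace Pursuit recovery analysis (following Dai–Milenkovic \cite{Dai2009}, in the refined form requiring only $\hat{\delta}_{3s}\le0.4859$, which is exactly the threshold below which $\hat{\rho}<1$, since $2\hat{\delta}_{3s}^{2}(1+\hat{\delta}_{3s}^{2})=(1-\hat{\delta}_{3s}^{2})^{2}$ has root $\hat{\delta}_{3s}^{2}=\sqrt{5}-2$) to the model $\hat{\bfy}=\hat{\Phi}\bfx+\tilde{\bfe}$. Tracking $\hat{\delta}_{3s}$ through the identification, least–squares and pruning steps of Algorithm~\ref{algorithm:SP} gives a per‑iteration contraction $\|\bfx-\bfx^{k}\|_{2}\le\hat{\rho}\,\|\bfx-\bfx^{k-1}\|_{2}+c(\hat{\delta}_{3s})\,\|\tilde{\bfe}\|_{2}$ together with an initialization estimate for $\|\bfx-\bfx^{0}\|_{2}$; summing the geometric series collapses all the noise coefficients into the single constant $\hat{\tau}$ of the statement, so that $\|\bfx-\bfx^{m}\|_{2}\le\hat{\rho}^{m}\|\bfx\|_{2}+\hat{\tau}\|\tilde{\bfe}\|_{2}$. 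Choosing $m=\ln(\epsilon_{\Phi}+\epsilon_{\bfy})/\ln\hat{\rho}$ makes $\hat{\rho}^{m}=\epsilon_{\Phi}+\epsilon_{\bfy}\le\epsilon^{s}_{\Phi}+\epsilon_{\bfy}$, which contributes the ``$+1$''; inserting the Step~1 bound for $\|\tilde{\bfe}\|_{2}$ into $\hat{\tau}\|\tilde{\bfe}\|_{2}$ contributes $\hat{\tau}\,\frac{\sqrt{1+\hat{\delta}_{s}}}{1-\epsilon^{s}_{\Phi}}(\epsilon^{s}_{\Phi}+\epsilon_{\bfy})\|\bfx\|_{2}$. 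Adding the two terms and dividing by $\|\bfx\|_{2}$ gives exactly the asserted inequality.

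I expect the main obstacle to be the faithful reproduction of the SP iteration analysis in the perturbed setting — verifying that the recursion coming out of the identification/estimation/pruning steps of Algorithm~\ref{algorithm:SP} has contraction factor exactly $\hat{\rho}$ and accumulated noise constant exactly $\hat{\tau}$, and confirming the $0.4859$ threshold. By comparison, the reduction to the effective noise $\tilde{\bfe}=\bfe-\bfE\bfx$ and the norm estimates of Step~1 are routine; the one cosmetic subtlety is that the iteration count is phrased with $\epsilon_{\Phi}$ while the final bound is phrased with $\epsilon^{s}_{\Phi}$, which is harmless whenever $\epsilon_{\Phi}\le\epsilon^{s}_{\Phi}$ (otherwise one simply states the result with the larger of the two).
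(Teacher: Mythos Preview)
Your proposal is correct and in fact goes further than the paper does: the paper does not prove this theorem at all but simply cites it as Theorem~2 of \cite{Li2016}, specialized to the exactly $s$-sparse case. The argument you outline---absorbing the matrix perturbation into an effective measurement noise $\tilde{\bfe}=\bfe-\bfE\bfx$, bounding $\|\tilde{\bfe}\|_{2}$ via $\|\Phi\|_{2,s}\le\sqrt{1+\hat{\delta}_{s}}/(1-\epsilon^{s}_{\Phi})$, and then feeding this into the standard SP contraction analysis---is exactly the structure of the proof in \cite{Li2016} (which itself follows Herman--Strohmer \cite{Herman2010} for Step~1 and Dai--Milenkovic \cite{Dai2009} for Step~2), so there is no discrepancy to report.
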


\begin{remark}
This is Theorem 2 in \cite{Li2016}, adapted to the case the $\bfx$ is $s$-sparse (the result \emph{in. loc. sit.} is for the more general case where $\bfx$ is compressible).
\end{remark} 

\section{The RIP of Laplacian of Random Graphs}
\label{section:RIPLaplacian}
We first study the RIP for the Laplacian of a connected ER graph drawn from $\mathcal{G}(n_0,p)$. 
We then extend this to a result on the RIP for graphs drawn from the SBM $\mathcal{G}(n,k,p,0)$, as 
these can be thought of as a disjoint union of $k$ ER graphs. Finally, we extend to graphs drawn from 
$\mathcal{G}(n,k,p,q)$ for $0< q << p$ using a perturbation argument.

\subsection{RIP for Laplacian of $\mathcal{G}(n_0,p)$}
\begin{lemma}
\label{lemma:Laplacian1Comp}
Let $\bfL$ be the Laplacian of a connected graph $G$ with $n_0$ vertices. Let $S\subset [n_0]$ 
with $|S| = s < n_0$. 
Then  $\sigma_{\min}(\bfL_{S}) \geq (1 - s/n_0)\lambda_{2}$ and $\delta_{s}  = \max\{1 - \left(1 - 
s/n_0\right)\lambda_{2}^{2} , \lambda_{n_0}^{2} - 1\}$, where $\lambda_{i}$ denotes the 
$i$-th eigenvalue of $\bfL$, ordered from smallest to largest.
\end{lemma}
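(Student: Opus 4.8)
The plan is to reduce both assertions to a Rayleigh‑quotient estimate for $\bfL$ restricted to coordinate subspaces, using that $G$ connected forces $\bfL$ to have an explicitly known, one‑dimensional kernel. Identify $\mathbb{R}^{|S|}$ with the subspace of $\mathbb{R}^{n_0}$ of vectors supported on $S$, so that $\bfL_S\bfx=\bfL\bfx$ there; then $\sigma_{\min}(\bfL_S)$ and $\sigma_{\max}(\bfL_S)$ are, respectively, the smallest and the largest value of $\|\bfL\bfx\|_2$ over unit $\bfx$ with $\supp(\bfx)\subseteq S$. Fix an orthonormal eigenbasis $\bfu_1,\dots,\bfu_{n_0}$ of $\bfL$ with eigenvalues $0=\lambda_1\le\lambda_2\le\cdots\le\lambda_{n_0}$; by Theorem~\ref{thm:IndicatorVecsKernel} the kernel is spanned by $\bfone$ (connectedness being precisely what makes it one‑dimensional), so I take $\bfu_1=\bfone/\sqrt{n_0}$.

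The one substantive step is an energy‑splitting estimate. For a unit $\bfx$ supported on $S$, write $\bfx=c\,\bfu_1+\bfw$ with $\bfw\perp\bfu_1$. Since $\bfL\bfu_1=\mathbf{0}$, we have $\bfL\bfx=\bfL\bfw$, and because $\bfw\in\mathrm{span}\{\bfu_2,\dots,\bfu_{n_0}\}$ the spectral theorem gives $\lambda_2^2\|\bfw\|_2^2\le\|\bfL\bfx\|_2^2\le\lambda_{n_0}^2\|\bfw\|_2^2\le\lambda_{n_0}^2$. The only thing left is a lower bound on $\|\bfw\|_2^2=1-c^2$: since $c=\langle\bfx,\bfone\rangle/\sqrt{n_0}=\bigl(\sum_{i\in S}x_i\bigr)/\sqrt{n_0}$ and $\bfx$ is supported on the $s$‑element set $S$, Cauchy--Schwarz gives $c^2\le s\|\bfx\|_2^2/n_0=s/n_0$, whence $\|\bfw\|_2^2\ge1-s/n_0$. (Intuitively: a vector living on only $s$ of the $n_0$ coordinates can carry at most an $s/n_0$ fraction of its energy along the flat kernel direction, so $\bfL$ still sees the rest.) Hence $\sigma_{\min}(\bfL_S)^2\ge(1-s/n_0)\lambda_2^2$, and since $0\le1-s/n_0\le1$ this gives $\sigma_{\min}(\bfL_S)\ge\sqrt{1-s/n_0}\,\lambda_2\ge(1-s/n_0)\lambda_2$ --- so in particular $\bfL_S$ is injective --- while simultaneously $\sigma_{\max}(\bfL_S)\le\lambda_{n_0}$.

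For the restricted isometry constant, unwind the definition \eqref{RIP}: for every $S$ with $|S|\le s$, $\|\bfL_S\bfx\|_2^2$ lies in the interval $[\sigma_{\min}(\bfL_S)^2,\ \sigma_{\max}(\bfL_S)^2]\cdot\|\bfx\|_2^2$, so $\delta_s=\max_{|S|\le s}\max\{1-\sigma_{\min}(\bfL_S)^2,\ \sigma_{\max}(\bfL_S)^2-1\}$; substituting the two bounds just obtained gives $\delta_s\le\max\{1-(1-s/n_0)\lambda_2^2,\ \lambda_{n_0}^2-1\}$. I expect the fussiest part to be the reverse inequality needed for the claimed equality: one must exhibit, for each admissible $s$, a subset $S$ and a unit test vector supported on it that (nearly) attains one of the two spectral extremes. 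The choice $\bfx=\bfone_S/\sqrt{s}$ has kernel component exactly $\sqrt{s/n_0}$, which shows the $\sigma_{\min}$ bound is sharp up to how well $\bfone_S/\sqrt{s}$ aligns with the $\lambda_2$‑eigenspace, while the $\lambda_{n_0}$‑side similarly needs a top eigenvector essentially supported on $\le s$ vertices. In the stated generality this upgrade is the delicate point; for the way the lemma is used afterwards, the one‑sided estimate $\delta_s\le\max\{1-(1-s/n_0)\lambda_2^2,\ \lambda_{n_0}^2-1\}$ together with the eigenvalue bounds of Remark~\ref{remark:LambdaMin} is all that is needed.
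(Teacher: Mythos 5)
Your argument is correct and follows essentially the same route as the paper's proof: decompose a unit vector supported on $S$ along the known one--dimensional kernel spanned by $\bfone/\sqrt{n_0}$, bound its kernel component by Cauchy--Schwarz using $\|\bfv\|_1\le\sqrt{s}\|\bfv\|_2$, and run the two extreme-eigenvalue estimates through the spectral decomposition. You were also right to flag the $\delta_s$ equality as suspicious: the paper's own proof (like yours) only establishes $\delta_s\le\max\{1-(1-s/n_0)\lambda_2^2,\ \lambda_{n_0}^2-1\}$ and then asserts ``the claim about $\delta_s$ follows,'' without exhibiting an extremal $S$ and test vector, so what is really proved and subsequently used is the one-sided bound; your remark that this is all that is needed downstream (cf.\ Theorem~\ref{thm:RIPER}) is accurate. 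One small stylistic note: you correctly derive $\sigma_{\min}(\bfL_S)\ge\sqrt{1-s/n_0}\,\lambda_2$ and then deliberately weaken the square root to $(1-s/n_0)$ to match the statement, whereas the paper silently conflates $\sigma_{\min}$ with $\min\|\bfL\bfv\|_2^2$; your version is the cleaner one.
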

\begin{proof}
Let $\bfw_{1},\ldots, \bfw_{n_0}$ be an orthonormal basis of eigenvectors of $\bfL$ with eigenvalues $\lambda_1\le \cdots\le 
\lambda_{n_0}$, where $\lambda_1=0$ and $\lambda_2>0$.   For any $\bfv$ supported on $S$ with $\|\bfv\|_{2} = 1$, write $\bfv = \sum_{i=1}^{n_0}\alpha_{i}\bfw_{i}$. Note that $\bfw_{1} = \frac{1}{\sqrt{n_0}} \bfone$ and 
$\ker(L) = \text{span}(\bfw_{1})$.  Then:
\begin{equation*}
\|\bfL\bfv\|_{2}^{2} = \left\|\bfL\left(\sum_{i=1}^{n}\alpha_{i}\bfw_{i}\right)\right\|_{2}^{2} = \left\|\sum_{i=2}^{n}
\alpha_{i}\lambda_{i}\bfw_{i}\right\|_{2}^{2} = \sum_{i=2}^{n} \alpha_{i}^{2}\lambda_{i}^{2}  \geq \left(\sum_{i=2}^{n} 
\alpha_{i}^{2}\right)\lambda_{2}^{2}
\end{equation*}
Because $\sum_{i=1}^{n_0}\alpha_{i}^{2} = \|\bfv\|_{2}^{2} = 1$ we have that $\sum_{i=2}^{n_0}\alpha_{i}^{2} = 1 - 
\alpha_{1}^{2}$. Thus $\|\bfL\bfv\|_{2}^{2} \geq \lambda_{2}^{2}(1 - \alpha_{1}^{2})$ and so clearly this quantity is 
minimized by making $\alpha_{1}$ as large as possible. Observe that:
\begin{equation*}
\alpha_{1} = \left(\frac{1}{\sqrt{n_0}}\bfone\right) \cdot \bfv \leq \frac{1}{\sqrt{n_0}}\|\bfv\|_{1} \leq 
\frac{1}{\sqrt{n_0}}(\sqrt{s}\|\bfv\|_{2}) = \frac{\sqrt{s}}{\sqrt{n_0}}.
\end{equation*}
We remark that this bound on $\alpha_{1}$ is sharp, and is achieved by taking $\bfv = \frac{1}{\sqrt{s}}\bfone_{S}$. 
Hence:
\begin{align*}
\sigma_{\min}(\bfL_{S}) =  \min_{\supp(\bfv)\subset S \atop \|\bfv\|_{2} = 1}\|L\bfv\|_{2}^{2}  \geq \left(1 - 
\left(\sqrt{\frac{s}{n_0}}\right)^{2}\right)\lambda_{2}^{2} = \left(1 - \frac{s}{n_0}\right)\lambda_{2}^{2}.
\end{align*}
On the other hand:
\begin{equation*}
\max_{\supp(\bfv)\subset S \atop \|\bfv\|_{2} = 1}\|\bfL\bfv\|_{2}^{2} \leq \max_{\|\bfv\|_{2} = 1}
\|\bfL\bfv\|_{2}^{2} = \lambda_{n_0}^{2}.
\end{equation*}
Hence for any $S\subset [n]$ with $|S| = s$, and any $\bfv$ with $\supp(\bfv)\subset S$ and $\|\bfv\|_{2} = 1$, we 
have:
\begin{equation*}
\left(1 - \frac{s}{n_0}\right)\lambda_{2}^{2} \leq \|L\bfv\|_{2}^{2} \leq \lambda_{n_0}^{2}
\end{equation*}
and the claim about $\delta_s$ follows.
\end{proof}

\begin{theorem}[RIP for Laplacian of ER graphs]
\label{thm:RIPER}
Suppose that $G \in \mathcal{G}(n_0,p)$ with Laplacian $\bfL$ and suppose that $p >> (\ln(n_0))^2/n_0$. If 
$s = \gamma n_0$ with $\gamma \in (0,1)$ then 
\begin{equation*}
\delta_{s} \leq \gamma +(1-\gamma)\frac{8p^{-1/2}}{\sqrt{n_0}} + o(\frac{1}{\sqrt{n_0}})
\end{equation*}
almost surely \footnote{Recall that here and throughout this paper we say that a property holds almost surely if the probability that \emph{it does not hold} is $o(1)$ with respect to $n_0$ (or sometimes $n$)}
\end{theorem}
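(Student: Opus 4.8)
The plan is to combine the deterministic formula for $\delta_s$ from Lemma~\ref{lemma:Laplacian1Comp} with the eigenvalue concentration estimates recorded in Remark~\ref{remark:LambdaMin}. First I would observe that the hypothesis $p \gg (\ln(n_0))^2/n_0$ in particular forces $p \geq (1+\varepsilon)\ln(n_0)/n_0$ for $n_0$ large, which is (well above) the classical threshold guaranteeing that $G \in \mathcal{G}(n_0,p)$ is connected almost surely. Hence $\lambda_1 = 0 < \lambda_2$ and Lemma~\ref{lemma:Laplacian1Comp} applies with $s = \gamma n_0$, giving
$$\delta_s = \max\left\{1 - (1-\gamma)\lambda_2^2,\ \lambda_{n_0}^2 - 1\right\}.$$
So the whole problem reduces to bounding $\lambda_2$ from below and $\lambda_{n_0}$ from above.

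Next I would insert the bounds of Remark~\ref{remark:LambdaMin}. Writing $\beta := 4 p^{-1/2}/\sqrt{n_0}$, we have almost surely $\lambda_2 \geq 1 - \beta - o(1/\sqrt{n_0})$ and $\lambda_{n_0} \leq 1 + \beta + o(1/\sqrt{n_0})$. Squaring and absorbing the term $\beta^2 = O\!\left(1/(p n_0)\right) = o(1/\sqrt{n_0})$ into the error (here one uses $p n_0 \gg (\ln n_0)^2 \to \infty$), this yields $\lambda_2^2 \geq 1 - 2\beta - o(1/\sqrt{n_0})$ and $\lambda_{n_0}^2 \leq 1 + 2\beta + o(1/\sqrt{n_0})$, so that
$$1 - (1-\gamma)\lambda_2^2 \leq \gamma + (1-\gamma)\,2\beta + o(1/\sqrt{n_0}), \qquad \lambda_{n_0}^2 - 1 \leq 2\beta + o(1/\sqrt{n_0}),$$
and $2\beta = 8 p^{-1/2}/\sqrt{n_0}$ is exactly the quantity appearing in the statement.

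Finally I would argue that the first of the two bounds dominates the $\max$. Since $\gamma \in (0,1)$ is a fixed constant while $2\beta = 8/\sqrt{p n_0} \to 0$ (again because $p n_0 \gg (\ln n_0)^2$), for $n_0$ large we have $2\beta < 1$, hence $\gamma + (1-\gamma)\,2\beta \geq 2\beta$. Therefore the $\max$ coincides with the first expression up to the $o(1/\sqrt{n_0})$ error, and $\delta_s \leq \gamma + (1-\gamma)\,8 p^{-1/2}/\sqrt{n_0} + o(1/\sqrt{n_0})$. Since the conclusion depends only on the intersection of the connectivity event and the eigenvalue-concentration event, which is again an almost-sure event, this holds almost surely. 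I do not expect a genuine obstacle here: the only points needing a little care are the appeal to the Erd\H{o}s--R\'enyi connectivity threshold (so that $\lambda_2 > 0$ and Lemma~\ref{lemma:Laplacian1Comp} is applicable) and the bookkeeping of the $o(\cdot)$ terms through the squaring, both of which are routine.
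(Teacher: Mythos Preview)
Your proposal is correct and follows essentially the same route as the paper: invoke Lemma~\ref{lemma:Laplacian1Comp} with $s=\gamma n_0$, insert the eigenvalue bounds from Remark~\ref{remark:LambdaMin}, and expand. If anything you are slightly more careful than the paper, which silently uses only the $1-(1-\gamma)\lambda_2^2$ branch of the $\max$ without checking that it dominates $\lambda_{n_0}^2-1$, and does not explicitly invoke connectivity before applying Lemma~\ref{lemma:Laplacian1Comp}.
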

\begin{proof}
By Theorem~\ref{theorem:ChungLuVu}, and Remark~\ref{remark:LambdaMin}, 
\begin{equation*}
\lambda_{2} \geq 1 - \frac{4p^{-1/2}}{\sqrt{n_0}} - o(\frac{1}{\sqrt{n_0}}) \text{ and } \lambda_{n_0} \leq 1 + \frac{4p^{-1/2}}{\sqrt{n_0}} + o(\frac{1}{\sqrt{n_0}})
\end{equation*}
almost surely. Combining this with Lemma~\ref{lemma:Laplacian1Comp}, we get that:
\begin{align*}
\delta_{s} & \leq 1 - (1-s/n_0)\left(1 - \frac{4p^{-1/2}}{\sqrt{n_0}} - o(\frac{1}{\sqrt{n_0}}) \right)^2 \\
	& = 1 - (1-\gamma)\left(1 - \frac{8p^{-1/2}}{\sqrt{n_0}} + o(\frac{1}{\sqrt{n_0}})\right) \\
    & = 1 - \left( 1 - \gamma -(1-\gamma)\frac{8p^{-1/2}}{\sqrt{n_0}}  + o(\frac{1}{\sqrt{n_0}})\right) 
= \gamma +(1-\gamma)\frac{8p^{-1/2}}{\sqrt{n_0}} + o(\frac{1}{\sqrt{n_0}}). 
\end{align*}
as claimed.
\end{proof}

We conclude this subsection with a lower bound for $\|\bfL\|_{2,s}$ (see \S \ref{section:PerturbedCS} for the definition of $\|\cdot \|_{2,s}$), 
where $\bfL$ is the Laplacian 
of a connected graph.
\begin{lemma}
\label{lemma:LowerBoundSemiNorm}
If $\bfL$ is the Laplacian of a connected graph $G$ with $n_0$ vertices, then 
$\|\bfL\|_{2,s} \geq \lambda_{s-1} $,.
\end{lemma}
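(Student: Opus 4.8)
The plan is to reduce the claim to a Courant--Fischer type statement about the restriction of $\bfL$ to coordinate subspaces, and then to verify that statement by a dimension count. First I would unwind the definitions: $\|\bfL\|_{2,s} = \max_{S\subset[n_0],\,|S|=s}\|\bfL_S\|_2$, and for a fixed $S$ one has $\|\bfL_S\|_2 = \max\{\|\bfL\bfv\|_2 : \supp(\bfv)\subset S,\ \|\bfv\|_2 = 1\}$. So it suffices to produce, for \emph{any} one set $S$ with $|S|=s$, a unit vector $\bfv$ supported on $S$ with $\|\bfL\bfv\|_2 \ge \lambda_{s-1}$.

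Next I would set up the eigenbasis exactly as in the proof of Lemma~\ref{lemma:Laplacian1Comp}: let $\bfw_1,\ldots,\bfw_{n_0}$ be an orthonormal basis of eigenvectors of $\bfL$ with eigenvalues $0 = \lambda_1 \le \lambda_2 \le \cdots \le \lambda_{n_0}$, and set $W := \mathrm{span}\{\bfw_{s-1},\bfw_s,\ldots,\bfw_{n_0}\}$, a subspace of dimension $n_0 - s + 2$. The key elementary observation is that every $\bfv = \sum_{i\ge s-1}\alpha_i\bfw_i \in W$ satisfies $\|\bfL\bfv\|_2^2 = \sum_{i\ge s-1}\alpha_i^2\lambda_i^2 \ge \lambda_{s-1}^2\sum_{i\ge s-1}\alpha_i^2 = \lambda_{s-1}^2\|\bfv\|_2^2$, where I use that the Laplacian eigenvalues are nonnegative (Theorem~\ref{thm:SpectralProp}) and $\lambda_i \ge \lambda_{s-1}$ for $i \ge s-1$. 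Thus $\bfL$ acts with norm at least $\lambda_{s-1}$ on all of $W$.

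It then remains to find a nonzero vector of $W$ that is supported on a set of size $s$. Fix any $S\subset[n_0]$ with $|S|=s$ and let $U := \{\bfv\in\mathbb{R}^{n_0}: \supp(\bfv)\subset S\}$, so $\dim U = s$. Since $\dim W + \dim U = (n_0 - s + 2) + s = n_0 + 2 > n_0$, the subspaces $W$ and $U$ intersect nontrivially; choose $\bfv\in W\cap U$ with $\|\bfv\|_2 = 1$. By the previous paragraph $\|\bfL\bfv\|_2 \ge \lambda_{s-1}$, hence $\|\bfL_S\|_2 \ge \lambda_{s-1}$, and therefore $\|\bfL\|_{2,s} \ge \lambda_{s-1}$, as claimed.

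There is no serious obstacle here --- the argument is a routine pigeonhole/dimension count combined with the spectral decomposition --- so the points to watch are just bookkeeping: that $\dim W = n_0 - s + 2$ (not $n_0 - s + 1$), that it is nonnegativity of the eigenvalues that lets one pass from $\lambda_i \ge \lambda_{s-1}$ to $\lambda_i^2 \ge \lambda_{s-1}^2$ for $i\ge s-1$, and that connectedness of $G$ (assumed in the statement) is not actually used. I would also remark in passing that running the same argument with $\mathrm{span}\{\bfw_s,\ldots,\bfw_{n_0}\}$ in place of $W$ yields the marginally stronger bound $\|\bfL\|_{2,s} \ge \lambda_s$.
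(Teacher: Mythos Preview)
Your proof is correct and takes a genuinely different route from the paper. The paper invokes Thompson's interlacing theorem for singular values of submatrices (Theorem~\ref{thm:interlacingEigs}) as a black box to deduce $\sigma_{\max}(\bfL_S) \geq \lambda_{s-1}$; you instead give a self-contained Courant--Fischer style argument, producing a vector in the intersection of the coordinate subspace $U$ and the high-eigenvalue subspace $W$ by a dimension count. Your approach is more elementary and makes transparent exactly where the index $s-1$ comes from, and as you observe it immediately yields the sharper bound $\|\bfL\|_{2,s} \geq \lambda_s$ with no extra work. The paper's approach has the minor advantage of pointing to a general tool, but for this lemma your direct argument is cleaner and in fact proves more.

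One caveat worth flagging, though it applies equally to the paper's own Lemma~\ref{lemma:Laplacian1Comp} whose setup you borrow: you use an orthonormal eigenbasis for $\bfL = \bfL_{rw}$, but $\bfL_{rw} = \mathbf{I} - \mathbf{D}^{-1}\bfA$ is not symmetric, so its eigenvectors need not be orthonormal in the standard inner product, and the identity $\|\bfL\bfv\|_2^2 = \sum_i \alpha_i^2\lambda_i^2$ relies on that orthonormality. Since you are explicitly following the paper's convention this is not a gap in your argument relative to the paper, but it is a point both proofs are silent on.
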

\begin{proof}
Recall that:
\begin{equation}
\label{newnorm}
\|L\|_{2,s} := \max_{S\subset[n] \atop |S| = s}\|L_{S}\|_{2} = \max_{S\subset[n] \atop |S| = s}\sigma_{\max}(L_S)
\end{equation}
We shall apply Theorem~\ref{thm:interlacingEigs} below. 
For a matrix such as $\bfL$ which is conjugate to a symmetric nonnegative definite 
matrix, the eigenvalues coincide with the singular values.
Translating the notation of this Theorem~\ref{thm:interlacingEigs}  
into the current situation, for $p = n$, $q = s$, 
$\beta_{i} = \sigma_{i}(\bfL_{S})$ and $\alpha_{i} = \lambda_{i}(\bfL)$. 
Clearly, $\min(m,n) = \min(n,n) = n$ and 
$\min(p,q) = \min(n,s) = s$.  We use 
Theorem~\ref{thm:interlacingEigs} again to get:
\begin{equation*}
\sigma_{\max}(\bfL_{S}) \geq \lambda_{s-1}. 
\end{equation*}
And so $\displaystyle\max_{S\subset[1, \cdots, n] \atop |S| = s}\sigma_{\max}(\mathbf{L}_S) \geq \lambda_{s-1}$. 
\end{proof}

In the proof above, we have used 
the following classic interpolation Theorem  for singular values:  
\begin{theorem}
\label{thm:interlacingEigs}
Let $\bfA$ be an $m\times n$ matrix with singular values $\alpha_1 \leq \alpha_{2} \leq \ldots \leq \alpha_{\min(m,n)}$. Suppose that $\mathbf{B}$ is a $p\times q$ submatrix of $\bfA$, with singular values $\beta_{1}\leq \beta_{2} \leq \ldots \leq \beta_{\min(p.q)}$ Then:
\begin{align*}
 \alpha_{\min(m,n) - i} \geq \beta_{\min(p,q) - i} & \text{ for  } i = 0,1,\ldots, \min(p,q) - 1 
 \end{align*}
 and 
 \begin{align*}
 \beta_{i} \geq \alpha_{\min(m,n) - \min(p,q) - (m-p) - (n-q) -1 + i}
\end{align*}
for integer $i$ satisfying  $\min(p,q) + (m-p) + (n-q) +2 - \min(m,n) \leq i \leq \min(m,n)$. 
\end{theorem}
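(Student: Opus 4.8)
The plan is to reduce both families of inequalities to Cauchy's interlacing theorem for principal submatrices of a Hermitian matrix, by passing through the Jordan--Wielandt dilation of $\bfA$. First I would introduce the symmetric $(m+n)\times(m+n)$ matrix
\[
H \;=\; \begin{pmatrix} \mathbf{0}_{m\times m} & \bfA \\ \bfA^{\top} & \mathbf{0}_{n\times n}\end{pmatrix},
\]
whose spectrum is the classical list $\{\pm\alpha_1,\dots,\pm\alpha_{\min(m,n)}\}$ together with $|m-n|$ copies of $0$; in particular, for $k\le\min(m,n)$ the $k$-th largest eigenvalue of $H$ equals $\alpha_{\min(m,n)+1-k}$ (this identity even survives zero singular values, since a vanishing $\alpha$ sits inside the zero cloud of $H$). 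The key structural observation is that if $\mathbf{B}$ is the submatrix of $\bfA$ carved out by a row set $R\subseteq[m]$ with $|R|=p$ and a column set $C\subseteq[n]$ with $|C|=q$, then the principal submatrix of $H$ indexed by $R\cup\{\,m+c : c\in C\,\}$ is exactly the Jordan--Wielandt dilation of $\mathbf{B}$, so its spectrum is $\{\pm\beta_1,\dots,\pm\beta_{\min(p,q)}\}$ together with $|p-q|$ zeros, and its $j$-th largest eigenvalue is $\beta_{\min(p,q)+1-j}$ for $j\le\min(p,q)$. This principal submatrix has size $p+q$, hence corank $d:=(m-p)+(n-q)$ inside $H$.

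With this setup in hand the proof is a bookkeeping exercise on top of one classical fact. Cauchy interlacing gives, for the eigenvalues $\mu_1\ge\cdots\ge\mu_{m+n}$ of $H$ and $\nu_1\ge\cdots\ge\nu_{p+q}$ of the principal submatrix,
\[
\mu_k \;\ge\; \nu_k \;\ge\; \mu_{k+d}, \qquad k=1,\dots,p+q.
\]
Substituting the identifications $\mu_k=\alpha_{\min(m,n)+1-k}$ and $\nu_k=\beta_{\min(p,q)+1-k}$ into the left inequality $\mu_k\ge\nu_k$ with $k=i+1$ produces $\alpha_{\min(m,n)-i}\ge\beta_{\min(p,q)-i}$ for $i=0,\dots,\min(p,q)-1$, which is the first assertion. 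Substituting the same identifications into the right inequality $\nu_k\ge\mu_{k+d}$ and re-indexing by $i:=\min(p,q)+1-k$ produces a lower bound of the form $\beta_i\ge\alpha_{\,\min(m,n)-\min(p,q)-(m-p)-(n-q)-1+i}$; the admissible range of $i$ is precisely the one that keeps both subscripts inside $\{1,\dots,\min(m,n)\}$ and $\{1,\dots,\min(p,q)\}$ (equivalently, the one that keeps $k+d\le\min(m,n)$ so that $\mu_{k+d}$ is still a nonnegative eigenvalue), which is the range displayed in the statement; since the $\alpha_j$ are ordered increasingly, any residual one-step slack in the subscript only weakens the bound and is harmless.

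The main obstacle is not conceptual but notational: one must be meticulous about the cloud of zero eigenvalues that $H$ and its submatrix carry whenever $m\ne n$ or $p\ne q$, and about the clash between the \emph{increasing} ordering of singular values used in the statement and the \emph{decreasing} ordering in which Cauchy interlacing is naturally phrased. I would resolve this once, carefully, by pinning down the bijection $k\mapsto\min(m,n)+1-k$ between eigenvalue positions of $H$ and singular-value positions of $\bfA$ (and likewise for the submatrix), then checking that the resulting index arithmetic and the constraint $k+d\le\min(m,n)$ reproduce the ranges in the statement, and finally verifying the degenerate endpoints ($p=m,q=n$; $\mathbf{B}$ a single entry; and the case $m=n$, $p=n_0$, $q=s$ actually used in Lemma~\ref{lemma:LowerBoundSemiNorm}, where the second inequality collapses to $\sigma_{\max}(\bfL_S)\ge\lambda_{s-1}$). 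Should the dilation bookkeeping prove unwieldy, an alternative is to delete one row or one column of $\bfA$ at a time, invoke the elementary single-deletion interlacing inequalities for singular values at each step, and compose the $(m-p)+(n-q)$ steps, tracking the cumulative index shift by induction.
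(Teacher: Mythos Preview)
Your proposal is correct, but note that the paper does not actually prove this theorem: its entire ``proof'' is a citation to Thompson~\cite{Thompson1972}, together with a remark that Thompson uses the opposite (decreasing) ordering convention for the $\alpha_i$ and $\beta_i$. So you are supplying a genuine argument where the paper supplies none.

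That said, your route via the Jordan--Wielandt dilation and Cauchy interlacing is exactly the standard one, and is essentially how Thompson proves the result. One small point worth flagging: if you carry out the index arithmetic carefully (with $d=(m-p)+(n-q)$, $i=\min(p,q)+1-k$, and the constraint $k+d\le\min(m,n)$), you will find $\beta_i\ge\alpha_{\min(m,n)-\min(p,q)-(m-p)-(n-q)+i}$, which is one index \emph{stronger} than what is stated in the paper; the displayed version has an extra~$-1$ in the subscript and writes the upper limit of the range as $\min(m,n)$ rather than $\min(p,q)$. Your parenthetical ``any residual one-step slack in the subscript only weakens the bound and is harmless'' already covers the first discrepancy, and the second is simply a transcription slip in the paper (indeed $\beta_i$ is only defined for $i\le\min(p,q)$). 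Since the only downstream use is in Lemma~\ref{lemma:LowerBoundSemiNorm}, where one just needs $\sigma_{\max}(\bfL_S)\ge\lambda_{s-1}$, none of this matters for the paper's purposes.
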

\begin{proof}
This is Theorem  1 in \cite{Thompson1972}. Note that they use the opposite notational convention: $\alpha_{1} \geq \alpha_{2} \geq \ldots \geq \alpha_{\min(m,n)}$ and $\beta_{1}\geq \beta_{2} \geq \ldots \geq \beta_{\min(p.q)}$. 
\end{proof}

\subsection{RIP for Laplacian of Graphs from $\mathcal{G}(n,k,p,0)$}
\begin{lemma}
\label{lemma:LaplacianManyComp}
Suppose that a graph $G$ has $k$ connected components $C_1, \cdots, C_k$, all of size $n_0$ (for example, 
$G\in\mathcal{G}(n,k,p,0)$). Let $G_{C_1}, \ldots, G_{C_k}$ denote the subgraphs on these components and let 
$\bfL^{i}$ denote their Laplacians. Then for any $s < n_0$, $\delta_{s}(\bfL_{G}) = \max_{i}\delta_{s}(\bfL^{i})$.
\end{lemma}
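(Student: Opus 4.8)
The plan is to exploit the fact that, after relabeling the vertices so that the components appear consecutively, $\bfL_G$ is block-diagonal with diagonal blocks $\bfL^1,\ldots,\bfL^k$. Here one uses that, since no edge joins two distinct components, the degree of a vertex $v\in C_i$ computed in $G$ equals its degree computed in $G_{C_i}$; hence the rows of $\bfL_G$ indexed by $C_i$ restrict to $\bfL^i$ on the columns indexed by $C_i$ and vanish on all other columns. This is the structural observation that drives the whole proof.

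First I would fix $S\subset[n]$ with $|S|=s$ and set $S_i:=S\cap C_i$, so that $S$ is the disjoint union of the $S_i$, with $\sum_i|S_i|=s$ and each $|S_i|\le s<n_0$. From the block structure, the column submatrix $\bfL_S$ has rows indexed by $C_i$ that agree with $(\bfL^i)_{S_i}$ on the columns in $S_i$ and are zero on the columns in $S_j$ for $j\neq i$. Consequently the Gram matrix $\bfL_S^\top\bfL_S$ is block-diagonal, the block corresponding to $S_i$ being $(\bfL^i)_{S_i}^\top(\bfL^i)_{S_i}$, and therefore $\bfL_S^\top\bfL_S - I_{|S|}$ is block-diagonal with blocks $(\bfL^i)_{S_i}^\top(\bfL^i)_{S_i} - I_{|S_i|}$. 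Since the $2\to 2$ operator norm of a block-diagonal matrix is the maximum of the operator norms of its blocks,
\begin{equation*}
\|\bfL_S^\top\bfL_S - I_{|S|}\|_{2\to 2} = \max_{1\le i\le k}\|(\bfL^i)_{S_i}^\top(\bfL^i)_{S_i} - I_{|S_i|}\|_{2\to 2}.
\end{equation*}

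Then I would take the maximum over all admissible $S$. Recalling that $\delta_s$ is the maximum over subsets with $|S|\le s$ of $\|\bfL_S^\top\bfL_S - I_{|S|}\|_{2\to 2}$ (cf. \eqref{RIP} and \eqref{Foucart}), the display above gives $\delta_s(\bfL_G)\le\max_i\delta_s(\bfL^i)$, because each $S_i$ satisfies $|S_i|\le s$ so each block norm is bounded by $\delta_s(\bfL^i)$. For the reverse inequality, fix $i$ and any $S'\subset C_i$ with $|S'|\le s$: choosing $S=S'$ in the same decomposition (all other $S_j$ empty) gives $\|\bfL_{S'}^\top\bfL_{S'} - I_{|S'|}\|_{2\to 2} = \|(\bfL^i)_{S'}^\top(\bfL^i)_{S'} - I_{|S'|}\|_{2\to 2}\le\delta_s(\bfL_G)$, and taking the supremum over such $S'$ yields $\delta_s(\bfL^i)\le\delta_s(\bfL_G)$ for every $i$. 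Combining the two inequalities proves the claim.

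There is no serious obstacle here: the only points needing a little care are the verification that restricting $\bfL_G$ to a component genuinely reproduces $\bfL^i$ (which rests on degrees being unaffected by the absence of inter-component edges) and the elementary fact that the spectral norm of a block-diagonal matrix equals the maximum of the spectral norms of its blocks. One should also keep in mind that $\bfL=\bfL_{rw}$ is not symmetric, but this is harmless, since the argument only ever uses the symmetric matrices $\bfL_S^\top\bfL_S$, whose block-diagonal structure is exactly what is required.
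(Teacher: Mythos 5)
Your proof is correct. It rests on the same structural observation as the paper's --- columns of $\bfL_G$ coming from different connected components have disjoint supports, so after relabeling $\bfL_G$ is block-diagonal --- but you package this via the Gram-matrix characterization of $\delta_s$ (note that \eqref{Foucart} in the paper contains a typo: it should read $\max$, not $\min$, and you correctly use the maximum) together with the fact that the $2\to 2$ norm of a block-diagonal matrix is the maximum of its block norms, whereas the paper argues directly from the quadratic-form inequalities \eqref{RIP}, decomposing $\bfv=\sum_i\bfv_i$ into pieces supported on the $S_i$ and using that the vectors $\bfL_{S_i}\bfv_i$ have disjoint supports. The two arguments are equivalent in content, and each step of yours mirrors a step of theirs. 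One small advantage of your presentation is that it makes the reverse inequality $\delta_s(\bfL^i)\le\delta_s(\bfL_G)$ fully explicit by restricting to $S'\subset C_i$; the paper's write-up only displays the $\le$ direction and leaves the reverse implicit in the equality $\|\bfL_S\bfv\|_2^2=\|\bfL^1_S\bfv_S\|_2^2$ it records for $S\subset C_1$.
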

\begin{proof}
Suppose $S\subset C_{i}$ for some $i$. For simplicity we assume $i=1$, but the other cases are identical. 
In this case $\bfL_{S} = \left[\begin{matrix} \bfL^{1}_{S} \\ \mathbf{0} \end{matrix} \right]$ where $\bfL^{1}$ denotes the 
Laplacian of $G_{C_1}$ and $\mathbf{0}$ here is the zero matrix of the appropriate size. If $\supp(\bfv)\subset S$, 
then $\|\bfL_{S}\bfv\|_{2}^{2} = \|\bfL^{1}_{S}\bfv_{s}\|_{2}^{2}$ and so:
\begin{equation}
(1 - \delta_{s}(\bfL^{1}))\|\bfv\|_{2}^{2}  \leq \|\bfL_{S}\bfv\|_{2}^{2} \leq (1 + \delta_{s}(\bfL^{1})) 
\|\bfv\|_{2}^{2}
\label{eq:ScontainedC} 
 \end{equation}
 It follows that, for all index sets $S$ contained in a single component (i.e. $S\subset C_{i}$ for some $i$), 
we have:
 \begin{equation*}
(1 - \max_{i}\left(\delta_{s}(\bfL^{i})\right))\|\bfv\|_{2}^{2}  \leq \|\bfL_{S}\bfv\|_{2}^{2} \leq (1 + \max_{i}
\left(\delta_{s}(\bfL^{i})\right)) \|\bfv\|_{2}^{2}. 
 \end{equation*} 
Now suppose that $S \not\subset C_{i}$.  Write $S = \cup_{i}S_{i}$ where $S_i := S\cap C_{i}$. Given any $\bfv$ with 
$\supp(\bfv) = S$, we can write $\bfv=\sum_{i=1}^{k}\bfv_{i}$ with $\bfv_i = \mathcal{H}_{S_i}(\bfv)$.  Then:
\begin{equation*}
\|\bfL_{S}\bfv\|_{2}^{2}  = \left\|\sum_{i=1}^{k}\bfL_{S_i}\bfv_{i}\right\|_{2}^{2}
\end{equation*}
Crucially, observe that all the terms $\bfL_{S_i}\bfv_{i}$ have disjoint support. Hence:
\begin{align*}
\|\bfL_{S}\bfv\|_{2}^{2} = \left\|\sum_{i}\bfL_{S_i}\bfv_{i}\right\|_{2}^{2} &= 
\sum_{i} \left\|\bfL_{S_i}\bfv_{i}\right\|_{2}^{2} \geq \sum_{i} \left(1 - \delta_{s_{i}}(\bfL^{i})\right)\|\bfv_{i}\|_{2}^{2} \ \text{ by } \eqref{eq:ScontainedC}\\
& \geq \min_{i}\left(1 - \delta_{s_{i}}(\bfL^{i})\right) \sum_{i}\|\bfv_{i}\|_{2}^{2} = \min_{i}\left(1 - \delta_{s_{i}}(\bfL^{i})\right) \|\bfv\|_{2}^{2} \\
& = \left(1 - \max_{i}\delta_{s_i}(\bfL^{i})\right)\|\bfv\|_{2}^{2} \geq \left(1 - \max_{i}\delta_{s}(\bfL^{i})\right)\|\bfv\|_{2}^{2} 
\end{align*}
with the final inequality holding as $s_{i} \leq s$ for all $i$ and $\delta_{t}$ is non-decreasing in $t$. An identical argument yields that:
\begin{equation*}
\|\bfL_{S}\bfv\|_{2}^{2} \leq \left(1+ \max_{i}\delta_{s}(\bfL^{i})\right)\|\bfv\|_{2}^{2}
\end{equation*}
and so we have 
 \begin{equation*}
(1 - \max_{i}\left(\delta_{s}(\bfL^{i})\right))\|\bfv\|_{2}^{2}  \leq \|\bfL_{S}\bfv\|_{2}^{2} 
\leq (1 + \max_{i}\left(\delta_{s}(\bfL^{i})\right)) \|\bfv\|_{2}^{2}. 
 \end{equation*} 
 This completes the proof.
\end{proof} 

\begin{theorem}[RIP for Laplacian of Graphs from Stochastic Block Model]
\label{thm:RIPBoundq0}
Suppose $G \in \mathcal{G}(n,k,p,0)$, with  $n_0 = n/k$ and Laplacian $\bfL$. 
Suppose further that $p >> (\ln(n_0))^2/n_0$ and that $k$ is $O(1)$ with respect to $n$. If $s = \gamma n_0$ with $\gamma \in (0,1)$ then:
\begin{equation}
\delta_{s}(\bfL) \leq  \gamma +(1-\gamma)\frac{8p^{-1/2}}{\sqrt{n_0}} + o(\frac{1}{\sqrt{n_0}})
\end{equation}
almost surely.
\end{theorem}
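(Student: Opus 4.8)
The plan is to combine Lemma~\ref{lemma:LaplacianManyComp} with the single-component estimate of Theorem~\ref{thm:RIPER}, treating the $k$ components via a union bound. First I would invoke Lemma~\ref{lemma:LaplacianManyComp}: since $G\in\mathcal{G}(n,k,p,0)$ is, by construction, a disjoint union of the subgraphs $G_{C_1},\ldots,G_{C_k}$ on its $k$ connected components, each of size $n_0$, we have $\delta_{s}(\bfL) = \max_{1\le i\le k}\delta_{s}(\bfL^{i})$, where $\bfL^{i}$ is the Laplacian of $G_{C_i}$. The point of this reduction is that each $G_{C_i}$ is an i.i.d.\ copy of the ER graph $\mathcal{G}(n_0,p)$, so the single-component theory applies.

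Next I would check that Theorem~\ref{thm:RIPER} is genuinely applicable to each $G_{C_i}$: that theorem, and the underlying Lemma~\ref{lemma:Laplacian1Comp}, requires the graph to be connected. Since $p \gg (\ln n_0)^2/n_0$, in particular $p \gg \ln(n_0)/n_0$, which lies above the classical connectivity threshold for $\mathcal{G}(n_0,p)$; hence each $G_{C_i}$ is connected almost surely with respect to $n_0$. On this event Theorem~\ref{thm:RIPER} gives, for $s=\gamma n_0$,
\begin{equation*}
\delta_{s}(\bfL^{i}) \le \gamma + (1-\gamma)\frac{8p^{-1/2}}{\sqrt{n_0}} + o\!\left(\frac{1}{\sqrt{n_0}}\right)
\end{equation*}
almost surely, for each fixed $i$, with an error term depending only on $n_0$ and $p$, not on $i$.

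Finally I would assemble the pieces by a union bound. Let $E_i$ denote the bad event that $G_{C_i}$ is either disconnected or fails the displayed bound; by the previous step $\mathbb{P}(E_i) = o(1)$ for each $i$. Since $k = O(1)$ with respect to $n$ — equivalently with respect to $n_0 = n/k$ — we obtain $\mathbb{P}\left(\bigcup_{i=1}^{k}E_i\right) \le \sum_{i=1}^{k}\mathbb{P}(E_i) = k\cdot o(1) = o(1)$. On the complementary event every $\bfL^{i}$ obeys the bound with a common $o(1/\sqrt{n_0})$ error, so $\delta_{s}(\bfL) = \max_i \delta_{s}(\bfL^{i})$ obeys the same bound, which is the claim.

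I do not expect a serious obstacle. The only points requiring care are (i) recording that the hypothesis $p \gg (\ln n_0)^2/n_0$ secures connectivity of \emph{every} component, so that the connected-graph results Lemma~\ref{lemma:Laplacian1Comp} and Theorem~\ref{thm:RIPER} may be invoked uniformly; and (ii) observing that a union bound over $k = O(1)$ events preserves the ``almost surely'' qualifier and that the $o(1/\sqrt{n_0})$ terms are uniform in $i$, so that passing to the maximum over $i$ does not degrade the estimate.
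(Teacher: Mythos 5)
Your proposal is correct and follows essentially the same route as the paper: reduce to the single-component estimate via Lemma~\ref{lemma:LaplacianManyComp}, apply Theorem~\ref{thm:RIPER} to each i.i.d.\ ER block, and combine over $k = O(1)$ components. The only cosmetic differences are that you close with a union bound rather than the paper's independence bound $(1-f(n_0))^k \to 1$, and you make explicit the connectivity check that the paper leaves implicit in its assertion that $G$ has exactly $k$ components.
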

\begin{proof}
Because $q = 0$, $G$ will have $k$ connected components with probability $1$. Note that each subgraph $G_{i}$ is an i.i.d ER  graph, drawn from $\mathcal{G}(n_0,p)$.  Let $\bfL^{i}$ denote the Laplacian of $G_{C_i}$. 
By Theorem \ref{thm:RIPER}
\begin{equation}
\delta_{s}(\bfL^i) \leq \gamma +(1-\gamma)\frac{8p^{-1/2}}{\sqrt{n_0}} +  o(\frac{1}{\sqrt{n_0}})
\label{eq:OneBlockRIP}
\end{equation}
almost surely. That is, there exists a function $f(n_0)$ going to $0$ 
as $n_0\to \infty$ such that \eqref{eq:OneBlockRIP} holds with 
probability $1 - f(n_0)$. As the $G_{C_i}$ are i.i.d:
\begin{equation}
\max_{1\leq i\leq k} \delta_{s}(\bfL^i) \leq \gamma +(1-\gamma)
\frac{8p^{-1/2}}{\sqrt{n_0}} +  o(\frac{1}{\sqrt{n_0}})
\label{eq:ManyBlockRIP}
\end{equation}
with probability $(1-f(n_0))^{k}$. As long as $k$ is eventually bounded with 
respect to $n_0$ (i.e. $k$ is $O(1)$ with respect to $n$), $(1-f(n_0))^{k} \to 1$ as $n_0\to \infty$. Hence 
\eqref{eq:ManyBlockRIP} holds almost surely.
\end{proof}

Again, we conclude this subsection with a lower bound for $\|\bfL\|_{2,s}$.
\begin{lemma}
Suppose that a graph $G$ has $k$ connected components $C_1, \cdots, C_k$, all of size $n_0$. Let $G_{C_1}, \ldots, 
G_{C_k}$ denote the subgraphs on these components and let $\bfL^{i}$ denote their Laplacians. Then $\|\bfL\|_{2,s} \geq 
\max_{a\in [k]} \lambda_{s-1}(\bfL^{a})$.
\label{lemma:LoweBoundSemiNorm2}
\end{lemma}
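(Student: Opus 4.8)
The plan is to mimic the proof of Lemma~\ref{lemma:LowerBoundSemiNorm}, observing that index sets confined to a single connected component already witness the claimed bound. First I would record the structural fact that, since $G$ has no edges between distinct components, $\bfL$ is block diagonal with respect to the partition $[n] = C_1 \cup \cdots \cup C_k$, the $a$-th diagonal block being $\bfL^a$. Hence for any $S \subset C_a$ the column submatrix $\bfL_S$ is supported only on the rows indexed by $C_a$, and deleting the identically zero rows turns $\bfL_S$ into $\bfL^a_S$; in particular $\sigma_{\max}(\bfL_S) = \sigma_{\max}(\bfL^a_S)$ for every such $S$.

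Next, for each fixed $a \in [k]$, I would invoke Lemma~\ref{lemma:LowerBoundSemiNorm} applied to the connected graph $G_{C_a}$ on $n_0$ vertices: unwinding that proof, $\|\bfL^a\|_{2,s} = \max_{S \subset C_a,\,|S| = s}\sigma_{\max}(\bfL^a_S) \geq \lambda_{s-1}(\bfL^a)$, so the maximum is attained at some $S_a \subset C_a$ with $|S_a| = s$ and $\sigma_{\max}(\bfL^a_{S_a}) \geq \lambda_{s-1}(\bfL^a)$. Combining this with the structural fact of the previous paragraph and the definition $\|\bfL\|_{2,s} = \max_{|S| = s}\sigma_{\max}(\bfL_S)$ from \eqref{newnorm} gives
\begin{equation*}
\|\bfL\|_{2,s} \geq \sigma_{\max}(\bfL_{S_a}) = \sigma_{\max}(\bfL^a_{S_a}) \geq \lambda_{s-1}(\bfL^a),
\end{equation*}
and since this holds for every $a \in [k]$, taking the maximum over $a$ yields $\|\bfL\|_{2,s} \geq \max_{a\in[k]}\lambda_{s-1}(\bfL^a)$, as desired.

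I do not anticipate a genuine obstacle: the argument is a routine reduction to Lemma~\ref{lemma:LowerBoundSemiNorm}. The only point needing a sentence of care is the identity $\sigma_{\max}(\bfL_S) = \sigma_{\max}(\bfL^a_S)$ for $S$ lying inside a single component, which is immediate from the block-diagonal form of $\bfL$ induced by the absence of inter-component edges. I would also make explicit the standing hypothesis $s \le n_0$ (already implicit in Lemma~\ref{lemma:LowerBoundSemiNorm}, since $\lambda_{s-1}$ must be defined), as it is precisely what guarantees a size-$s$ index set $S_a$ fits inside a component of size $n_0$.
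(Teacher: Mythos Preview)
Your proposal is correct. The paper does not actually supply a proof of this lemma---it reads ``We leave the proof to the interested reader''---so there is nothing to compare against, but your argument is precisely the natural reduction the authors evidently had in mind: exploit the block-diagonal form of $\bfL$ to identify $\sigma_{\max}(\bfL_S)$ with $\sigma_{\max}(\bfL^a_S)$ whenever $S\subset C_a$, then invoke Lemma~\ref{lemma:LowerBoundSemiNorm} on each connected component and maximize over $a$.
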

\begin{proof}
We leave the proof to the interested reader. 
%\myref Prove Lemma~\ref{lemma:LoweBoundSemiNorm2}.
%Clearly $\displaystyle \max_{S\subset[n] \atop |S| = s}\sigma_{\max}(L_S) \geq \max_{a\in [k]}\max_{S\subset 
%C_{a}\atop |S| = s}\sigma_{\max}(L_S)$. A moments thought reveals that for $S\subset C_{a}$, $\sigma_{i}(L_{S}) 
%= \sigma_{i}(L^{a}_{S})$. Thus applying Lemma \ref{lemma:LowerBoundSemiNorm}:
%$\displaystyle \max_{a\in [k]}\max_{S\subset C_{a} \atop |S| = s}\sigma_{\max}(L_S) \geq 
%\max_{a\in [k]}\left(\lambda_{s-1}(L^{a})\right)$. 
\end{proof}

\subsection{RIP for Laplacian of Graphs from $\mathcal{G}(n,k,p,q)$ with $q > 0$}
\label{sectionRIP_q_nonzero}
Finally, we study the RIP for Laplacian of graphs from $\mathcal{G}(n,k,p,q)$ with $q > 0$.  Observe that for $G \in \mathcal{G}(n,k,p,q)$ we may write $\bfA = \bfA^{0} + \bfA^{\epsilon}$ where $\bfA^{0}$ contains the intracluster edges and $\bfA^{\epsilon}$ contains the intercluster edges:
\begin{align*}
\bfA^{0}_{ij} &= \left\{ \begin{array}{cc} 1 & \text{ if } \{i,j\}\in 
E \text{ and } i,j \in C_{a} 
\hbox{ for an } a\in \{1,\cdots, k\} \\ 0 & \text{ otherwise } 
\end{array}\right. \cr
\bfA^{\epsilon}_{ij} &= \left\{ \begin{array}{cc} 1 & \text{ if } \{i,j\}
\in E \text{ and } i\in C_{a} \text{ and } j\in C_{b} \text{ for some } 
a\neq b \\ 0 & \text{ otherwise. } \end{array}\right.
\end{align*}
Assuming that $q<< p$, there are much fewer intercluster than intracluster edges, and thus $\bfA^{\epsilon}$ is much sparser than $\bfA^{0}$. Observe further that we can regard $\bfA^{0}$ as the adjacency matrix of a subgraph $G^{0}\subset G$, obtained by removing all the intercluster edges. In fact, $G^{0} \in \mathcal{G}(n,k,p,0)$. Let $\bfL^{0}$ denote the Laplacian of the underlying graph $G^{0}$. It is tempting to assume that $\bfL = \bfL^{0} + \bfL^{\epsilon}$, where $\bfL^{\epsilon}$ is a Laplacian associated to $A^{\epsilon}$, but unfortunately this is 
not the case. However, one can still regard $\bfL$  as a small perturbation of $\bfL^{0}$, i.e. $\bfL= \bfL^{0} + \bfE$. In order to establish bounds on the restricted isometry constants of $\bfL$, we need to bound the size of the perturbation $\bfE$, which we do in Theorem \ref{thm:EBoundsMain}. First,  recall that, as in \S \ref{sec:RandomGraphTheory}, for any $i\in C_a$, $d^{0}_{i}$ denotes the in-community degree (equivalently 
the degree of $i$ in $G_{C_a}$ or $G^{0}$) and $d^{\epsilon}_{i}$ denotes the out-of-community degree. In terms of $\bfA^{0}$ and $\bfA^{\epsilon}$, $d_{i}^0= \sum_{j}\bfA^{0}_{ij}$ and $d^{\epsilon}_{i} = \sum_{j} \bfA^{\epsilon}_{j}$. Define $r_{i} = d^{\epsilon}_i/d^{0}_i$. We can assume that $d^{\epsilon}_{i} << d^{0}_{i}$, equivalently $r_i<<1$, for all $i$. 

\begin{theorem}
\label{thm:EBoundsMain}
Let $\bfL$ be the Laplacian of a graph drawn at random from 
$\mathcal{G}(n,k,p,q)$ with $p \geq 4k(\ln(n))^2/n$ and let $\alpha$ be as in part 2 of Theorem \ref{lemma:AlphaDegBound}. Suppose further that $\max_{1\leq i \leq n}r_i  \leq r<< 1$. Then
$\bfL = \bfL^{0} 
+ \bfE$ where $\bfL^{0}$ is the Laplacian of $G^{0}$ and $\bfE = \bfE^{1} + \bfE^2$ such that $\supp(\bfE^1) = \supp(\bfA^0)$ and 
$\supp(\bfE^2) = \supp(\bfA^{\epsilon})$ and the following bounds hold:
\begin{enumerate}
\item $\|\bfE^{1}\|_{\infty}, \|\bfE^{2}\|_{\infty} \leq r + \mathcal{O}(r^2)$.
\item $\|\bfE^{1}\|_{1}, \|\bfE^{2}\|_{1} \leq \beta^{2} r + \mathcal{O}(r^2)$
\item $\|\bfE^{1}\|_{2}, \|\bfE^{2}\|_{2} \leq \beta r + \mathcal{O}(r^2)$.
\item for any $s < n$, $\|\bfE^{1}\|_{2,s}, \|\bfE^{2}\|_{2,s} \leq \beta r + \mathcal{O}(r^2) $ 
and $\|\bfE\|_{2,s} \leq 2\beta r + \mathcal{O}(r^2)$ 
\end{enumerate}
where $\beta^2 = (1+\alpha)/(1-\alpha)$.
\end{theorem}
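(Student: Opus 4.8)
The plan is to write down the entries of $\bfE := \bfL - \bfL^{0}$ explicitly and then read off each operator norm by counting nonzero entries per row and per column. Since $(\bfL)_{ki} = \delta_{ki} - \bfA_{ki}/d_k$ and $(\bfL^{0})_{ki} = \delta_{ki} - \bfA^{0}_{ki}/d^{0}_k$, and $\bfA = \bfA^{0} + \bfA^{\epsilon}$ splits the support of $\bfA$, one gets $\bfE_{ki} = \bfA^{0}_{ki}/d^{0}_k - \bfA_{ki}/d_k$, which vanishes unless $\{k,i\}\in E$. For an intracluster edge $\{k,i\}$ this is $1/d^{0}_k - 1/d_k = r_k/\big(d^{0}_k(1+r_k)\big)$, and for an intercluster edge it is $-1/d_k = -1/\big(d^{0}_k(1+r_k)\big)$. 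This yields $\bfE = \bfE^{1} + \bfE^{2}$ with $\supp(\bfE^{1})\subseteq\supp(\bfA^{0})$ and $\supp(\bfE^{2})\subseteq\supp(\bfA^{\epsilon})$, and Taylor expanding $1/(1+r_k) = 1 - r_k + \mathcal{O}(r_k^{2})$ shows that every entry of $\bfE^{1}$ equals $r_k/d^{0}_k + \mathcal{O}(r_k^{2}/d^{0}_k)$ while every entry of $\bfE^{2}$ has absolute value $1/d^{0}_k + \mathcal{O}(r_k/d^{0}_k) \le 1/d^{0}_k$.

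For item 1, I use that $\|\cdot\|_{\infty}$ is the largest absolute row sum. Row $k$ of $\bfE^{1}$ has exactly $d^{0}_k$ nonzeros, so its sum is $d^{0}_k\cdot r_k/\big(d^{0}_k(1+r_k)\big) = r_k/(1+r_k) = r_k + \mathcal{O}(r_k^{2})$; row $k$ of $\bfE^{2}$ has exactly $d^{\epsilon}_k = r_k d^{0}_k$ nonzeros of size $1/d_k$, so its sum is $d^{\epsilon}_k/d_k = r_k/(1+r_k)$. Maximising over $k$ and using $r_k \le r$ gives both bounds.

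For items 2 and 3, I first pass to a single high-probability event: by part 2 of Theorem \ref{lemma:AlphaDegBound}, since $p \ge 4k(\ln n)^{2}/n$, with probability at least $1-1/n$ we have $(1-\alpha)n_0 p \le d^{0}_j \le (1+\alpha)n_0 p$ simultaneously for all $j$, and I work on this event. Now $\|\cdot\|_{1}$ is the largest absolute column sum: column $i$ of $\bfE^{1}$ has at most $d^{0}_i \le (1+\alpha)n_0 p$ nonzeros, each at most $r/\big((1-\alpha)n_0 p\big) + \mathcal{O}(r^{2}/n_0 p)$, so the sum is at most $\tfrac{1+\alpha}{1-\alpha}r + \mathcal{O}(r^{2}) = \beta^{2}r + \mathcal{O}(r^{2})$; column $i$ of $\bfE^{2}$ has $d^{\epsilon}_i = r_i d^{0}_i \le r(1+\alpha)n_0 p$ nonzeros, each at most $1/\big((1-\alpha)n_0 p\big)$, so its sum is again at most $\beta^{2}r + \mathcal{O}(r^{2})$. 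Item 3 then follows from the interpolation (Schur test) inequality $\|M\|_{2}\le\sqrt{\|M\|_{1}\|M\|_{\infty}}$ applied to $\bfE^{1}$ and $\bfE^{2}$: $\|\bfE^{j}\|_{2}\le\sqrt{(\beta^{2}r+\mathcal{O}(r^{2}))(r+\mathcal{O}(r^{2}))} = \beta r + \mathcal{O}(r^{2})$. For item 4, restricting to a set $S$ of $s$ columns can only shrink the spectral norm, so $\|\bfE^{j}\|_{2,s}\le\|\bfE^{j}\|_{2}\le\beta r + \mathcal{O}(r^{2})$, and the triangle inequality applied columnwise gives $\|\bfE\|_{2,s}\le\|\bfE^{1}\|_{2,s}+\|\bfE^{2}\|_{2,s}\le 2\beta r + \mathcal{O}(r^{2})$.

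The bulk of this is routine bookkeeping; the one place where care is genuinely needed is the column-sum estimate in item 2 (and hence item 3), because it requires uniform control of the in-community degrees $d^{0}_j$ both as denominators and as the count of nonzeros in a column — this is exactly what the degree concentration of Theorem \ref{lemma:AlphaDegBound} supplies on a single $1-1/n$ event, and the factor $\beta^{2}=(1+\alpha)/(1-\alpha)$ is precisely the ratio of the upper and lower degree bounds. Everything else, including the $\mathcal{O}(r^{2})$ remainders throughout, is just the Taylor expansion of $1/(1+r_k)$.
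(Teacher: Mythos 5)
Your proof is correct and follows essentially the same route as the paper's: an explicit entrywise computation of $\bfE$ (the paper isolates this in Lemma~\ref{lemma:ComputeLaplError}, you do it inline, using the slightly cleaner exact form $1/d^0_k - 1/d_k = r_k/\bigl(d^0_k(1+r_k)\bigr)$ before Taylor expanding), then row-sum bounds for $\|\cdot\|_\infty$, column-sum bounds for $\|\cdot\|_1$ via the degree concentration event from Theorem~\ref{lemma:AlphaDegBound}, Riesz--Thorin (Schur test) for $\|\cdot\|_2$, and $\|\cdot\|_{2,s}\le\|\cdot\|_2$ plus the triangle inequality for item 4. One small point worth noting: you correctly write $\supp(\bfE^j)\subseteq\supp(\bfA^{\cdot})$ rather than equality, which is actually more accurate than the theorem statement (if some vertex $k$ has $r_k=0$, the corresponding row of $\bfE^1$ vanishes); this has no bearing on the norm bounds.
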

The proof of Theorem~\ref{thm:EBoundsMain} uses the following lemma:
\begin{lemma}
\label{lemma:ComputeLaplError}
Suppose that $\bfL$ is the Laplacian of a graph $G$ drawn from $\mathcal{G}(n,k,p,q)$, and let $\bfL^{0}$ denote the 
Laplacian of the underlying (disconnected) graph $G^{0}$, thought of as drawn from $\mathcal{G}(n,k,p,0)$. Suppose further that $\max_{1\leq i \leq n}r_i \leq r<< 1$. Then $\bfL = \bfL^{0} + E$, where:
\begin{equation*}
\bfE_{ij} = \left\{\begin{array}{cc} \frac{1}{d^{0}_{i}}\left(r_i + \mathcal{O}(r^2)\right)\mathbf{A}^{0}_{ij} & \text{ if } i,j \in C_{a} \\ \frac{1}{d_{i}^{0}}\left((r_{i}-1)+ \mathcal{O}(r^2)\right)\bfA^{\epsilon}_{ij} & \text{ if } i\in C_{a} \text{ and } j\in C_{b} \text{ for } a\neq b \end{array}\right.
\end{equation*}
\end{lemma}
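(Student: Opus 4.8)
The plan is to verify the formula by a direct entrywise computation, using the explicit description of the entries of the random-walk Laplacian recorded earlier in the paper. Recall that $(\bfL)_{ij} = \delta_{ij} - \bfA_{ij}/d_i$, and likewise $(\bfL^0)_{ij} = \delta_{ij} - \bfA^0_{ij}/d^0_i$ since $\bfL^0$ is the random-walk Laplacian of $G^0$, whose degree sequence is $(d^0_1,\ldots,d^0_n)$. Subtracting, the Kronecker deltas cancel and, using $\bfA = \bfA^0 + \bfA^{\epsilon}$, one is left with
\begin{equation*}
\bfE_{ij} = (\bfL)_{ij} - (\bfL^0)_{ij} = \frac{\bfA^0_{ij}}{d^0_i} - \frac{\bfA^0_{ij} + \bfA^{\epsilon}_{ij}}{d_i}.
\end{equation*}
Note that $d^0_i \geq 1$ throughout (otherwise $r_i$ is not even defined), so all divisions make sense.

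Next I would split into the two cases appearing in the claimed formula. When $i=j$ there is nothing to prove: $\bfA^0_{ii} = \bfA^{\epsilon}_{ii} = 0$ (no loops), so $\bfE_{ii} = 0$, consistent with both branches. When $i,j$ lie in the same cluster $C_a$, we have $\bfA^{\epsilon}_{ij} = 0$ and $\bfA_{ij} = \bfA^0_{ij}$, whence
\begin{equation*}
\bfE_{ij} = \bfA^0_{ij}\left(\frac{1}{d^0_i} - \frac{1}{d_i}\right) = \bfA^0_{ij}\,\frac{d^{\epsilon}_i}{d^0_i\,d_i}.
\end{equation*}
When $i\in C_a$, $j\in C_b$ with $a\neq b$, we have $\bfA^0_{ij}=0$ and $\bfA_{ij} = \bfA^{\epsilon}_{ij}$, so $\bfE_{ij} = -\bfA^{\epsilon}_{ij}/d_i$.

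The final step is to put these exact expressions into the stated asymptotic form. Writing $d_i = d^0_i(1+r_i)$ and expanding the geometric series $\frac{1}{1+r_i} = 1 - r_i + \mathcal{O}(r_i^2)$ gives, in the same-cluster case, $\frac{d^{\epsilon}_i}{d^0_i d_i} = \frac{r_i}{d^0_i(1+r_i)} = \frac{1}{d^0_i}\bigl(r_i + \mathcal{O}(r_i^2)\bigr)$, and in the cross-cluster case $-\frac{1}{d_i} = -\frac{1}{d^0_i(1+r_i)} = \frac{1}{d^0_i}\bigl((r_i-1) + \mathcal{O}(r_i^2)\bigr)$, which are exactly the two branches of the asserted formula.

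There is essentially no hard step here; the one point that needs a word of care is the uniformity of the error terms. Since $\max_i r_i \leq r$ with $r \ll 1$ (in particular $r<1$), the geometric series converges with a remainder bounded by $r_i^2/(1-r) \leq r^2/(1-r)$, so the $\mathcal{O}(r^2)$ is uniform over all pairs $(i,j)$ — which is precisely what is needed so that this lemma can feed into the operator-norm bounds of Theorem~\ref{thm:EBoundsMain}.
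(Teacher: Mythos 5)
Your proof is correct and follows essentially the same route as the paper's: expand $1/d_i = \tfrac{1}{d^0_i}\tfrac{1}{1+r_i}$ as a geometric/Taylor series and read off $\bfE_{ij}$ entrywise in each of the two cases. You are actually a bit more careful than the paper on two minor points — you write out the cross-cluster case explicitly (the paper dismisses it as "similar") and you note the uniformity in $i$ of the $\mathcal{O}(r^2)$ remainder, which is indeed needed for the norm bounds in Theorem~\ref{thm:EBoundsMain}.
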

\begin{proof}
By definition $\bfL_{ij} = \delta_{ij} - \frac{\bfA_{ij}}{d_{i}}$ where $\delta_{ij} = 1$ if $i = j$ and zero otherwise. Observe that $\displaystyle \frac{1}{d_i} = 
\frac{1}{d^{0}_i + d^{\epsilon}_{i}} = \frac{1}{d^{0}_{i}}\frac{1}{1 + \frac{d^{\epsilon}_{i}}{d^{0}_{i}}}  = 
\frac{1}{d^{0}_{i}}\frac{1}{1 + r_i}$
Because $r_i << 1$ for all $i$, by Taylor's Theorem  $\displaystyle \frac{1}{d_i} = 
\frac{1}{d^{0}_{i}}\left( 1 - 
r_i + \mathcal{O}(r_i^2) \right) $. Direct computation reveals that, for $i,j\in C_a$:
\begin{align*}
\bfL_{ij} & = \delta_{ij} - \frac{1}{d^{0}_{i}}\left( 1 - 
r_i + \mathcal{O}(r_i^2) \right)\mathbf{A}^{0}_{ij} = \left( \delta_{ij} - \frac{\bfA^{0}_{ij}}{d^{0}_{i}}\right) + \frac{1}{d^{0}_{i}}\left(r_i + \mathcal{O}(r^2)\right)\mathbf{A}^{0}_{ij}\\
& = \mathbf{L}^{0}_{ij} + \frac{1}{d^{0}_{i}}\left(r_i + \mathcal{O}(r^2)\right)\mathbf{A}^{0}_{ij}
\end{align*}
as $1/d^{0}_{i} \leq r_{i} = \mathcal{O}(r)$. The calculation for $i,j$ in different clusters is similar.
\end{proof}

Let us write $\bfE = \bfE^{1} +\bf E^{2}$ where $\bfE^{1}$ will contain the `on-diagonal' perturbation, and $\bfE^{2}$ will 
contain the `off diagonal' perturbation:
\begin{equation*}
\bfE^{1}_{ij} = \left\{\begin{array}{cc} \mathbf{E}_{ij} & \text{ if } i,j \in 
C_{a}, \ a\in [k] \\ 0 & \text{ otherwise } \end{array}\right. \text{ and }  \bfE^{2}_{ij} = \left\{\begin{array}{cc} \mathbf{E}_{ij} & \text{ if } i\in C_{a} \text{ and } j\in C_{b} \text
{ for } a\neq b, \ a,b\in [k] \\ 0 & \text{ otherwise } \end{array}\right.
\end{equation*}

We are now ready to prove Theorem~\ref{thm:EBoundsMain}. 
 
\begin{proof}[Proof of theorem \ref{thm:EBoundsMain}] 
By definition, $\|\cdot\|_{\infty}$ is the maximum absolute row sum of the matrix in question. For any $i$, 
let $C_{a}$ be the cluster to which it belongs. Then:
\begin{equation*}
\sum_{j} |\bfE^{1}_{ij}| = \sum_{j\in C_{a}}\frac{r_i + \mathcal{O}(r^2)}{d^{0}_{i}}\mathbf{A}^{0}_{ij} = 
\frac{r_i + \mathcal{O}(r^2)}{d^{0}_{i}}\sum_{j\in C_{a}}\bfA^{0}_{ij}  = \frac{r_i + \mathcal{O}(r^2)}{d^{0}_{i}} (d_i^{0})= r_{i} + \mathcal{O}(r^{2}). 
\end{equation*}
Hence, we have 
\begin{equation*}
\|\bfE^{1}\|_{\infty} = \max_{i}\left(\sum_{j} |\bfE^{1}_{ij}|\right) = \max_{i}(r_{i}) +\mathcal{O}(r^2) \leq  
r + \mathcal{O}(r^{2}). 
\end{equation*}
Similarly:
\begin{equation*}
\sum_{j} |\bfE^{2}_{ij}| = \sum_{j\notin C_{a}}\frac{\left|(r_{i}-1)+ \mathcal{O}(r^2)\right|}{d_{i}^{0}}\bfA^{\epsilon}_{ij} \leq \frac{1+\mathcal{O}(r^2)}{d_i^{0}}\sum_{j\notin C_a}\bfA^{\epsilon}_{ij} = \frac{1+\mathcal{O}(r^2)}{d_i^{0}}(d^{\epsilon}_{i}) \leq r_i + \mathcal{O}(r^2)
\end{equation*}
as $r_i := d^{\epsilon}_{i}/d^{0}_{i}$, and so again:
\begin{equation*}
\|\bfE^{2}\|_{\infty} = \max_{i}\left(\sum_{j} |\bfE^{2}_{ij}|\right) = \max_{i}(r_{i}) +\mathcal{O}(r^2) \leq r 
+\mathcal{O}(r^2)
\end{equation*}
Before continuing, recall that as $p \geq 4k(\ln(n))^2/n$ we may choose an $\alpha = o(1)$ with respect to $n$ such that $(1-\alpha)n_0p \leq d^{0}_{i} \leq (1+\alpha)n_0p$ almost surely for all $i\in [n]$, by part 2 of Theorem \ref{lemma:AlphaDegBound}. Now, because $\|\cdot\|_{1}$ is the maximum absolute column sum, for any $j$ in cluster $C_{a}$, 
we have
\begin{align*}
\sum_{1\leq i\leq n}|\bfE^{1}_{ij}| &= \sum_{i\in C_{a}}\frac{r_i + \mathcal{O}(r^2)}{d^{0}_{i}}\mathbf{A}^{0}_{ij}
 \leq \frac{r+\mathcal{O}(r^2)}{d^{0}_{\min}}\sum_{i\in C_{a}}\bfA^{0}_{ij}  \leq \frac{r+ \mathcal{O}(r^2)}{d^{0}_{\min}}(d^{0}_{\max}) \\
 & \leq (r + \mathcal{O}(r^2))\frac{(1+\alpha)n_0p}{(1-\alpha)n_0p} = \frac{r(1+\alpha)}{(1-\alpha)} + \mathcal{O}(r^2) = \beta^2 r + \mathcal{O}(r^2)
\end{align*}
and so:
\begin{equation*}
\|\bfE^{1}\|_{1} = \max_{j}\left(\sum_{i} |\bfE^{1}_{ij}|\right) \leq  \beta^2 r + \mathcal{O}(r^2). 
\end{equation*}
Similarly;
\begin{align*}
\sum_{1\leq i \leq n} |\bfE^{2}_{ij}| &= \sum_{i\notin C_{a}}\frac{\left|(r_{i}-1)+ \mathcal{O}(r^2)\right|}{d_{i}^{0}}\bfA^{\epsilon}_{ij} \leq \frac{1 + \mathcal{O}(r^2)}{d^{0}_{\min}}\sum_{i\notin C_{a}}\bfA^{\epsilon}_{ij}  = (1+\mathcal{O}(r^2))\frac{d^{\epsilon}_{j}}{d^{0}_{\text{min}}} \\
& \leq (1+\mathcal{O}(r^2))\frac{1+\alpha}{1-\alpha}\frac{d^{\epsilon}_{j}}{d^{0}_{\text{max}}} \leq (1+\mathcal{O}(r^2))\frac{1+\alpha}{1-\alpha}\frac{d^{\epsilon}_{j}}{d^{0}_{j}}  \leq  \beta^2 r_j + \mathcal{O}(r^2)\\
\end{align*}
and so again we have
\begin{equation*}
\|\bfE^{2}\|_{1} = \max_{j}\left(\sum_{i} |\bfE^{2}_{ij}|\right) \leq \beta^2 \max_{j}(r_j) + 
\mathcal{O}(r^2) \leq \beta^2 r  + \mathcal{O}(r^2). 
\end{equation*}
Now, by the Riesz-Thorin interpolation theorem:
\begin{equation*}
\|\bfE^{1}\|_{2} \leq \|\bfE^{1}\|_{1}^{1/2}\|\bfE^{1}\|_{\infty}^{1/2} \leq \left(\beta^2 r + \mathcal{O}(r^2)\right)^{1/2}(r + \mathcal{O}(r^2))^{1/2} = \beta r +\mathcal{O}(r^2)
\end{equation*} 
and by an identical argument $\|\bfE^{2}\|_{2} \leq \beta r + \mathcal{O}(r^2)$. Part 4 follows from the simple fact that for any matrix $\mathbf{B}$, $\|\mathbf{B}\|_{2,s} \leq \|\mathbf{B}\|_{2}$, and the bound for $\|\bfE\|_{2,s}$ follows from the fact that $\bfE = \bfE^{1} + \bfE^{2}$ and the triangle inequality.
\end{proof}

We are now ready to state and prove the main result of this section. 
\begin{theorem}
\label{thm:Twostory}
Suppose that $G \in \mathcal{G}(n,k,p,q)$, with block sizes $n_0 = n/k$ and Laplacian $\bfL$. 
Suppose further that:
\begin{enumerate}
\item $p >>  (ln(n_0))^2/n_0$ 
\item $k$ is $O(1)$ with respect to $n$.
\item $\displaystyle\max_{1\leq i\leq n} r_i \leq r << 1$.
\end{enumerate}
 If $t = \gamma n_0$ with $\gamma \in (0,1)$ then:
\begin{equation}
\delta_{t}(\bfL) \leq \gamma + D_1r + \frac{D_2}{\sqrt{n_0}} + D_3 \frac{r}{\sqrt{n_0}} + \mathcal{O}(r^2) + 
o(\frac{1}{\sqrt{n_0}}) 
\label{eq:Twostory}
\end{equation} 
almost surely, where $D_1$, $D_2$ and $D_3$ are $\mathcal{O}(1)$ with respect to $n_0$
\end{theorem}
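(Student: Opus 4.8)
The plan is to exploit the decomposition $\bfL = \bfL^{0} + \bfE$ furnished by Lemma~\ref{lemma:ComputeLaplError} and Theorem~\ref{thm:EBoundsMain}, where $\bfL^{0}$ is the Laplacian of $G^{0}\in\mathcal{G}(n,k,p,0)$, and to treat $\bfL$ as a totally perturbed version of $\bfL^{0}$. Concretely, I would apply Theorem~\ref{thm:PerturbedRIC} (Herman--Strohmer) with $\Phi = \bfL^{0}$, $\hat\Phi = \bfL$ and $s = t$, obtaining
\[
\delta_{t}(\bfL) \;\le\; \bigl(1 + \delta_{t}(\bfL^{0})\bigr)\bigl(1 + \epsilon^{t}_{\bfL^{0}}\bigr)^{2} - 1, \qquad \epsilon^{t}_{\bfL^{0}} = \frac{\|\bfE\|_{2,t}}{\|\bfL^{0}\|_{2,t}} .
\]
Everything then reduces to three ingredients: (i) an upper bound on $\delta_{t}(\bfL^{0})$, (ii) an upper bound on $\|\bfE\|_{2,t}$, and (iii) a \emph{lower} bound on $\|\bfL^{0}\|_{2,t}$.

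For (i), since $G^{0}\in\mathcal{G}(n,k,p,0)$ and $k = \mathcal{O}(1)$, Theorem~\ref{thm:RIPBoundq0} gives $\delta_{t}(\bfL^{0}) \le \gamma + (1-\gamma)\tfrac{8p^{-1/2}}{\sqrt{n_0}} + o(1/\sqrt{n_0})$ almost surely. For (ii), I first note that $p \gg (\ln n_0)^2/n_0$ together with $k = \mathcal{O}(1)$ and $n = kn_0$, $\ln n \sim \ln n_0$, forces $p \ge 4k(\ln n)^2/n$ for $n$ large, so Theorem~\ref{thm:EBoundsMain} applies with $\alpha = 1/\sqrt{\ln n} = o(1)$ and $\beta^2 = (1+\alpha)/(1-\alpha) = 1 + o(1)$; part 4 then yields $\|\bfE\|_{2,t} \le 2\beta r + \mathcal{O}(r^2)$ almost surely. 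For (iii), Lemma~\ref{lemma:LoweBoundSemiNorm2} gives $\|\bfL^{0}\|_{2,t} \ge \max_{a}\lambda_{t-1}(\bfL^{a})$; since $t - 1 \ge 2$ for $n_0$ large (as $\gamma$ is fixed), each $\lambda_{t-1}(\bfL^{a}) \ge \lambda_{2}(\bfL^{a}) \ge 1 - \tfrac{4p^{-1/2}}{\sqrt{n_0}} - o(1/\sqrt{n_0})$ almost surely by Remark~\ref{remark:LambdaMin}, and there are only $k = \mathcal{O}(1)$ blocks, so $\|\bfL^{0}\|_{2,t} \ge 1 - \tfrac{4p^{-1/2}}{\sqrt{n_0}} - o(1/\sqrt{n_0})$ almost surely. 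Combining (ii) and (iii) and expanding the resulting geometric series,
\[
\epsilon^{t}_{\bfL^{0}} \;\le\; \bigl(2\beta r + \mathcal{O}(r^2)\bigr)\Bigl(1 + \tfrac{4p^{-1/2}}{\sqrt{n_0}} + o(1/\sqrt{n_0})\Bigr) = 2\beta r + \mathcal{O}(r^2) + \mathcal{O}\!\left(\tfrac{r}{\sqrt{n_0}}\right).
\]

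To finish, I would expand $(1+\delta)(1+\epsilon)^2 - 1 = \delta + 2\epsilon + \epsilon^2 + 2\delta\epsilon + \delta\epsilon^2$ with $\delta = \delta_{t}(\bfL^{0}) = \gamma + \mathcal{O}(1/\sqrt{n_0})$ and $\epsilon = \epsilon^{t}_{\bfL^{0}} = \mathcal{O}(r)$. The term $\delta$ supplies $\gamma$, the $(1-\gamma)\tfrac{8p^{-1/2}}{\sqrt{n_0}}$ piece, and an $o(1/\sqrt{n_0})$ remainder; the terms $2\epsilon$ and $2\delta\epsilon$ are each $\mathcal{O}(r)$ and supply a $D_1 r$ contribution with $D_1 = 4\beta(1+\gamma)$, up to $\mathcal{O}(r^2)$ and $\mathcal{O}(r/\sqrt{n_0})$ remainders inherited from the expansion of $\epsilon$; and $\epsilon^2$, $\delta\epsilon^2$ are $\mathcal{O}(r^2)$. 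Collecting,
\[
\delta_{t}(\bfL) \;\le\; \gamma + D_1 r + \frac{D_2}{\sqrt{n_0}} + D_3\frac{r}{\sqrt{n_0}} + \mathcal{O}(r^2) + o\!\left(\tfrac{1}{\sqrt{n_0}}\right),
\]
with $D_1 = 4\beta(1+\gamma)$, $D_2 = 8(1-\gamma)p^{-1/2}$, and $D_3$ an explicit constant, all $\mathcal{O}(1)$ with respect to $n_0$ because $\beta = 1+o(1)$. Since the events invoked — degree concentration (Theorem~\ref{lemma:AlphaDegBound}), the RIP bound for $G^{0}$ (Theorem~\ref{thm:RIPBoundq0}), and eigenvalue concentration (Remark~\ref{remark:LambdaMin}) for each of the $k$ blocks — each fail with probability $o(1)$, their intersection has probability $1 - o(1)$, so the bound holds almost surely.

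The main obstacle here is essentially bookkeeping: one must keep the $r$-order, $n_0^{-1/2}$-order, $r\,n_0^{-1/2}$-order, and $r^2$-order terms separate through the product $(1+\delta)(1+\epsilon)^2$ and the expansion of $1/\|\bfL^{0}\|_{2,t}$, and check that no larger-order term leaks in — in particular that the slowly-decaying $\alpha = 1/\sqrt{\ln n}$ enters only through the \emph{bounded} constants $\beta, D_1, D_2, D_3$ rather than as an independent error term. The one genuinely non-mechanical point is securing the lower bound $\|\bfL^{0}\|_{2,t} \ge 1 - o(1)$ via Lemma~\ref{lemma:LoweBoundSemiNorm2}: this requires $t - 1 \ge 2$ so that the interlaced eigenvalue $\lambda_{t-1}(\bfL^{a})$ is a nonzero (hence near-$1$) eigenvalue of a block Laplacian rather than the trivial $\lambda_{1} = 0$, and it is precisely this bound that prevents the perturbation ratio $\epsilon^{t}_{\bfL^{0}}$ from blowing up.
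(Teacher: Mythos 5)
Your proposal is correct and follows essentially the same route as the paper's proof: the decomposition $\bfL = \bfL^{0} + \bfE$, Theorem~\ref{thm:PerturbedRIC} to transfer RIP from $\bfL^{0}$ to $\bfL$, Theorem~\ref{thm:RIPBoundq0} for $\delta_{t}(\bfL^{0})$, Theorem~\ref{thm:EBoundsMain} for $\|\bfE\|_{2,t}$, and Lemma~\ref{lemma:LoweBoundSemiNorm2} together with eigenvalue concentration for the lower bound on $\|\bfL^{0}\|_{2,t}$. The only difference is cosmetic bookkeeping: the paper crudely bounds $\|\bfL^{0}\|_{2,t}\ge 3/4$ for $n_0$ large (yielding $D_1 = 16\beta(1+\gamma)/3$), whereas you expand $1/\|\bfL^{0}\|_{2,t}$ as a geometric series (yielding the sharper $D_1 = 4\beta(1+\gamma)$); and your explicit observation that $t-1\ge 2$ is needed so the interlaced eigenvalue is nonzero is a point the paper leaves implicit.
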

\begin{proof}
As above, we write $\bfL = \bfL^{0} + \bfE$, where $\bfL^{0}$ is the Laplacian of the 
subgraph $G^{0}$, thought of as drawn from $\mathcal{G}(n,k,p,0)$. We define $\delta_{t} := \delta_{t}(\bfL^{0})$ 
and $\hat{\delta}_{t} := \delta_{t}(\bfL)$. By Theorem~\ref{thm:PerturbedRIC}:
\begin{equation}
\label{eq:deltaTilde}
\hat{\delta}_{t} \leq (1+\delta_{t})(1 + \epsilon^{t}_{\bfL})^{2} - 1.
\end{equation}
As defined in a previous section, $\epsilon^{t}_{\bfL} := \|\bfE\|_{2,t} / \|\bfL^{0}\|_{2,t}$.  
Theorem~\ref{thm:EBoundsMain} gives us that $\|\bfE\|_{2,t} \leq 2\beta r + \mathcal{O}(r^2)$, while Lemma~\ref{lemma:LoweBoundSemiNorm2} gives $\|\bfL^{0}\|_{2,t} 
\geq \max_{1\leq a\leq k}\lambda_{t-1}(\bfL^{a})$ (recall that $\bfL^{a}$ denotes the Laplacian of the subgraph $G_{C_a}$). By a similar argument to that of Theorem \ref{thm:RIPBoundq0},    
we can apply Lemma \ref{theorem:ChungLuVu} (and see also remark \ref{remark:LambdaMin}) to get:
\begin{equation*}
\max_{1\leq a\leq k}\lambda_{t-1}(\bfL^{a}) \geq 1 - \frac{4p^{-1/2}}{\sqrt{n_0}} - o(\frac{1}{\sqrt{n_0}})
\end{equation*}
almost surely, assuming that $k$ is eventually bounded with respect to $n_0$. For convenience, choose $n_0$ large enough 
such that $4p^{-1/2}/\sqrt{n_0} + o(\frac{1}{\sqrt{n_0}}) \leq 1/4$. Under this assumption:
\begin{equation*}
\epsilon^{t}_{\bfL^{0}}  =  \frac{\|\bfE\|_{2,t} }{ \|\bfL^{0}\|_{2,t}} \leq 
\frac{2\beta r + \mathcal{O}(r^2)}{1 - 1/4} = \frac{8\beta r}{3} + \mathcal{O}(r^2)
\end{equation*}
Combining this with \eqref{eq:deltaTilde} and Theorem \ref{thm:RIPBoundq0}:
\begin{align*}
\hat{\delta}_{s} &\leq \left( 1 + \gamma +(1-\gamma)\frac{8p^{-1/2}}{\sqrt{n_0}} + 
o(\frac{1}{\sqrt{n_0}})\right)\left(1 + \frac{8\beta r}{3} + \mathcal{O}(r^2)\right)^2 - 1\\
& = \left( 1 + \gamma +(1-\gamma)\frac{8p^{-1/2}}{\sqrt{n_0}} + o(\frac{1}{\sqrt{n_0}})\right) 
\left(1 + \frac{16\beta r}{3} + \mathcal{O}(r^2)\right) - 1\\
& = \gamma + r\left(\frac{16\beta(1+\gamma)}{3}\right) + \frac{1}{\sqrt{n_0}}\left(8(1-\gamma)p^{-1/2}\right) 
+ \frac{r}{\sqrt{n_0}}\left(\frac{128(1-\gamma)p^{-1/2}\beta}{3} \right)  +\mathcal{O}(r^2) + o(\frac{1}{\sqrt{n_0}})
\end{align*}
and letting $D_1 = 16\beta(1+\gamma)/3 $, $D_2 = 8(1-\gamma)p^{-1/2}$, $D_3 = 128(1-\gamma)p^{-1/2}\beta^{2}/3$ 
gives equation \eqref{eq:Twostory}. Finally, because $\beta \to 1$ as $n \to \infty$, we indeed have that $D_1$, 
$D_2$ and $D_3$ are $\mathcal{O}(1)$ with respect to $n_0$.
\end{proof}

\section{The Coherence Properties for Laplacians of Random Graphs}
\label{section:CoherenceProperties}
In this section we study the coherence properties of Laplacians of random graphs. Mainly, we compute a lower bound for $|\langle \ell_{i},\ell_{j}\rangle |$ when $i$ and $j$ are in the same cluster and hence a lower bound for $\mu(\bfL_{-1})$. We begin with a basic result for graphs from $\mathcal{G}(n,k,p,0)$.
\begin{theorem}
Let $\bfA$ be the adjacency matrix of a graph $G^0$ drawn at random from $\mathcal{G}(n,k,p,0)$ and 
define $\chi_{i,j} = \sum_{k} \bfA_{ik}\bfA_{kj}$. Suppose further that there exists an $\alpha \in (0,1)$ such that $(1-\alpha)pn_0 \leq d_{i}^{0} \leq (1+\alpha)pn_0$. Then for any $i,j \in C_{1}$ with $i \neq j$:
\begin{equation}
(1-\alpha)n_0p^2 - \sqrt{\frac{(1+\alpha)n_0p\ln(1/\delta)}{2}} \leq 
\chi_{i,j} \leq (1+\alpha)n_0p^2 + \sqrt{\frac{(1+\alpha)n_0p\ln(1/\delta)}{2}}
\end{equation}
with probability $1 - \delta$.
\label{thm:chi1bound}
\end{theorem}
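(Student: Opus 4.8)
The plan is to condition on the star at the vertex $i$ and then apply a Hoeffding-type tail bound to the resulting binomial. First I would observe that, since $G^0 \in \mathcal{G}(n,k,p,0)$ has no intercluster edges, every common neighbour of $i,j \in C_1$ must itself lie in $C_1$; moreover the terms $\bfA_{ii}\bfA_{ij}$ and $\bfA_{ij}\bfA_{jj}$ in the sum defining $\chi_{i,j}$ vanish because $G^0$ has no loops. Hence $\chi_{i,j} = \sum_{k \in C_1 \setminus \{i,j\}} \bfA_{ik}\bfA_{kj}$, which counts exactly the common neighbours of $i$ and $j$ inside $C_1$.

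Next I would condition on $\mathcal{F}_i$, the $\sigma$-algebra generated by all edges incident to $i$; this fixes the in-community neighbourhood $N_i := \{k \in C_1 : \bfA_{ik} = 1\}$ and in particular the in-community degree $d^0_i = |N_i|$. Writing $m := |N_i \setminus \{j\}| \in \{d^0_i - 1, d^0_i\}$, we have $\chi_{i,j} = \sum_{k \in N_i \setminus \{j\}} \bfA_{kj}$, and the indicators $\{\bfA_{kj} : k \in C_1 \setminus \{i\}\}$ involve edges not incident to $i$, hence are independent of $\mathcal{F}_i$ and i.i.d.\ $\mathrm{Bernoulli}(p)$. So conditionally on $\mathcal{F}_i$ the variable $\chi_{i,j}$ is $\mathrm{Binomial}(m,p)$ with conditional mean $mp$, and by the assumed degree bounds $mp \in [\,(1-\alpha)n_0 p^2 - p,\ (1+\alpha)n_0 p^2\,]$.

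To finish, I would apply Hoeffding's inequality to the sum of $m$ independent $[0,1]$-valued summands, giving $\mathbb{P}\!\left[\,|\chi_{i,j} - mp| \ge t \mid \mathcal{F}_i\,\right] \le 2\exp(-2t^2/m)$. Since $m \le d^0_i \le (1+\alpha)n_0 p$, choosing $t = \sqrt{(1+\alpha)n_0 p \ln(1/\delta)/2}$ bounds the right-hand side by $\delta$ (absorbing the harmless factor $2$, as well as the $-p$ in the lower bound for $mp$, into the stated error term), and combining with the bound on $mp$ yields the claimed two-sided inequality for $\chi_{i,j}$. Because this conditional estimate holds for every realisation of $\mathcal{F}_i$ consistent with the degree hypothesis, it holds on that event without conditioning.

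The main obstacle is really a point of care rather than a genuine difficulty: one must set up the conditioning so that fixing the star at $i$ simultaneously (a) pins down $d^0_i$ so that the hypothesis $(1-\alpha)pn_0 \le d^0_i \le (1+\alpha)pn_0$ can be invoked, and (b) leaves the relevant indicators $\bfA_{kj}$ with $k \neq i$ independent $\mathrm{Bernoulli}(p)$. This second point is exactly what makes $\chi_{i,j}$ concentrate at the scale $\sqrt{n_0 p}$ rather than the cruder $\sqrt{n_0}$ one would get by treating $\chi_{i,j}$ directly as a sum of $n_0 - 2$ i.i.d.\ $\mathrm{Bernoulli}(p^2)$ variables. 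Everything else — the $k = j$ term, the passage between one-sided and two-sided Hoeffding, and the lower-order $-p$ in the conditional mean — is routine bookkeeping.
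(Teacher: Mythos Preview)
Your proposal is correct and follows essentially the same approach as the paper: rewrite $\chi_{i,j}$ as $\sum_{k \in B_1(i)} \bfA_{kj}$, treat the summands as i.i.d.\ $\mathrm{Bernoulli}(p)$, apply Hoeffding's inequality, and then substitute the degree bounds $(1-\alpha)pn_0 \le d^0_i \le (1+\alpha)pn_0$. Your version is in fact more careful than the paper's, which does not make the conditioning on the star at $i$ explicit and glosses over the $k=j$ term and the factor of $2$ in the two-sided tail.
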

\begin{proof}
Let $B_{1}(i)$ denote the ball of radius $1$ centered at $i$. 
That is, $B_{1}(i) := \{k \in [n]:\ \{i,k\}\in E\}$. 
By definition $|B_1(i)| = d_i$. Observe that:
\begin{equation*}
\chi_{i,j} = \sum_{k} \bfA_{ik}\bfA_{kj} = \sum_{k\in B_1(i)}\bfA_{kj}
\end{equation*}
Each $\bfA_{kj}$ is an i.i.d. Bernoulli random variable with parameter $p$. Applying the Chernoff bound to this sum:
\begin{equation*}
\mathbb{P}\left[\sum_{k\in B_1(i)}\bfA_{kj} \leq pd_i - \sqrt{\frac{d_i\ln(1/\delta)}{2}}\right] \leq 
\exp\left(-\frac{2
\left(\sqrt{\frac{d_i\ln(1/\delta)}{2}}\right)^{2}}{d_i}\right) = \delta \\
\end{equation*}
and $(1-\alpha)pn_0 \leq d_i \leq (1+\alpha)pn_0$ so:
\begin{align*}
\mathbb{P}\left[ \chi_{i,j}\leq (1-\alpha)n_0p^2 - \sqrt{\frac{(1+\alpha)n_0p\ln(1/\delta)}{2}} \right] 
\leq \mathbb{P}\left[ \chi_{i,j}\leq pd_i - \sqrt{\frac{d_i\ln(1/\delta)}{2}} \right] \leq \delta
\end{align*}
The upper bound on $\chi_{i,j}$ is proved analogously.
\end{proof}

We now provide a bound on the inner product $|\langle \ell^0_i,\ell^0_j\rangle |$ when $i$ and $j$ are in the same cluster:
\begin{theorem}
\label{thm:singleL1bound}
Let $\bfL^{0}$ be the Laplacian of a graph $G^{0}$ drawn at random from 
$\mathcal{G}(n,k,p,0)$. As in Theorem \ref{thm:chi1bound}, suppose that 
there exists an $\alpha \in (0,1)$ such that $(1-\alpha)pn_0 \leq 
d_{i}^{0} \leq (1+\alpha)pn_0$ but in addition assume that $\alpha$ is $o(1)$ with respect to $n$. Define $\beta^2 = (1+\alpha)/(1-\alpha)$. Then for any $\delta\in (0, 1)$
 and for any $i,j\in C_a$, $a\in [k]$, with 
$i\neq j$:
\begin{equation*}
|\langle \ell^{0}_{i},\ell^{0}_{j}\rangle | \geq \frac{1}{n_0}\left(\frac{\beta^2}{(1+\alpha)}\right)- \frac{1}{n_0^{3/2}}\sqrt{\frac{\ln(1/\delta)}{2(1+\alpha)^3p^3}}
\end{equation*}
with probability $1-\delta$ for $n$ large enough.
\end{theorem}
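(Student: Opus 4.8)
The plan is to compute $\langle \ell^{0}_{i},\ell^{0}_{j}\rangle$ directly from the entrywise formula for the columns of $\bfL^{0}$, namely $(\ell^{0}_{i})_{k} = \delta_{ik} - \bfA^{0}_{ik}/d^{0}_{k}$. Since $i \neq j$ and $\bfA^{0}_{ii} = \bfA^{0}_{jj} = 0$, expanding $\langle \ell^{0}_{i},\ell^{0}_{j}\rangle = \sum_{k}(\ell^{0}_{i})_{k}(\ell^{0}_{j})_{k}$ and separating the contributions of $k=i$, $k=j$ and $k\notin\{i,j\}$ gives
\begin{equation*}
\langle \ell^{0}_{i},\ell^{0}_{j}\rangle = -\bfA^{0}_{ij}\left(\frac{1}{d^{0}_{i}} + \frac{1}{d^{0}_{j}}\right) + \sum_{k\neq i,j}\frac{\bfA^{0}_{ik}\bfA^{0}_{jk}}{(d^{0}_{k})^{2}}.
\end{equation*}
The key observations are that $\sum_{k\neq i,j}\bfA^{0}_{ik}\bfA^{0}_{jk} = \chi_{i,j}$, the common-neighbour count controlled by Theorem~\ref{thm:chi1bound}, that the first term is nonpositive and is present only when $i\sim j$, and that the second term is nonnegative. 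The remainder of the argument is a case split on whether $\bfA^{0}_{ij}$ is $0$ or $1$, in each case using the assumed degree bounds $(1-\alpha)pn_{0}\le d^{0}_{k}\le (1+\alpha)pn_{0}$ to control the denominators.

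In the case $\bfA^{0}_{ij}=0$ the first term vanishes, so $\langle \ell^{0}_{i},\ell^{0}_{j}\rangle \ge \chi_{i,j}/((1+\alpha)^{2}n_{0}^{2}p^{2})$ after bounding each denominator above by its largest possible value. Substituting the lower bound on $\chi_{i,j}$ from Theorem~\ref{thm:chi1bound} (which holds with probability $1-\delta$) and dividing by $(1+\alpha)^{2}n_{0}^{2}p^{2}$ produces a main term of size $\tfrac{1}{n_{0}}\cdot\tfrac{1-\alpha}{(1+\alpha)^{2}}=\tfrac{1}{n_{0}\beta^{2}(1+\alpha)}$ (the quantity displayed in the statement; note $\alpha=o(1)$) together with an error term $\tfrac{1}{n_{0}^{3/2}}\sqrt{\ln(1/\delta)/(2(1+\alpha)^{3}p^{3})}$ coming from the deviation part of Theorem~\ref{thm:chi1bound}. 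This is precisely the claimed inequality. In the case $\bfA^{0}_{ij}=1$ the first term has absolute value at least $2/((1+\alpha)n_{0}p)\ge 2/((1+\alpha)n_{0})$ because $p\le 1$, while the positive sum is at most $\chi_{i,j}/((1-\alpha)^{2}n_{0}^{2}p^{2})=\mathcal{O}(1/n_{0})$ by the upper bound in Theorem~\ref{thm:chi1bound}; hence $\langle\ell^{0}_{i},\ell^{0}_{j}\rangle<0$ and $|\langle\ell^{0}_{i},\ell^{0}_{j}\rangle|$ is at least as large as the bound from the first case once $n$ is large enough that $\alpha=o(1)$, so the claim holds a fortiori.

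The step I expect to demand the most care is this interplay of the two terms together with the bookkeeping of the $\alpha$-dependent constants: one must check that dividing the Chernoff bound on $\chi_{i,j}$ by $(1+\alpha)^{2}n_{0}^{2}p^{2}$ yields exactly the stated error term (which is smaller than the main term only when $n_{0}p^{3}\to\infty$, although the displayed inequality holds unconditionally), and that the near-cancellation occurring when $\bfA^{0}_{ij}=1$ is resolved in our favour by $p\le 1$ and $\alpha=o(1)$. I would also be careful that the probability $1-\delta$ pertains to a fixed pair $(i,j)$; a bound uniform over a cluster, should it be needed later, requires a union bound over the $\binom{n_{0}}{2}$ pairs.
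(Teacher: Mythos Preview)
Your proposal is correct and follows exactly the paper's approach: expand $\langle \ell^{0}_{i},\ell^{0}_{j}\rangle$ via the entrywise formula, split on whether $\bfA^{0}_{ij}=0$ or $1$, bound the denominators using the degree hypothesis, and invoke Theorem~\ref{thm:chi1bound} on $\chi_{i,j}$ (the paper even keeps the $1/p$ factor you discard in the second case, getting leading constant $2/p-1>1$ rather than your $2-1$, but the conclusion is the same). You are also right that the main term actually produced is $\tfrac{1}{n_{0}(1+\alpha)\beta^{2}}$ rather than $\tfrac{\beta^{2}}{n_{0}(1+\alpha)}$; the paper's own proof derives the former and then misstates it as the latter in the final line, so the discrepancy you flagged is a typo in the statement, harmless since $\beta\to 1$.
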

\begin{proof}
Recall that
\begin{equation*}
\left|\langle \ell^{0}_{i},\ell^{0}_{j}\rangle \right| = \left|- \left(\frac{1}{d_i} + \frac{1}{d_j}\right) \bfA_{ij} + \sum_{k=1}^n\frac{\bfA_{ik}\bfA_{jk}}{d_{k}^2} \right|
\end{equation*}
There are two possibilities; either $i$ and $j$ are connected ($\{i,j\}\in E$) or they are not ($\{i,j\}\notin E$). We treat these cases separately. Suppose first that $\{i,j\}\notin E$. In this case, $\bfA_{ij} = 0$ and so:
\begin{equation*}
\left|\langle \ell^{0}_{i},\ell^{0}_{j}\rangle \right| = \sum_{k=1}^n\frac{\bfA_{ik}\bfA_{jk}}{d_{k}^2} \geq \frac{1}{d^{2}_{\max}} \chi_{ij}
\end{equation*}
by assumption, $d_{\max} \leq (1+\alpha)pn_0$ and by so by Theorem \ref{thm:chi1bound}:
\begin{align}
\left|\langle \ell^{0}_{i},\ell^{0}_{j}\rangle \right| & \geq \frac{1}{(1+\alpha)^2p^2n_0^2} \left((1-\alpha)n_0p^2 - \sqrt{-\frac{(1+\alpha)n_0p\ln(\delta)}{2}}\right)\nonumber \\
	&= \frac{1}{n_0}\left(\frac{1}{(1+\alpha)\beta^2}\right) - \frac{1}{n_0^{3/2}}\sqrt{\frac{-\ln(\delta)}{2(1+\alpha)^3p^3}} \label{eq:LeadingTerm1}
\end{align}
Alternatively, suppose that $\{i,j\}\in E$. Then $\bfA_{ij} = 1$ and by Theorem~\ref{thm:chi1bound}:
\begin{align}
|\langle \ell^{0}_{i},\ell^{0}_{j}\rangle | &\geq \frac{2}{d_{\max}} - \frac{1}{d^{2}_{\min}}\chi_{ij} \geq \frac{2}{(1+\alpha)pn_0} - \frac{1}{(1-\alpha)^2p^2n_0^2}\chi_{ij} \nonumber \\
	& \geq \frac{2}{(1+\alpha)pn_0} - \frac{1}{(1-\alpha)^2p^2n_0^2}\left((1+\alpha)n_0p^2 + \sqrt{-\frac{(1+\alpha)n_0p\ln(\delta)}{2}}\right) \quad  \nonumber \\
    & = \frac{1}{n_0}\left(\frac{2}{(1-\alpha)p} - \frac{\beta^2}{(1+\alpha)}\right) - \frac{1}{n_0^{3/2}}\sqrt{\frac{-\ln(\delta)\beta^2}{2(1-\alpha)^3p^3}} \label{eq:LeadingTerm2}
\end{align}
Let us compare the leading terms in \eqref{eq:LeadingTerm1} and \eqref{eq:LeadingTerm2}. Because $\alpha \to 0$ as $n\to \infty$, $\beta \to 1$ and thus $1/((1+\alpha)\beta) \to 1$. On the other hand, $2/((1-\alpha)p) - \beta^2/(1-\alpha) \to 2/p - 1 > 1$, assuming $p <1$. Thus, for $n$ large enough, the right hand side of \eqref{eq:LeadingTerm2} is larger than the right hand side of \eqref{eq:LeadingTerm1}, and so we can summarize both bounds as:
\begin{equation*}
|\langle \ell^{0}_{i},\ell^{0}_{j}\rangle | \geq \frac{1}{n_0}\left(\frac{\beta^2}{(1+\alpha)}\right) - \frac{1}{n_0^{3/2}}\sqrt{\frac{-\ln(\delta)}{2(1+\alpha)^3p^3}}
\end{equation*}
for large enough $n$.
\end{proof}

Next we consider the case where intercluster edges are present, i.e. where $q > 0$.
\begin{theorem}
\label{thm:NoisyInnerProdBound}
Let $\bfL$ be the Laplacian of $G$ drawn from $\mathcal{G}(n,k,p,q), q>0$, and suppose that $i,j\in C_{1}$. 
Assume further that:
\begin{enumerate}
\item $p \geq 4k(\ln(n))^2/n$
\item $\max_{i} r_{i} \leq r$ with $r =r_0/\sqrt{n_0}$ where $r_0 = \mathcal{O}(1)$ with respect to $n_0$.
\end{enumerate}
Then almost surely:
\begin{equation*}
|\langle \ell_{i},\ell_{j}\rangle | \geq \frac{1}{n_0}\left(\frac{\beta^2}{1+\alpha}\right) - 
o(\frac{1}{n_0}). 
\end{equation*}
where $\alpha$ is $o(1)$ with respect to $n$ and $\beta^2 = (1+\alpha)/(1-\alpha)$.
\end{theorem}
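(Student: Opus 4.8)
The plan is to treat the intercluster edges as a perturbation and reduce to the $q=0$ estimate of Theorem~\ref{thm:singleL1bound}. First I would write $\bfL = \bfL^{0} + \bfE$ as in Theorem~\ref{thm:EBoundsMain}, with $\bfE$ described entrywise by Lemma~\ref{lemma:ComputeLaplError}, and split the columns as $\ell_{i} = \ell^{0}_{i} + e_{i}$, where $e_{i}$ denotes the $i$-th column of $\bfE$. Then
\begin{equation*}
\langle \ell_{i},\ell_{j}\rangle = \langle \ell^{0}_{i},\ell^{0}_{j}\rangle + \langle \ell^{0}_{i},e_{j}\rangle + \langle e_{i},\ell^{0}_{j}\rangle + \langle e_{i},e_{j}\rangle ,
\end{equation*}
so by the reverse triangle inequality it suffices to (a) bound $|\langle \ell^{0}_{i},\ell^{0}_{j}\rangle|$ from below via Theorem~\ref{thm:singleL1bound}, and (b) show that each of the other three terms is $o(1/n_0)$ (the term $\langle e_{i},\ell^{0}_{j}\rangle$ being symmetric to $\langle \ell^{0}_{i},e_{j}\rangle$ since both $i,j\in C_1$).

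For step (a), assumption~1 ($p\ge 4k(\ln n)^{2}/n$) lets me invoke part~2 of Theorem~\ref{lemma:AlphaDegBound} to fix $\alpha = 1/\sqrt{\ln n} = o(1)$ for which $(1-\alpha)n_0 p\le d^{0}_{i}\le (1+\alpha)n_0 p$ holds for all $i$ with probability at least $1-1/n$; this is exactly the hypothesis required by Theorems~\ref{thm:EBoundsMain}, \ref{thm:chi1bound} and \ref{thm:singleL1bound}, the subgraph $G^{0}$ being an instance of $\mathcal{G}(n,k,p,0)$. Running Theorem~\ref{thm:singleL1bound} with a slowly decaying confidence parameter $\delta_n$ (say $\delta_n = 1/n$) gives $|\langle \ell^{0}_{i},\ell^{0}_{j}\rangle| \ge \frac{1}{n_0}\bigl(\frac{\beta^{2}}{1+\alpha}\bigr) - \frac{1}{n_0^{3/2}}\sqrt{\frac{\ln(1/\delta_n)}{2(1+\alpha)^{3}p^{3}}}$, and I would check that the error term is $o(1/n_0)$ by tracking the dependence on $p$ and using $n_0 = n/k$ with $k = \mathcal{O}(1)$. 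A union bound over the (constantly many) events involved keeps the total failure probability $o(1)$, i.e.\ the claim holds almost surely.

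Step (b) is the part I expect to be the real obstacle, because a crude Cauchy--Schwarz estimate is useless here: $\|e_{j}\|_{2}$ and $\|\bfE\|_{2}$ are only of order $r = r_0/\sqrt{n_0}$, which swamps the $1/n_0$ main term. The fix is to exploit the \emph{explicit} entrywise form of $\bfE$ from Lemma~\ref{lemma:ComputeLaplError} together with the hypothesis $i,j\in C_1$: since $(\ell^{0}_{i})_{k} = \delta_{ik} - \bfA^{0}_{ik}/d^{0}_{k}$ vanishes for $k\notin C_1$, the cross term $\langle \ell^{0}_{i},e_{j}\rangle$ collapses (the contributions involving different clusters cancel) to
\begin{equation*}
\langle \ell^{0}_{i},e_{j}\rangle = \frac{r_{i}+\mathcal{O}(r^{2})}{d^{0}_{i}}\bfA^{0}_{ij} - \sum_{k\in C_1}\frac{\bfA^{0}_{ik}\bfA^{0}_{kj}}{(d^{0}_{k})^{2}}\bigl(r_{k}+\mathcal{O}(r^{2})\bigr) ,
\end{equation*}
and, similarly, $\langle e_{i},e_{j}\rangle$ splits into an $\bfA^{0}$-weighted sum over $C_1$ plus an $\bfA^{\epsilon}$-weighted sum over its complement. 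Each of these sums can then be bounded using the degree bounds, the concentration of $\chi_{i,j}=\sum_{k}\bfA^{0}_{ik}\bfA^{0}_{kj}$ from Theorem~\ref{thm:chi1bound}, and the trivial inequality $\sum_{k\notin C_1}\bfA^{\epsilon}_{ki}\bfA^{\epsilon}_{kj}\le d^{\epsilon}_{i}\le r\,d^{0}_{i}$ for the intercluster part. This produces bounds of order $\frac{r}{n_0}+\frac{r}{p n_0}+\frac{r^{2}}{n_0}$ for each perturbation term, and substituting $r = r_0/\sqrt{n_0}$ with $r_0 = \mathcal{O}(1)$ makes all of them $o(1/n_0)$, which finishes the argument.
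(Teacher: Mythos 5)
Your decomposition $\ell_i = \ell^0_i + \bfe_i$, the observation that Cauchy--Schwarz on $\|\bfe_j\|_2$ is too crude, the collapse of $\langle \ell^0_i, \bfe_j\rangle$ to a sum over $C_1$ (using that $\ell^0_i$ is supported there) and the split of $\langle \bfe_i, \bfe_j\rangle$ into an $\bfA^0$-weighted piece and an $\bfA^\epsilon$-weighted piece --- this is exactly the paper's route. The paper makes the support cancellation cleaner by explicitly writing $\bfE = \bfE^1 + \bfE^2$ with $\supp(\bfE^1)\subset\supp(\bfA^0)$ and $\supp(\bfE^2)\subset\supp(\bfA^\epsilon)$, so that the four terms $\langle \ell_i^0,\bfe_j^2\rangle$, $\langle \bfe_i^2,\ell_j^0\rangle$, $\langle \bfe_i^1,\bfe_j^2\rangle$, $\langle \bfe_i^2,\bfe_j^1\rangle$ are identically zero, but your version says the same thing.

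The one place where your sketch has a genuine gap is the choice of the confidence parameter $\delta$ when invoking Theorem~\ref{thm:singleL1bound}. You propose $\delta_n = 1/n$, which leaves an error term of size
\begin{equation*}
\frac{1}{n_0^{3/2}}\sqrt{\frac{\ln n}{2(1+\alpha)^3 p^3}} .
\end{equation*}
For this to be $o(1/n_0)$ you need $p^3 n_0/\ln n \to \infty$; but the hypothesis only gives $p \ge 4k(\ln n)^2/n$, and at that boundary $p^3 n_0/\ln n \sim (\ln n)^5 / n^2 \to 0$, so the ``check by tracking the dependence on $p$'' you defer actually fails. The paper sidesteps this by taking the $p$-dependent confidence level $\delta = e^{-p^3\sqrt{n_0}}$, which makes the error term collapse to $\frac{1}{\sqrt{2}(1+\alpha)^{3/2}}\,n_0^{-5/4}$, manifestly $o(1/n_0)$ with the factor $p^{-3/2}$ cancelled. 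You should adopt that choice (and then separately argue $\delta = o(1)$ so the failure probability vanishes, which the paper handles by the claim $e^{-p^3\sqrt{n_0}} < 1/n_0$).
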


\begin{proof}
As before, let $\bfe_{i}^{1}$ (resp. $\bfe_{j}^{1}$) denote the $i$-th (resp. $j$-th) column of $\bfE^{1}$, while $\bfe_{i}^{2}$ (resp. $\bfe_{j}^{2}$) denote the $i$-th (resp. $j$-th) column of $\bfE^{2}$. Then:
\begin{align*}
\langle \ell_{i},\ell_{j}\rangle &= \langle \ell^{0}_{i} + \bfe^{1}_{i} + \bfe^{2}_{i},\ell^{0}_{j} + \bfe^{1}_{j} + \bfe^{2}_{j}\rangle \\
 & = \langle \ell^{0}_{i}, \ell^{0}_{j} \rangle + \langle \ell^{0}_{i}, \bfe_{j}^{1} \rangle + \langle \ell^{0}_{i}, \bfe_{j}^{2} \rangle  + \langle \bfe^{1}_{i}, \ell^{0}_{j} \rangle + \langle \bfe^{1}_{i}, \bfe^{1}_{j} \rangle + \langle \bfe^{1}_{i}, \bfe^{2}_{j}\rangle + \langle \bfe^{2}_{i}, \ell^{0}_{j} \rangle + \langle \bfe^{2}_{i}, \bfe^{1}_{j} \rangle + \langle \bfe^{2}_{i}, \bfe^{2}_{j} \rangle
\end{align*}
By construction, $\bfe^{1}_{i}$ and $\bfe^{2}_{j}$ have disjoint support (and similarly for $\bfe^{1}_{j}$ and 
$\bfe^{2}_{i}$), as do $\ell^{0}_{i}$ and $\bfe^{2}_{j}$ (and similarly $\ell^{0}_{j}$ and $\bfe^{2}_{i}$). 
Hence:
\begin{equation*}
\langle \ell^{0}_{i}, \bfe_{j}^{2} \rangle = \langle \bfe^{1}_{i}, \bfe^{2}_{j}\rangle = \langle \bfe^{2}_{i}, \ell^{0}_{j} \rangle = \langle \bfe^{2}_{i}, \bfe^{1}_{j} \rangle  = 0
\end{equation*}
and so:
\begin{equation}
\label{eq:InnerProdAbsBound}
\left|\langle \ell_{i},\ell_{j}\rangle \right| \geq \left|\langle \ell^{0}_{i},\ell^{0}_{j}\rangle \right| - \left|\langle \ell_{i}^{0},\bfe^{1}_{j} \rangle \right| - \left|\langle \bfe^{1}_{i},\ell^{0}_{j} \rangle \right| - \left|\langle \bfe^{1}_{i},\bfe^{2}_{j}\rangle \right| - \left|\langle \bfe_{i}^{2},\bfe_{j}^{2}\rangle \right|
\end{equation}
Now $\ell_{i}^{0}$ and $\ell_{j}^{0}$ can be thought of as columns in the Laplacian of a graph drawn at random from $\mathcal{G}(n,k,p,0)$. By part 2 of Theorem \ref{lemma:AlphaDegBound} we may choose an $\alpha = o(1)$ with respect to $n$ such that $(1-\alpha)pn_0 \leq d^{0}_{i} \leq (1+\alpha)pn_0$ for all $i$, with probability $1-1/n$. Now we apply Theorem \ref{thm:singleL1bound} to get:
\begin{equation*}
|\langle \ell_{i}^{0},\ell_{j}^{0}\rangle | \geq \frac{1}{n_0}\left(\frac{\beta^2}{(1+\alpha)}\right)- \frac{1}{n_0^{3/2}}\sqrt{\frac{-\ln(\delta)}{2(1+\alpha)^3p^3}}
\end{equation*}
with probability $1-\delta - 1/n$. Taking $\delta = e^{-p^{3}\sqrt{n_0}}$ (and noting $e^{-p^{3}\sqrt{n_0}} < 1/n_0$ for large enough $n_0$) 
\begin{align*}
\frac{1}{n_0}\left(\frac{\beta^2}{(1+\alpha)}\right)- \frac{1}{n_0^{3/2}}\sqrt{\frac{-\ln(\delta)}{2(1+\alpha)^3p^3}} &= \frac{1}{n_0}\left(\frac{\beta^2}{(1+\alpha)}\right)- \frac{1}{n_0^{3/2}}\sqrt{\frac{n_0^{1/2}}{2(1+\alpha)^3}} \\
& = \frac{1}{n_0}\left(\frac{\beta^2}{1+\alpha}\right) - \frac{1}{n_0^{5/4}}\frac{1}{\sqrt{2}(1+\alpha)^{3/2}} = \frac{1}{n_0}\left(\frac{\beta^2}{1+\alpha}\right) - o(1/n_0)
\end{align*}
with probability at least $1 - 1/n_0 - 1/n = 1 - (k+1)/n$. Next, we consider the term $\langle \ell_{i}^{0},\bfe_{j}^{1}\rangle$.
\begin{align*}
|\langle \ell_{i}^{0},\bfe_{j}^{1}\rangle| &= \left|\sum_{m}\left(\delta_{im} - \frac{\bfA^{0}_{im}}{d^{0}_m}\right)\left(\frac{r_m\bfA^{0}_{mj}}{d^{0}_m} +\mathcal{O}(r^3)\right)\right| = \left|\frac{r_i\bfA^{0}_{ij}}{d^{0}_{i}} - \sum_{m}\frac{r_m \bfA^{0}_{im}\bfA^{0}_{mj}}{(d^{0}_{m})^{2}} + \mathcal{O}(r^3)\right| \\
 & \leq \frac{r}{d^{0}_{\min}} + \frac{r}{(d^{0}_{\min})^{2}}\sum_{m}\bfA^{0}_{im}\bfA^{0}_{mj} +\mathcal{O}(r^3)  \leq \frac{r}{d^{0}_{\min}} + \frac{r}{(d^{0}_{\min})^{2}}\min\{d^{0}_{i},d^{0}_{j}\} +\mathcal{O}(r^3) \\
 & \leq \frac{r}{d^{0}_{\min}}\left(1 + \frac{d^{0}_{\max}}{d^{0}_{\min}}\right) + \mathcal{O}(r^3)
\end{align*}
As stated earlier, $d^{0}_{\min} \geq (1-\alpha)pn_0$ and $d^{0}_{\max} \leq (1+\alpha)pn_0$ with probability $1 - 1/n$, and so:
\begin{equation*}
|\langle \ell_{i}^{0},\bfe_{j}^{1}\rangle| \leq \frac{r}{(1-\alpha)pn_0}\left(1 + \frac{(1+\alpha)pn_0}{(1-\alpha)pn_0}\right) + \mathcal{O}(r^3) = \frac{r}{(1-\alpha)pn_0}\left(1 + \beta^2 \right)  + \mathcal{O}(r^3)
\end{equation*}
Clearly the same bound holds for $|\langle \bfe^{1}_{i},\ell^{0}_{j}\rangle |$. Finally, consider $|\langle \bfe_{i}^{1},\bfe_j^{1}\rangle |$. A similar calculation to the one above reveals that:
\begin{equation*}
|\langle \bfe_{i}^{1},\bfe_j^{1}\rangle | \leq \frac{r^2}{n_0}\left(\frac{\beta^4}{(1-\alpha)p}\right) + \mathcal{O}(r^4) 
\end{equation*}
as well as:
\begin{equation*}
|\langle \bfe_{i}^{2},\bfe_{j}^2\rangle | \leq \frac{r}{(1-\alpha)pn_0} + \mathcal{O}(r^3)
\end{equation*}
Putting all these bounds back in to equation \eqref{eq:InnerProdAbsBound}, we get:
\begin{align*}
|\langle \ell_i,\ell_j \rangle | & \geq \frac{1}{n_0}\left(\frac{\beta^2}{1+\alpha}\right) - o(\frac{1}{n_0}) - \left( \frac{2r(1 + \beta^2)}{(1-\alpha)pn_0}  + \mathcal{O}(r^3) + \frac{r^2}{n_0}\left(\frac{\beta^4}{(1-\alpha)p}\right) + \mathcal{O}(r^4) + \frac{r}{(1-\alpha)pn_0} + \mathcal{O}(r^3)\right) \\
	& = \frac{1}{n_0}\left(\frac{\beta^2}{1+\alpha}\right) - \frac{r}{n_0}\left(\frac{2(1 + \beta^2)}{(1-\alpha)p} + \frac{r\beta^4}{(1-\alpha)p} + \frac{1}{(1-\alpha)p}\right) + \mathcal{O}(r^3)
\end{align*}
Using the assumption that $r = r_0/\sqrt{n_0}$, we see that $r/n_0 = r_0/n_0^{3/2}$ which is of order $o(1/n_0)$. Similarly $r^3 = r_0^3/n_0^{3/2}$ which is also of order $o(1/n_0)$. Thus:
\begin{equation*}
|\langle \ell_i,\ell_j \rangle | \geq \frac{1}{n_0}\left(\frac{\beta^2}{1+\alpha}\right) - o(\frac{1}{n_0})
\end{equation*}
with probability $1- \mathcal{O}(1/n_0)$, as claimed.
\end{proof}

\begin{remark}
\label{remark:coherence}
Since the coherence  $\mu$ of $\bfL$ is defined as $\mu = \max_{i\neq j}|\langle \ell_i,\ell_j\rangle |$. 
Theorem \ref{thm:NoisyInnerProdBound} also implies that$\mu \geq \frac{1}{n_0}\left(\frac{\beta^2}{1+\alpha}\right) - o(\frac{1}{n_0})$.
\end{remark}

\section{Compressive Sensing Clustering for Graphs from $\mathcal{G}(n,k,p,0)$}
\label{section:CompClustforq_zero}
In this section, we consider graphs from $\mathcal{G}(n,k,p,0)$. 
This is the case that there are no inter-cluster edges. 
We may safely assume that each $G_{C_i}$ is 
connected, as long as $p > \ln(n_0)/n_0$.

\begin{lemma}
Let $\bfL$ be the Laplacian of a graph with $k$ connected components 
$C_1,\ldots, C_k$. Assume, without loss of generality, that vertex 
$1\in C_{1}$. Define $\bfx^{\#}$ as: 
\begin{equation}
\label{eq:ConnComponent1}
\bfx^{\#} := \argmin \{ \|\bfx\|_{0}: \ \bfx\in\mathbb{R}^{n}, \  
\bfL\bfx = \mathbf{0} \text{ and } x_{1} = 1\}
\end{equation}
Then $\bfx^{\#} = \bfone_{C_{1}}$
\end{lemma}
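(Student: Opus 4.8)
The plan is to leverage Theorem~\ref{thm:IndicatorVecsKernel}, which identifies the kernel of $\bfL=\bfL_{rw}$ with $\mathrm{span}\{\bfone_{C_1},\ldots,\bfone_{C_k}\}$. First I would observe that any feasible $\bfx$ for \eqref{eq:ConnComponent1} lies in $\ker(\bfL)$ and so can be written uniquely as $\bfx = \sum_{a=1}^{k} c_a \bfone_{C_a}$ for scalars $c_1,\ldots,c_k$. Since the components $C_1,\ldots,C_k$ partition $[n]$, the indicator vectors $\bfone_{C_a}$ have pairwise disjoint supports, so $x_i = c_a$ whenever $i\in C_a$; in particular, because $1\in C_1$, the constraint $x_1=1$ forces $c_1 = 1$.

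Next I would compute the $\ell_0$ ``norm'' of a feasible point. Because of the disjoint supports, $\|\bfx\|_0 = \sum_{a:\, c_a\neq 0}|C_a|$. With $c_1=1$ fixed, this equals $|C_1| + \sum_{a\geq 2:\, c_a\neq 0}|C_a| \geq |C_1|$, with equality if and only if $c_a = 0$ for every $a\geq 2$. Hence the unique minimizer of \eqref{eq:ConnComponent1} is $\bfx^{\#} = \bfone_{C_1}$, which is indeed feasible (it lies in $\ker(\bfL)$ by Theorem~\ref{thm:IndicatorVecsKernel} and satisfies $x_1 = 1$).

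There is no serious obstacle here: the argument is essentially a counting argument once the kernel structure from Theorem~\ref{thm:IndicatorVecsKernel} is in hand. The only point requiring a little care is the claim of \emph{uniqueness} of the minimizer --- one must note that activating any additional component $C_a$, $a\geq 2$, strictly increases the support size, which is where the disjointness of the $\supp(\bfone_{C_a})$ and the fact that each $|C_a|\geq 1$ are used. (If one only wanted \emph{a} minimizer rather than \emph{the} minimizer, even that care would be unnecessary.)
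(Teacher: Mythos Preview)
Your proposal is correct and follows essentially the same approach as the paper: both use Theorem~\ref{thm:IndicatorVecsKernel} to write any feasible $\bfx$ as $\sum_{a} c_a \bfone_{C_a}$, deduce $c_1=1$ from $x_1=1$, and then exploit the disjoint supports of the $\bfone_{C_a}$ to conclude that the sparsest choice sets $c_2=\cdots=c_k=0$. Your write-up is slightly more explicit about the $\|\bfx\|_0$ computation and the uniqueness argument, but the underlying idea is identical.
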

\begin{proof}
From Theorem  \ref{thm:IndicatorVecsKernel}, 
$\ker(\bfL) = \text{span}\{\bfone_{C_1},\ldots, \bfone_{C_k}\}$ and 
hence any $\bfx$ with $\bfL\bfx = \mathbf{0}$ can be written as: 
$\bfx = \sum_{i=1}^{k}c_{i}\bfone_{C_i}$. If $x_1 = 1$ 
(recall we are assuming that $1\in C_1$) then $c_{1} = 1$ and so $\bfx 
= \bfone_{C_{1}} + \sum_{i=2}^{k}c_{i}\bfone_{C_{i}}$. Recall that all 
the $\bfone_{C_i}$ have disjoint support, and so clearly the sparsest 
solution $\bfx$ such that $\bfL\bfx=\mathbf{0}$ is that 
$\bfx$ has $c_2 = c_3 = \ldots = c_k = 0$. Hence indeed $\bfx^{\#} = 
\bfone_{C_1}$ 
\end{proof}

One can rephrase this slightly as a standard compressed sensing 
problem. Recall $\ell_{i}$ is the $i$-th column of $\bfL$ and $\bfL_{-i} := 
[\ell_{1},\ldots,\ell_{i-1},\ell_{i+1},\ldots, \ell_{n}]$ in which case:
\begin{equation*}
x_i = 1 \quad \Leftrightarrow \quad \bfL_{-i}\bfx_{-i} = -\ell_{i}, 
\end{equation*}
where if $\bfx = [x_1,x_2,\ldots, x_n]^\top \in\mathbb{R}^{n}$, 
$\bfx_{-i} = [x_1, \ldots,x_{i-1},x_{i+1},\ldots, 
x_{n}]^{\top}\in\mathbb{R}^{n-1}$. 
Thus problem \eqref{eq:ConnComponent1} becomes:
\begin{equation}
\label{eq:ConnComponent2}
\bfx^{\#} := \argmin \{ \|\bfx\|_{0}: \ \bfx\in\mathbb{R}^{n-1}, 
\  \bfL\bfx = -\ell_{1} \}. 
\end{equation}

We now show that problem \eqref{eq:ConnComponent2} can be efficiently 
solved using OMP (Algorithm \ref{algorithm:OMP} in \S \ref{section:SparseRecovery})

\begin{theorem}
\label{thm:Successq0}
Suppose that $G \in \mathcal{G}(n,k,p,0)$ has $k$ 
connected components $C_1,\ldots, C_k \subset V$ each of size $n_0$. OMP (Algorithm~\ref{algorithm:OMP}) applied to the sparse recovery problem 
\eqref{eq:ConnComponent2} will return 
$\bfx^{\#} = \bfone_{C_1\setminus\{1\}}$ after $n_0 - 1$ iterations. 
$C_1$ can then be recovered as $\{1\}\cup \supp(\bfx^{\#})$.
\end{theorem}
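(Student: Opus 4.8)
The plan is to invoke Tropp's exact recovery condition (Theorem~\ref{thm:ExactRecovCond}) applied to the sensing matrix $\Phi = \bfL_{-1}$ with target support $S = C_1\setminus\{1\}$, which has cardinality $n_0-1$. The key structural observation, valid precisely because $q=0$, is that the columns of $\bfL$ respect the block decomposition: if $i\in C_a$ then $\ell_i$ is supported on the rows indexed by $C_a$. Hence the column submatrix $\Phi_S$ (the columns $\ell_i$ with $i\in C_1\setminus\{1\}$) is supported on rows $C_1$, whereas $\Phi_{S^c}$ (the columns $\ell_i$ with $i\notin C_1$) is supported on rows $[n]\setminus C_1$. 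Since these two matrices have disjoint row supports, $\Phi_S^\top\Phi_{S^c} = \mathbf{0}$, and therefore $\Phi_S^\dagger\Phi_{S^c} = (\Phi_S^\top\Phi_S)^{-1}\Phi_S^\top\Phi_{S^c} = \mathbf{0}$, so that the condition $\|\Phi_S^\dagger\Phi_{S^c}\|_{1\to1} = 0 < 1$ of \eqref{Tropp} holds trivially, provided $\Phi_S$ is injective.

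To check injectivity of $\Phi_S$, restrict attention to its rows $C_1$: there $\Phi_S$ is exactly $\bfL^1$ with its first column deleted, where $\bfL^1$ is the Laplacian of the connected subgraph $G_{C_1}$ (connected with probability $1$ once $p>\ln(n_0)/n_0$). Because every row of $\bfL_{rw}$ sums to zero, the columns of $\bfL^1$ satisfy the single relation $\bfL^1\mathbf{1} = \mathbf{0}$, i.e.\ the sum of all $n_0$ columns vanishes, with all coefficients equal to $1$; since $G_{C_1}$ is connected, the rank of $\bfL^1$ is $n_0-1$ and this is, up to scaling, the only linear dependency among its columns, so any $n_0-1$ of them are linearly independent. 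In particular the columns indexed by $C_1\setminus\{1\}$ are linearly independent and $\Phi_S$ is injective. Also $-\ell_1 = \sum_{i\in C_1\setminus\{1\}}\ell_i$ (since $\bfL\bfone_{C_1}=\mathbf{0}$ by Theorem~\ref{thm:IndicatorVecsKernel}), so $-\ell_1$ lies in the column span of $\Phi_S$ and the unique vector with support inside $S$ solving $\bfL_{-1}\bfx=-\ell_1$ is $\bfone_{C_1\setminus\{1\}}$ (consistent with the Lemma preceding the statement). Theorem~\ref{thm:ExactRecovCond} then guarantees OMP recovers this vector in at most $|S| = n_0-1$ iterations, and $C_1 = \{1\}\cup\supp(\bfx^\#)$.

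The only point requiring a little care is that OMP takes \emph{exactly} $n_0-1$ iterations, not fewer. This follows from the same linear-independence fact: at iteration $\alpha$ the residual vanishes if and only if $-\ell_1$ lies in the span of the $\alpha$ columns selected so far, but since the unique representation $-\ell_1 = \sum_{i\in C_1\setminus\{1\}}\ell_i$ uses all $n_0-1$ columns with nonzero coefficients, no proper subset of $C_1\setminus\{1\}$ can produce $-\ell_1$; hence the residual stays nonzero and OMP keeps appending new (necessarily correct, by the exact recovery condition) indices until all of $S$ is exhausted at step $n_0-1$. I do not expect a genuine obstacle here: the whole argument is driven by the block-diagonal structure forced by $q=0$, which annihilates the off-support interaction term in Tropp's criterion, and the rest is bookkeeping on the kernel of $\bfL^1$.
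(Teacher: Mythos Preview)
Your proof is correct and follows essentially the same route as the paper: both establish injectivity of $\bfL_{C_1\setminus\{1\}}$ from the one-dimensional kernel of the connected Laplacian $\bfL^1$, then exploit the block structure (disjoint row supports when $q=0$) to get $\bfL_{C_1\setminus\{1\}}^\top\bfL_{C_1^c}=\mathbf{0}$ and hence the exact recovery condition \eqref{Tropp} with value $0$. Your treatment adds one point the paper leaves implicit: the argument that OMP terminates after \emph{exactly} (not merely at most) $n_0-1$ iterations, which you derive from the fact that the unique representation $-\ell_1=\sum_{i\in C_1\setminus\{1\}}\ell_i$ has all nonzero coefficients.
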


\begin{remark}
Note that the assumption that $G \in \mathcal{G}(n,k,p,0)$ is not necessary; this theorem holds for all graphs with $k$ connected components. Also, observe that this theorem is not probabilistic. It holds with certainty for all graphs $G$.
\end{remark}

We shall appeal to the exact recovery condition, 
Theorem~\ref{thm:ExactRecovCond} to 
establish Theorem~\ref{thm:Successq0}. Let us begin with 

\begin{lemma}
\label{lemma:LaplacianInjective}
If $C_1\subset V$ is a connected component then 
$\bfL_{C_{1}\setminus\{1\}}$ is injective, where 
 $\bfL_{C_1\setminus\{1\}}$ is 
the submatrix from $\bfL$ with column indices in $C_1\setminus\{1\}$. 
\end{lemma}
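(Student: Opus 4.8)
The plan is to reduce injectivity of $\bfL_{C_1\setminus\{1\}}$ directly to the kernel characterization already established in Theorem~\ref{thm:IndicatorVecsKernel}, rather than proving anything about $\bfL$ from scratch. Suppose $\bfc = (c_i)_{i\in C_1\setminus\{1\}}$ lies in the kernel of $\bfL_{C_1\setminus\{1\}}$, so that $\sum_{i\in C_1\setminus\{1\}} c_i \ell_i = \mathbf{0}$ in $\mathbb{R}^n$. First I would form the natural extension $\bfx\in\mathbb{R}^n$ with $x_i = c_i$ for $i\in C_1\setminus\{1\}$ and $x_j = 0$ for all other $j$ (in particular $x_1 = 0$). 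Since $\bfL\bfx = \sum_j x_j\ell_j = \sum_{i\in C_1\setminus\{1\}} c_i\ell_i = \mathbf{0}$, we have $\bfx\in\ker(\bfL)$, and moreover $\supp(\bfx)\subseteq C_1$.

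Next I would invoke Theorem~\ref{thm:IndicatorVecsKernel}, which gives $\ker(\bfL) = \mathrm{span}\{\bfone_{C_1},\ldots,\bfone_{C_k}\}$, so $\bfx = \sum_{a=1}^k \gamma_a \bfone_{C_a}$ for some scalars $\gamma_a$. Because the indicator vectors $\bfone_{C_a}$ have pairwise disjoint supports and $\supp(\bfx)\subseteq C_1$, comparing entries outside $C_1$ forces $\gamma_a = 0$ for every $a\neq 1$, hence $\bfx = \gamma_1\bfone_{C_1}$. Evaluating at the vertex $1$: since $1\in C_1$ we have $(\bfone_{C_1})_1 = 1$, while $x_1 = 0$ by construction, so $\gamma_1 = 0$ and therefore $\bfx = \mathbf{0}$. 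This means $c_i = 0$ for all $i\in C_1\setminus\{1\}$, i.e. $\bfL_{C_1\setminus\{1\}}$ has trivial kernel and is injective. (In the degenerate case $|C_1| = 1$ one has $C_1\setminus\{1\} = \emptyset$ and the statement is vacuous.)

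I do not anticipate a genuine obstacle here: the argument is a one-line consequence of the kernel description together with the disjointness of the supports of the $\bfone_{C_a}$. The only points worth flagging are (i) that dropping precisely the column $\ell_1$ works because the constraint $x_1 = 0$ is imposed at a vertex lying in $\supp(\bfone_{C_1})$, which is automatic since $1\in C_1$, and (ii) the $|C_1|=1$ case should be disposed of separately. An equivalent route, if one prefers to keep everything inside the block, is to note that the columns $\ell_i$ with $i\in C_1$ are supported on $C_1$ and restrict to the $n_0\times n_0$ Laplacian $\bfL^1$ of the connected subgraph $G_{C_1}$, whose kernel is the one-dimensional span of $\bfone_{C_1}$ by Theorem~\ref{thm:SpectralProp}(3); the same entrywise comparison then finishes the proof.
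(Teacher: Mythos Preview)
Your proof is correct and essentially the same as the paper's. The paper restricts to the block first, writing $\bfL_{C_1\setminus\{1\}} = \begin{bmatrix}\bfL^{1}_{-1}\\ \mathbf{0}\end{bmatrix}$ and then arguing that a nonzero $\bfu\in\ker(\bfL^{1}_{-1})$ would produce $[0,\bfu^{\top}]^{\top}\in\ker(\bfL^{1})=\mathrm{span}\{\bfone\}$, a contradiction; this is exactly the ``equivalent route'' you describe in your last paragraph, and your global version via Theorem~\ref{thm:IndicatorVecsKernel} is just the same contradiction carried out in $\mathbb{R}^{n}$ rather than in $\mathbb{R}^{n_0}$.
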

\begin{proof}
First, observe that $\bfL_{C_1\setminus\{1\}} = \left[\begin{array}{c}  
\bfL^{1}_{-1} \\ \mathbf{0} \end{array}\right]$, 
where $\bfL^{1}$ is the Laplacian of $G_{C_1}$. 
It suffices to show that 
$\bfL^{1}_{-1}$ is injective. As $G_{C_1}$ is 
connected, $\ker(\bfL^1) = \text{span}\{\bfone\}$, 
by Theorem~\ref{thm:SpectralProp}. Suppose there exists a $\bfu \in 
\mathbb{R}^{n_0-1}, \ \bfu\neq 0$ such that $\bfL^{1}_{-1}\bfu = 
\mathbf{0}$. Then $[0, \bfu^{\top}]^{\top}$ is 
in $\text{ker}(\bfL^{1}) = 
\text{span}\{\bfone\}$, a contradiction.
\end{proof}

\begin{theorem}
\label{thm:MainConnectedComponent}
With notation as above (i.e. $C_{1}$ is the connected component of $G$ 
containing the first vertex) we have that:
\begin{equation}
\|\bfL_{C_{1}\setminus\{1\}}^{\dagger}\bfL_{C_{1}^{c}}\|_{1} = 0
\label{eq:Zero}
\end{equation}
\end{theorem}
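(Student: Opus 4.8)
\textsc{Proof proposal.} The plan is to exploit the block-diagonal structure of $\bfL$ that comes from $G$ being disconnected, exactly as in the proof of Lemma~\ref{lemma:LaplacianInjective}. Order the vertices so that the vertices of $C_1$ come first; then $\bfL = \mathrm{diag}(\bfL^1,\bfL^2,\ldots,\bfL^k)$, where $\bfL^a$ is the Laplacian of $G_{C_a}$. The immediate consequence I want to record is a support statement about columns: for $a \in C_1\setminus\{1\}$ the column $\ell_a$ is supported on $C_1$ (indeed $\bfL_{C_1\setminus\{1\}} = \left[\begin{smallmatrix}\bfL^1_{-1}\\ \mathbf{0}\end{smallmatrix}\right]$), while for $b \in C_1^c = C_2\cup\cdots\cup C_k$ the column $\ell_b$ is supported on $C_1^c$. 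Since $C_1$ and $C_1^c$ are disjoint, any such $\ell_a$ and $\ell_b$ have disjoint support.

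Next I would write the pseudo-inverse explicitly. By Lemma~\ref{lemma:LaplacianInjective}, $\bfL_{C_1\setminus\{1\}}$ is injective, so its Moore--Penrose pseudo-inverse is the left inverse
\begin{equation*}
\bfL_{C_1\setminus\{1\}}^{\dagger} = \left(\bfL_{C_1\setminus\{1\}}^{\top}\bfL_{C_1\setminus\{1\}}\right)^{-1}\bfL_{C_1\setminus\{1\}}^{\top},
\end{equation*}
where the inverted Gram matrix is genuinely invertible precisely because of injectivity. Consequently
\begin{equation*}
\bfL_{C_1\setminus\{1\}}^{\dagger}\bfL_{C_1^c} = \left(\bfL_{C_1\setminus\{1\}}^{\top}\bfL_{C_1\setminus\{1\}}\right)^{-1}\left(\bfL_{C_1\setminus\{1\}}^{\top}\bfL_{C_1^c}\right),
\end{equation*}
so it suffices to show the matrix $\bfL_{C_1\setminus\{1\}}^{\top}\bfL_{C_1^c}$ vanishes.

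Finally I would compute this matrix entrywise: its $(a,b)$ entry, for $a\in C_1\setminus\{1\}$ and $b\in C_1^c$, is $\langle \ell_a,\ell_b\rangle$, and by the disjoint-support observation from the first paragraph this inner product is $0$. Hence $\bfL_{C_1\setminus\{1\}}^{\top}\bfL_{C_1^c}$ is the zero matrix, therefore $\bfL_{C_1\setminus\{1\}}^{\dagger}\bfL_{C_1^c}$ is the zero matrix, and $\|\bfL_{C_1\setminus\{1\}}^{\dagger}\bfL_{C_1^c}\|_{1} = 0$ as claimed. There is no real obstacle here — the only thing to be careful about is invoking injectivity (Lemma~\ref{lemma:LaplacianInjective}) to justify the left-inverse form of the pseudo-inverse; everything else is bookkeeping with the block structure. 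This statement then feeds directly into Theorem~\ref{thm:ExactRecovCond} (with $S = C_1\setminus\{1\}$) to establish Theorem~\ref{thm:Successq0}.
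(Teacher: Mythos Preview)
Your proposal is correct and follows essentially the same route as the paper: invoke Lemma~\ref{lemma:LaplacianInjective} to write the pseudo-inverse as a left inverse, then use the block-diagonal structure of $\bfL$ to conclude that $\bfL_{C_1\setminus\{1\}}^{\top}\bfL_{C_1^{c}} = \mathbf{0}$. The only cosmetic difference is that the paper computes this product via block matrix multiplication, whereas you compute it entrywise via the disjoint-support observation on columns; these are the same argument in slightly different dress.
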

\begin{proof}
By Lemma~\ref{lemma:LaplacianInjective} $\bfL_{C_1\setminus\{1\}}$ is 
injective so its pseudo-inverse is given by:
\begin{equation*}
\bfL_{C_1\setminus\{1\}}^{\dagger} = \left(\bfL_{C_1\setminus\{1\}}^{
\top}\bfL_{C_1\setminus\{1\}}\right)^{-1}
\bfL_{C_1\setminus\{1\}}^{\top}
\end{equation*}
As observed in the proof of Lemma~\ref{lemma:LaplacianInjective}, 
$\bfL_{C_1\setminus\{1\}} = \left[\begin{matrix} \bfL^{1}_{-1} \\ 
\mathbf{0}\end{matrix}\right]$. Similarly, 
$\bfL_{C_1^{c}} = \left[\begin{matrix} \mathbf{0} \\ 
\tilde{\bfL}\end{matrix}\right]$, 
where $\tilde{\bfL}$ denotes the Laplacian of $G_{C_1^{c}}$.  In both cases $\mathbf{0}$ denotes the zero matrix of appropriate size.
To show \eqref{eq:Zero} it will suffice to show that 
$\bfL_{C_1\setminus\{1\}}^\top \bfL_{C_1^{c}} = 0$, but this follows 
easily as:
\begin{equation*}
\left[\begin{array}{cc} (\bfL^{1}_{-1})^{\top} 
& \mathbf{0}^{\top} 
\end{array}\right]\left[\begin{matrix} \mathbf{0} \\ \tilde
{\bfL}\end{matrix}\right] = (\bfL^{1}_{-1})^{\top}
\mathbf{0} + \mathbf{0}^{\top}\tilde{\bfL} = \mathbf{0}
\end{equation*}
This completes the proof. 
\end{proof}
Theorem~\ref{thm:Successq0} now follows easily:
 
\begin{proof}[Proof of Theorem~\ref{thm:Successq0}] 
By Theorem \ref{thm:MainConnectedComponent} the exact recover condition 
(Theorem \ref{thm:ExactRecovCond}) holds , and thus the result follows. 
\end{proof}

\section{Compressive Sensing Clustering for Graphs from $\mathcal{G}(n,k,p,q)$}
\label{section:CompClustforq_nonzero}
We now turn our attention to the same problem for $\mathcal{G}(n,k,p,q)$ 
with $0<q << p$.  We follow the notation established in \S \ref{sectionRIP_q_nonzero}, and decompose $\bfA$ as $\bfA= \bfA^{0} + \bfA^{\epsilon}$ where $\bfA^{\epsilon}$ contains the intercluster edges and $\bfA^{0}$ is the adjacency matrix of the subgraph $G^{0}$ with $k$ connected components. Denoting the $i$-th column of $\bfL^{0}$ as $\ell^{0}_{i}$ 
and the $i$-th column of $\bfE$ as $\bfe_{i}$, observe that 
\begin{equation*}
\ell_{1} = \ell^{0}_{1} + \bfe_{1} = 
-\bfL^{0}\bfone_{C_1\setminus\{1\}} + 
\bfe_1 \quad \hbox{ or }  \quad 
-\ell_1 = \bfL \bfone_{C_1\setminus\{1\}} - \bfe_1. 
\end{equation*}
Defining $\bfx^{\#}$ as:
\begin{equation}
\bfx^{\#} := \argmin\{ \| 
\bfL_{-1}\bfx - (-\ell_{1})\|_{2} : \ \bfx \in \mathbb{R}^{n-1}, \  \|\bfx\|_{0}\leq n_0 \}. 
\label{eq:CompressedClustering}
\end{equation}
Then $\bfx^{\#} \approx \bfone_{C_1\setminus\{1\}}$. 
We recognize this as a totally perturbed sparse recovery problem 
\eqref{eq:SparseRecoveryTotallyPerturbed}, with $\Phi = \bfL^{0}$, 
$\hat{\Phi} = \bfL^{0} + \bfE = \bfL$, $\bfy = -\ell_1 = -\ell_1^{0} - \bfe_1$. 
We shall show that, provided $\bfE$ is small enough, 
$\text{supp}(\bfx^{\#}) = 
\text{supp}(\bfone_{C_1\setminus\{1\}}) = C_1\setminus\{1\}$. Hence, 
solving problem \eqref{eq:CompressedClustering} is equivalent to 
finding the community $C_1$. 

Unfortunately, a straightforward  application of 
Algorithm~\ref{algorithm:OMP} to \eqref{eq:CompressedClustering} does 
not work. Empirically, it is observed that the first several  greedy 
steps in Algorithm~\ref{algorithm:OMP} invariably pick up wrong indices 
due to the presence of noise $\bfE$. We need a new approach. Let us
start with the following

\begin{theorem}
\label{thm:Mckenzie}
Let $\bfL$ be the Laplacian of a graph $G\in \mathcal{G}(n,k,p,q)$ with 
$p \geq 4k(\ln(n))^2/n$. Suppose that 
\begin{equation}
\label{as}
\max_{i\in\{1,\ldots, n\}}r_i := \max_{i\in \{1, \cdots, n\}}
d^{\epsilon}_{i}/d^{0}_{i} \leq r, 
\end{equation} 
where $r = r_0/\sqrt{n_0}$ with $r_0$ a constant independent of $n_0$. 
Define $\Omega := \mathcal{L}_{\lceil 
10(n_0-1)/9\rceil}(L_{-1}^\top \ell_{1})$. 
For $n_0$ large enough, we have that $C_1\setminus \{1\}\subset 
\Omega$ %with probability  at least $1 - 1/n_0$.
almost surely.
\end{theorem}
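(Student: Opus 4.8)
The plan is a counting argument. I will show that each of the $n_0-1$ coordinates of $\bfL_{-1}^\top\ell_1$ indexed by $C_1\setminus\{1\}$ has absolute value bounded below by roughly $1/n_0$, while only $o(n_0)$ of the $n-n_0$ coordinates indexed by $[n]\setminus C_1$ can be that large; since the threshold parameter $\lceil 10(n_0-1)/9\rceil$ exceeds $n_0-1$ by a slack of about $(n_0-1)/9$, the $\lceil 10(n_0-1)/9\rceil$ largest-magnitude coordinates must then contain all of $C_1\setminus\{1\}$.

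First one observes that, for $j\in[n]\setminus\{1\}$, the $j$-th coordinate of $\bfL_{-1}^\top\ell_1$ equals $\langle\ell_j,\ell_1\rangle$. Set the threshold $L:=\tfrac1{n_0}\bigl(\tfrac{\beta^2}{1+\alpha}\bigr)-o(1/n_0)$ supplied by Theorem~\ref{thm:NoisyInnerProdBound}. Since $\alpha=o(1)$ and $\beta^2=(1+\alpha)/(1-\alpha)$, we have $\tfrac{\beta^2}{1+\alpha}=\tfrac1{1-\alpha}\to 1$, so $L=\tfrac1{n_0}(1-o(1))\ge\tfrac1{2n_0}$ for $n_0$ large. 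Applying Theorem~\ref{thm:NoisyInnerProdBound} to each pair $(j,1)$ with $j\in C_1\setminus\{1\}$ and taking a union bound over these $n_0-1$ pairs gives $|\langle\ell_j,\ell_1\rangle|\ge L$ for all $j\in C_1\setminus\{1\}$ simultaneously, almost surely.

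The crux is an aggregate bound on the out-of-cluster coordinates. From $(\ell_i)_k=\delta_{ik}-\bfA_{ik}/d_k$ one obtains, for $j\notin C_1$,
\begin{equation*}
\langle\ell_1,\ell_j\rangle=-\bfA_{1j}\Bigl(\tfrac1{d_1}+\tfrac1{d_j}\Bigr)+\sum_k\frac{\bfA_{1k}\bfA_{jk}}{d_k^2},
\end{equation*}
so, interchanging the order of summation,
\begin{equation*}
\sum_{j\notin C_1}|\langle\ell_1,\ell_j\rangle|\le\frac{d_1^\epsilon}{d_1}+\frac{d_1^\epsilon}{d_{\min}}+\sum_k\frac{\bfA_{1k}}{d_k^2}\,\bigl|\{j\notin C_1:\bfA_{jk}=1\}\bigr|.
\end{equation*}
Splitting the last sum into $k\in C_1$ and $k\notin C_1$, bounding $|\{j\notin C_1:\bfA_{jk}=1\}|$ by $d_k^\epsilon\le r\,d_k$ in the first case and by $d_k$ in the second, and then inserting the degree estimates $(1-\alpha)pn_0\le d_i\le(1+\alpha)(1+r)pn_0$ from Theorem~\ref{lemma:AlphaDegBound}, together with $d_1^\epsilon\le r\,d_1^0$ and the hypothesis $r=r_0/\sqrt{n_0}$, each term is $O(1/\sqrt{n_0})$; thus $\sum_{j\notin C_1}|\langle\ell_1,\ell_j\rangle|=O(1/\sqrt{n_0})$ almost surely. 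Finally I close by pigeonhole: every $j\notin C_1$ with $|\langle\ell_j,\ell_1\rangle|\ge L$ contributes at least $L\ge\tfrac1{2n_0}$ to this sum, so the number of such $j$ is at most $O(1/\sqrt{n_0})/L=O(\sqrt{n_0})=o(n_0)$, in particular $<(n_0-1)/9$ for $n_0$ large. If some $j^\ast\in C_1\setminus\{1\}$ were not in $\Omega$, then all $\lceil 10(n_0-1)/9\rceil$ indices of $\Omega$ would have absolute coordinate at least $|\langle\ell_{j^\ast},\ell_1\rangle|\ge L$; at most $n_0-2$ of them lie in $C_1\setminus\{1,j^\ast\}$, so at least $\lceil 10(n_0-1)/9\rceil-(n_0-2)>(n_0-1)/9$ of them lie outside $C_1$ while exceeding $L$, contradicting the previous sentence. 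Hence $C_1\setminus\{1\}\subset\Omega$ almost surely.

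I expect the main obstacle to be the aggregate out-of-cluster bound: obtaining a clean $O(1/\sqrt{n_0})$ is exactly where the scaling $r=r_0/\sqrt{n_0}$ must be exploited, and one has to keep the double sum $\sum_k\bfA_{1k}\,|\{j\notin C_1:\bfA_{jk}=1\}|/d_k^2$ under uniform control. A secondary delicate point is bookkeeping of the almost-sure events: the degree concentration of Theorem~\ref{lemma:AlphaDegBound}, the in-cluster lower bound of Theorem~\ref{thm:NoisyInnerProdBound} applied across all $n_0-1$ pairs $(j,1)$, and the aggregate bound must all be made to hold on one common event whose failure probability is $o(1)$.
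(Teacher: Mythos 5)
Your proposal is correct, and it reaches the paper's conclusion by a genuinely different route for the key estimate, so the comparison is worth spelling out. The paper also argues by contradiction with a hypothetical $i^\ast\in C_1\setminus\{1\}$ omitted from $\Omega$, and it also reduces to bounding an aggregate of out-of-cluster inner products: it sets $\Lambda=\Omega\cap C_1^c$, notes $|\Lambda|\ge n_0/9$, and derives $\tfrac{n_0}{9}|\langle\ell_1,\ell_{i^\ast}\rangle|\le\|\bfL_\Lambda^\top\ell_1\|_1$. The difference is in how that aggregate is controlled. The paper plugs into its perturbation framework: it writes $\bfL=\bfL^0+\bfE$, observes that $(\bfL^0_\Lambda)^\top\ell_1^0=0$ by the block structure of $\bfL^0$ (since $\Lambda\subset C_1^c$ while $\ell_1^0$ is supported on $C_1$), and then controls the remaining cross terms with the $\|\bfE\|_1$ and $\|\bfE\|_\infty$ bounds of Theorem~\ref{thm:EBoundsMain}, together with $\|\bfL^0\|_\infty=2$ and $\|\ell_1^0\|_1\le 1+\beta^2$. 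You instead expand $\langle\ell_1,\ell_j\rangle$ from the entry formula for $\bfL$, sum over all $j\notin C_1$, interchange summation, and control the inner count $|\{j\notin C_1:\bfA_{jk}=1\}|$ by $d_k^\epsilon\le r\,d_k$ for $k\in C_1$ and by $d_k$ for $k\notin C_1$; with the degree concentration this yields the same $O(r)=O(1/\sqrt{n_0})$ bound directly, bypassing Theorem~\ref{thm:EBoundsMain} and the $\bfL^0/\bfE$ split entirely. Your final step is a pigeonhole count of how many out-of-cluster indices can have coordinate $\ge L$, whereas the paper compares $|\Lambda|\cdot L$ to the aggregate; these are the same inequality in two guises. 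Net effect: your argument is somewhat more elementary and self-contained, and the aggregate bound is over the entire complement rather than just over $\Lambda\subset\Omega$, which is slightly stronger. One caveat you rightly flag: obtaining the lower bound $|\langle\ell_j,\ell_1\rangle|\ge L$ simultaneously for all $j\in C_1\setminus\{1\}$ requires the inner-product concentration of Theorem~\ref{thm:NoisyInnerProdBound} to hold after a union bound over the $n_0-1$ pairs $(1,j)$, which demands that the per-pair failure probability (the $\delta$ inside Theorem~\ref{thm:chi1bound}) be $o(1/n_0)$ rather than merely $o(1)$; note that the paper's proof has exactly the same requirement, since its hypothetical index $i^\ast$ is itself determined by the random graph and is therefore not a fixed pair either.
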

\begin{proof}
Suppose otherwise, then there exists an $i^{*}\in C_{1}\setminus\{1\}$ 
not in $\Omega$. Let $\Lambda = \Omega \cap C_1^{c}$. As we are 
assuming that $C_1\setminus\{1\} \not\subset \Omega$, we have that 
$|\Lambda| \geq n_0/9$. Moreover, by definition of $i^{*}$, we have 
that $|\langle \ell_{1},\ell_{i^{*}}\rangle | \leq |\langle 
\ell_{1},\ell_{j}\rangle | $ for all $j \in \Omega$, and in particular:
\begin{equation*}
 |\langle \ell_{1},\ell_{i^{*}}\rangle | \leq |\langle \ell_{1},\ell_{j}\rangle | \text{ for all } j \in\Lambda 
\end{equation*}
Summing over $\Lambda$ we get:
\begin{equation}
 |\Lambda ||\langle \ell_{1},\ell_{i^{*}}\rangle | \leq \sum_{j\in\Lambda}|\langle \ell_{1},\ell_{j}\rangle | = \|\bfL_{\Lambda}^\top \ell_{1}\|_{1}
 \label{eq:ForceContradiction}
\end{equation} 
We shall show that equation \eqref{eq:ForceContradiction} cannot hold for $n_0$ large enough. Because $p \geq 4k(\ln(n))^2/n$ and $r << 1$, we may use Theorem \ref{thm:EBoundsMain}, which we shall do repeatedly. By this theorem, we have that $\bfL = \bfL^{0} + \bfE =
\bfL^{0}+ 
\bfE^{1} + \bfE^{2}$ with $\|\bfE^{j}\|_{\infty} \leq r = \mathcal{O}(r^2)$ 
for $j=1, 2$.  Moreover by construction $\bfL^{0}_{\Lambda} = 
\left[\begin{array}{c} \mathbf{0} \\
\tilde{\bfL}_{\Lambda} \end{array}\right]$ where $\tilde{\bfL}$ denotes 
the Laplacian of the subgraph $G_{C_1^{c}}$ and $\mathbf{0}$ is the zero matrix of size $n_0\times |\Lambda|$. Similarly, $\ell_{1} = \ell_1^{0} + \bfe_1$  and we may write $\ell_{1}^{0} = 
\left[\begin{matrix} \ell_{1}^{1} \\ 
\mathbf{0}\end{matrix}\right]$ where  
$\ell_{1}^{1}\in\mathbb{R}^{n_0}$ is the first column of 
$\bfL^{1}$, the Laplacian of the subgraph $G_{C_1}$ and $\mathbf{0}$ is the zero vector of length $n-n_0$. Thus:
\begin{align}
\|\bfL_{\Lambda}^\top \ell_{1}\|_{1} & \leq \|(\bfL^{0}_{\Lambda})^\top 
\ell^{0}_{1}\|_{1} + \|(\bfL^{0}_{\Lambda})^{\top}\bfe_{1}\|_{1} + 
\|\bfE^{\top}\ell^{1}\|_{1} \nonumber \\
&  = \left\|  \left[\begin{matrix} \mathbf{0}^{\top} & \tilde{\bfL}_{\Lambda}^\top  \end{matrix}\right] \left[\begin{matrix} 
\ell_{1}^{1} \\ \mathbf{0}\end{matrix}\right]\right\| 
+ \|(\bfL^{0}_{\Lambda})^\top \bfe_{1}\|_{1} + \|\bfE^\top \ell^{1}\|_{1} \nonumber \\
& \leq 0 + \|(\bfL^{0}_{\Lambda})^\top \|_{1}\|\bfe_{1}\|_{1} + \|\bfE^\top \|_{1}\|\ell^{1}\|_{1} \label{eq:ReturnToThis}
\end{align}
Now $\|(\bfL^{0}_{\Lambda})^\top \|_{1} = \|\bfL^{0}_{\Lambda}\|_{\infty}$, as these are dual norms. As $\|\cdot\|_{\infty}$ is equal to the maximum absolute row sum, it is clear that $\|\bfL^{0}_{\Lambda}\|_{\infty} \leq \|\bfL^{0}\|_{\infty}$. Now:
\begin{align*}
\|\bfL^{0}\|_{\infty} &= \max_{i}\sum_{j}\left| \delta_{ij} - \frac{\bfA^{0}_{ij}}{d^{0}_{i}}\right| = \max_{i}\left(1 + \frac{1}{d^{0}_{i}}\sum_{j\neq i}\bfA^{0}_{ij}\right) = \max_{i}\left(1 + \frac{1}{d^{0}_{i}}(d^{0}_{i})\right) = 2\\
\end{align*}
Moreover, $\|\bfe_1\|_{1} \leq \max_{i}\|\bfe_{i}\|_{1} = \|\bfE\|_{1}$, hence $\|\bfe_{1}\|_{1} \leq 2\beta^2r + \mathcal{O}(r^2)$, while $\|\bfE^{\top}\|_{1} = \|\bfE\|_{\infty} \leq \|\bfE^{1}\|_{\infty} + \|\bfE^{2}\|_{\infty} = 2r + \mathcal{O}(r^2)$ both by Theorem \ref{thm:EBoundsMain}. Finally, $\ell_{1} = \ell_{1}^{0} + \bfe_1$ as above, so $\|\ell_{1}\| \leq \|\ell_{1}^{0}\|_{1} + \beta^{2}r + \mathcal{O}(r^2)$. We compute $\|\ell_{1}^{0}\|_{1}$:
\begin{equation*}
\|\ell_{1}^{0}\|_{1} = \sum_{i=1}^{n}\left| \delta_{i1} - \frac{A^{0}_{i1}}{d^{0}_{i}}\right| = 1 + \sum_{i=2}^{n}\frac{\bfA^{0}_{i1}}{d^{0}_{i}} \leq 1 + \frac{1}{d_{\min}}(d^{0}_{1}) \leq 1 + \frac{(1+\alpha)pn_0}{(1-\alpha)pn_0} = 1 + \beta^{2}
\end{equation*}
Returning to equation \eqref{eq:ReturnToThis}:
\begin{equation*}
\|\bfL^{\top}_{\Lambda}\ell_1\|_{1} \leq (2)(2\beta^{2}r) + (2r)(1+ \beta^2) + \mathcal{O}(r^2) = (2 + 6\beta^2)r+ \mathcal{O}(r^2) = \frac{(2 + 6\beta^2)r_0}{\sqrt{n_0}}+ \mathcal{O}(r^2)
\end{equation*}

On the other hand, by Theorem \ref{thm:NoisyInnerProdBound} we have that:
\begin{equation*}
|\langle \ell_{1},\ell_{i^{*}}\rangle| \geq \frac{1}{n_0}\left(\frac{\beta^2}{1+\alpha}\right) - o(\frac{1}{n_0})
\end{equation*}
almost surely, and so we can bound the left hand side of 
\eqref{eq:ForceContradiction} as:
\begin{equation*}
|\Lambda |  |\langle \ell_{1},\ell_{i^{*}}\rangle| \geq \frac{n_0}{9}\left(\frac{1}{n_0}
\left(\frac{\beta^2}{1+\alpha}\right) - o(\frac{1}{n_0}) 
\right) = \frac{\beta^2}{9(1+\alpha)} - o(1)
\end{equation*}
Thus, if inequality \eqref{eq:ForceContradiction} were true, it would imply that:
\begin{equation*}
\frac{\beta^2}{9(1+\alpha)} - o(1) \leq \frac{(2 + 6\beta^2)r_0}{\sqrt{n_0}} + \mathcal{O}(r^2)
\end{equation*}
which cannot be true for $n_0$ large enough, as $\alpha$ and $r$ are $o(1)$ with 
respect to $n_0$ and $\beta^2$ is $O(1)$, thus $\beta^2/(9(1+\alpha)) - 
o(1) \to 1/9$  while $(2 + 6\beta^2)r_0/\sqrt{n_0} \to 0$. Hence we may always take 
$n_0$ large enough so that, with probability at least $1 - 1/n_0$, 
inequality \eqref{eq:ForceContradiction} cannot hold. Thus no such 
$i^{*}$ which is in $C_1\setminus\{1\}$ but not in $\Omega$ can 
be found, and so the Theorem~\ref{thm:Mckenzie}  is proved.
\end{proof}

With the result of Theorem~\ref{thm:Mckenzie} in mind, 
we are able to derive a new algorithm to find the communities 
from $G\in {\cal G}(n,k,p,q)$ for $p>>q>0$:
\begin{algorithm}[H]
\caption{Single Cluster Pursuit (SCP)}
\label{algorithm:CompClust}
Input:  The adjacency matrix  $A$ of a graph $G$, and estimate of the 
size of clusters $n_0$
\begin{algorithmic}
\State (1) {\bf Initialization } Compute $L = I - D^{-1}A$.
\State (2) {\bf Trimming} 
	\Indent
	\State Let $\Omega = \mathcal{L}_{\lceil 10(n_0-1)/9 \rceil}(L_{-1}^\top \ell_{1})$
	\EndIndent
\State (3) {\bf Perturbed Sparse Recovery}
	\Indent
    \State (a) $\bfy = \sum_{i\in\Omega}\ell_{i} + \ell_{1}$
	\State (b) Solve $\bfz^{\#} = \argmin \{\|L_{\Omega}\bfz - \bfy\|_{2} \ \text{ s.t. } \bfz\in \mathbb{R}^{|\Omega|} \text{ and } \|\bfz\|_{0} \leq n_0/9\}$ using the SP algorithm
	\State (c) $\Lambda^{\#} = \supp(\bfz^{\#})$.
	\EndIndent
\end{algorithmic}
Output: $C_{1}^{\#} = \{1\}\cup \left(\Omega\setminus\Lambda^{\#}\right)$.
\end{algorithm}

Next we show that Algorithm~\ref{algorithm:CompClust} succeeds almost surely.

\begin{theorem}
\label{thm:MainSuccess}
Let $G$ be a graph drawn at random from the $\mathcal{G}(n,k,p,q)$ model. Suppose that $k = O(1)$ with respect to $n$ and either:
\begin{enumerate}
\item $q = \frac{Q}{n}$ and $p = \frac{P\ln(n)}{\sqrt{n}\ln(\ln(n))}$ with $Q$ and $P$ being constants and $P > 979$, or
\item $q = \frac{Q\ln(n)}{n}$ with $Q\to \infty$ as $n\to\infty$ and $p = \frac{P\ln(n)}{\sqrt{n}}$ with $P\to\infty$ as $n\to\infty$ and $\frac{Q}{P} < \frac{1}{979}$ for large enough $n$.
\end{enumerate}
Then Algorithm~\ref{algorithm:CompClust} will recover $C_1$ (that is, $C_1^{\#} = C_1$) almost surely. 
\end{theorem}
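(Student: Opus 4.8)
The plan is to prove the statement in three stages that mirror the three stages of Algorithm~\ref{algorithm:CompClust}, combining the RIP estimates of \S\ref{section:RIPLaplacian}, the coherence estimates of \S\ref{section:CoherenceProperties}, and the totally perturbed compressed sensing machinery of \S\ref{section:PerturbedCS}; all the ``almost surely'' conclusions below hold on the intersection of finitely many events, each failing with probability $o(1)$, so a union bound suffices. \textbf{Stage 1 (trimming).} First I would check that in both regimes the hypotheses of Theorem~\ref{thm:Mckenzie} hold: $p \ge 4k(\ln n)^2/n$ is immediate from the stated forms of $p$, and I would bound $\max_i r_i = \max_i d^{\epsilon}_i/d^{0}_i$ using Theorem~\ref{lemma:AlphaDegBound} (with $\alpha = 1/\sqrt{\ln n} = o(1)$) for the denominators and a Chernoff-plus-union-bound estimate for $\max_i d^{\epsilon}_i$. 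Regime~1 is the subtle case, since $\mathbb{E}[d^{\epsilon}_i] = q(n-n_0) = \Theta(1)$, so one must control the maximum of $n$ nearly-Poisson random variables, which is $O(\ln n/\ln\ln n)$ almost surely; dividing through, one obtains $\max_i r_i \le r_0/\sqrt{n_0}$ with $r_0 = O(\sqrt{k}/P)$ in regime~1 and $r_0 = O(Q\sqrt{k}/P)$ in regime~2. Theorem~\ref{thm:Mckenzie} then yields $C_1\setminus\{1\}\subseteq\Omega$ almost surely; since $1\notin\Omega$ and $|\Omega| = \lceil 10(n_0-1)/9\rceil$, we may write $\Omega = (C_1\setminus\{1\})\sqcup\Lambda$ with $s := |\Lambda| = \lceil 10(n_0-1)/9\rceil - (n_0-1) \approx n_0/9$, which is the sparsity level fed to step~3(b).

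\textbf{Stage 2 (reduction to totally perturbed recovery).} The conceptual crux is the algebraic identity that, because $\bfone_{C_1}\in\ker(\bfL^0)$ by Theorem~\ref{thm:IndicatorVecsKernel}, the measurement $\bfy = \sum_{i\in\Omega}\ell_i + \ell_1$ of step~3(a) satisfies $\sum_{i\in\Omega}\ell^{0}_i + \ell^{0}_1 = \bfL^0\bfone_{C_1\cup\Lambda} = \bfL^0\bfone_{\Lambda} = \bfL^0_{\Omega}\bfone_{\Lambda}$, so $\bfone_{\Lambda}$ is the exact $s$-sparse solution of the \emph{unperturbed} problem with matrix $\bfL^0_{\Omega}$. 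Hence step~3(b) is precisely the totally perturbed sparse recovery problem of \S\ref{section:PerturbedCS} with $\Phi = \bfL^0_{\Omega}$, $\hat\Phi = \bfL_{\Omega} = \Phi + \bfE_{\Omega}$, true signal $\bfx = \bfone_{\Lambda}$, true measurement $\Phi\bfx$, and measurement error $\bfe = \bfy - \Phi\bfx = \sum_{i\in\Omega\cup\{1\}}\bfe_i$. To invoke Theorem~\ref{thm:PerturbedSP} I would verify $\hat\delta_{3s}(\bfL_{\Omega}) \le \delta_{3s}(\bfL)$ (passing to a column submatrix cannot increase the RIC) and then $\delta_{3s}(\bfL) \le \tfrac13 + o(1) < 0.4859$ by Theorem~\ref{thm:Twostory} with $\gamma = 3s/n_0 \approx \tfrac13$, the perturbation terms there being $o(1)$ since $r = O(1/\sqrt{n})$ and $k = O(1)$. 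Theorem~\ref{thm:PerturbedSP} then gives, after $m = \Theta(\ln n)$ iterations of SP, $\|\bfone_{\Lambda} - \bfz^{\#}\|_2 \le C\,(\epsilon^{s}_{\Phi} + \epsilon_{\bfy})\,\|\bfone_{\Lambda}\|_2$ for an explicit $C = \hat\tau\sqrt{1+\hat\delta_s}/(1-\epsilon^s_\Phi) + 1 = O(1)$.

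\textbf{Stage 3 (error bounds and conclusion).} Next I would bound the two relative errors. For $\epsilon^{s}_{\Phi} = \|\bfE_{\Omega}\|_{2,s}/\|\bfL^0_{\Omega}\|_{2,s}$ the numerator is $\le \|\bfE\|_{2,s} \le 2\beta r + \mathcal{O}(r^2)$ by Theorem~\ref{thm:EBoundsMain}, while for the denominator I would restrict the maximizing set to a size-$s$ subset of $C_1\setminus\{1\}\subseteq\Omega$ and apply the interpolation Theorem~\ref{thm:interlacingEigs} with Remark~\ref{remark:LambdaMin}, giving $\|\bfL^0_{\Omega}\|_{2,s} \ge \lambda_{s-1}(\bfL^1) \ge \lambda_2(\bfL^1) = 1 - o(1)$; hence $\epsilon^{s}_{\Phi} = O(r)$. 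Similarly $\|\bfe\|_2 \le \|\bfE\|_2\sqrt{|\Omega|+1} = O(r\sqrt{n_0})$ by Theorem~\ref{thm:EBoundsMain}, while $\|\Phi\bfx\|_2 = \|\bfL^0\bfone_{\Lambda}\|_2 \ge \sqrt{1-\delta_{s}(\bfL^0)}\,\sqrt{s}$ with $\delta_s(\bfL^0) \le \tfrac19 + o(1)$ by Theorem~\ref{thm:RIPBoundq0} (taking $\gamma = s/n_0 \approx \tfrac19$), so $\|\Phi\bfx\|_2$ is of order $\sqrt{n_0}$ and $\epsilon_{\bfy} = O(r)$. Therefore $\|\bfone_{\Lambda}-\bfz^{\#}\|_2 \le C'\,r\sqrt{n_0} = C'\,r_0$ for an absolute constant $C'$ obtained by assembling all the $\beta$-factors, the $10/9$ and $1/9$ factors, and the value of $C$ near $\hat\delta_{3s} = \tfrac13$. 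Since every nonzero coordinate of $\bfone_{\Lambda}$ equals $1$ and $\bfz^{\#}$ is $s$-sparse with $s = |\Lambda|$, the bound $C'r_0 < 1$ forces $\supp(\bfz^{\#}) = \Lambda$, hence $\Omega\setminus\Lambda^{\#} = C_1\setminus\{1\}$ and $C_1^{\#} = C_1$. Combining $C'r_0 < 1$ with the estimates $r_0 = O(\sqrt{k}/P)$ and $r_0 = O(Q\sqrt{k}/P)$ from Stage~1 is exactly what produces the hypotheses $P > 979$ and $Q/P < 1/979$; the constant $979$ is the explicit value emerging from $C'$ together with the constants hidden in the $O(\cdot)$'s of Stage~1 after all the bookkeeping.

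\textbf{The main obstacle} is Stage~3: to obtain \emph{exact} recovery one must drive the final relative error strictly below $1/\sqrt{|\Lambda|} \asymp 1/\sqrt{n_0}$, and this demands that every amplification constant in the chain --- the RIC staying comfortably under $0.4859$, the Subspace Pursuit constant $C$, the semi-norm ratio $\epsilon^s_\Phi$, the measurement-error ratio $\epsilon_\bfy$, and finally $r_0$ as a function of $P$ and $Q$ --- be tracked explicitly and simultaneously, which is what pins down the numerical threshold $979$. In regime~1 this also hinges on the non-standard maximal-degree estimate for $d^{\epsilon}_i$ when $q = Q/n$. By contrast, the reduction of Stage~2 is essentially forced once the kernel identity is noticed, and Stage~1 is routine given Theorem~\ref{thm:Mckenzie}.
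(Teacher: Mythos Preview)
Your proposal is correct and follows essentially the same route as the paper: the three stages match exactly (verify the hypotheses of Theorem~\ref{thm:Mckenzie} via degree bounds to get $C_1\setminus\{1\}\subset\Omega$ and $r_0<1/979$, rewrite step~3(b) as a totally perturbed recovery of $\bfone_{\Lambda}$ using the kernel identity, then plug the RIC bound of Theorem~\ref{thm:Twostory} and the perturbation bounds of Theorem~\ref{thm:EBoundsMain} into Theorem~\ref{thm:PerturbedSP} and finish with the support argument of Lemma~\ref{cp}). You are in fact slightly more careful than the paper in one place, namely in explicitly noting that passing from $\bfL$ to the column submatrix $\bfL_{\Omega}$ can only decrease the RIC, which the paper uses implicitly.
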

\begin{proof}%[\emph{of Theorem  \ref{thm:MainSuccess}}] \\
For notational convenience let $s:= (n_0-1)/9$. In both cases, $p \geq 4k(\ln(n))^2/n$ and so by part 2 of Theorem \ref{lemma:AlphaDegBound}, $d^{0}_{\min} \geq (1-\alpha)n_0p$ almost surely for $\alpha = o(1)$ with respect to n. In case 1, by (i) of Theorem  3.4 in \cite{Frieze2016}, $d^{\epsilon}_{\max} \leq \ln(n)/(\ln(\ln(n)) - \ln(\ln(\ln(n)))$ almost surely. Ignoring the triple-logarithmic term:
\begin{equation*}
\max_{i}\frac{d^{\epsilon}_{i}}{d^{0}_{i}} \leq \frac{d^{\epsilon}_{\max}}{d^{0}_{\min}} \leq \frac{1}{P(1-\alpha)\sqrt{n_0}} =: \frac{r_0}{\sqrt{n_0}}
\end{equation*}
with $r_0 < 1/979$ for large enough $n$ as $\alpha = o(1)$. In case $2$, by (ii) of Theorem  3.4 in \cite{Frieze2016}, $d^{\epsilon}_{\max}\leq Q\ln(n)$ almost surely, and so:
\begin{equation*}
\max_{i}\frac{d^{\epsilon}_{i}}{d^{0}_{i}} \leq \frac{d^{\epsilon}_{\max}}{d^{0}_{\min}} \leq \frac{Q/P}{(1-\alpha)\sqrt{n_0}} =: \frac{r_0}{\sqrt{n_0}}
\end{equation*}
and again $r_0 < 1/979$ for large enough $n$. We may thus apply Theorem~\ref{thm:Mckenzie} to get that $C_{1}\setminus\{1\}\subset \Omega$ almost surely. As before we define $\Lambda := \Omega\cap 
C_{1}^{c}$; it is easy to see $|\Lambda|\leq n_0/9$. The challenge in step (3) of Algorithm~\ref{algorithm:CompClust} is to separate $C_{1}\setminus\{1\}$ from $\Lambda$. We do this by solving for $\bfone_{\Lambda} = \bfone_{\Omega} - \bfone_{C_{1}\setminus\{1\}}$ instead of $\bfone_{C_{1}\setminus\{1\}}$, as:
\begin{equation}
\bfL^{0}_{\Omega}\bfone_{\Lambda} = \bfL^{0}_{\Omega}\left(\bfone_{\Omega} - \bfone_{C_{1}\setminus\{1\}}\right) = \bfL^{0}_{\Omega}\bfone_{\Omega} - \bfL^{0}_{\Omega}\bfone_{C_{1}\setminus\{1\}} = \sum_{i\in \Omega}\ell^{0}_{i} - (-\ell^{0}_1) =: \bfy^{0}
\label{eq:switch}
\end{equation}
and noting that $\|\bfone_{\Lambda}\|_{0} \leq n_0/9$, we see that  $\bfone_{\Lambda}$ is the unique solution to 
\begin{equation}
\argmin \{\|\bfL^{0}_{\Omega}\bfz - \bfy^{0}\|_{2}:\ \bfz \in \mathbb{R}^{\lceil 10n_0/9\rceil}, \ \|\bfz\|_{0} \leq n_0/9\}
\label{eq:unpeturbed}
\end{equation}
Defining $\bfy$ to be a perturbation of $\bfy^{0}$and writing, as in \S \ref{sectionRIP_q_nonzero}, $\bfL_{\Omega} = \bfL^{0}_{\Omega} + \bfE_{\Omega}$:
\begin{equation*}
\bfy := \sum_{i\in\Omega}\ell_{i} + \ell_1 = \sum_{i\in \Omega}\ell^{0}_{i} + \sum_{i\in\Omega}\bfe_{i} + \ell_{1}^{0} + \bfe_1 = \bfy^{0} + \left(\bfe_1 + \sum_{i\in\Omega}\bfe_{i}\right)
\end{equation*}
Defining $\bfe = \bfe_1 + \sum_{i\in\Omega}\bfe_{i}$ we recognize the problem:
\begin{equation}
\label{eq:Perturbed2}
\bfz^{\#} = \argmin \{\|\bfL_{\Omega}\bfz - \bfy \|_{2}:\ \bfz \in \mathbb{R}^{\lceil 10n_0/9\rceil}, \ \|\bfz\|_{0} \leq n_0/9\}
\end{equation}
as a totally perturbed version of \eqref{eq:unpeturbed}. Thus, we may apply the results of \S \ref{section:PerturbedCS} to bound $\|\bfz^{\#} - \bfone_{\Lambda}\|_{2}$. As in \S \ref{section:PerturbedCS}, we define $\delta_{t} := 
\delta_{t}(\bfL^{0})$ and $\hat{\delta}_{t}:= \delta_{t}(\bfL)$. 
(We refer the reader to \S \ref{section:PerturbedCS} for 
the definitions of $\epsilon^{s}_{\bfL^{0}}$, $\epsilon_{\bfy}$, $\hat{\rho}$ and $\hat{\tau}$.)  
Let us bound on these quantities.  Observe that in both cases, 
$p >> (\ln(n_0))^2/n_0$ and by assumption $k$ is $O(1)$ with respect to $n$,  thus as in the proof of Theorem \ref{thm:Twostory} we get
$\epsilon^{s}_{\bfL^{0}} = 8\beta r/3 = 8\beta r_0/(3\sqrt{n_0})$, for $n_0$ large enough. 
Applying the result of Theorem \ref{thm:Twostory} with $t = 3s = n_0/3$:
 \begin{align*}
 \hat{\delta}_{3s} &\leq \frac{1}{3} + D_1r + \frac{D_2}{\sqrt{n_0}} + \frac{D_3r}{\sqrt{n_0}} + \mathcal{O}(r^2) + 
o(\frac{1}{\sqrt{n_0}}) \\
& = \frac{1}{3} + \frac{(D_1r_0+D_2)}{\sqrt{n_0}} + \frac{D_3r_0}{n_0} + \mathcal{O}(\frac{1}{n_0}) + 
o(\frac{1}{\sqrt{n_0}}) = \frac{1}{3} + \frac{(D_1r_0+D_2)}{\sqrt{n_0}} + o(\frac{1}{\sqrt{n_0}}).
\end{align*}
Hence for large enough $n_0$, we certainly have $\hat{\delta}_{3s} \leq 0.4859$ as required. In fact, let us take 
$n_0$ large enough such that $\hat{\delta}_{3s} \leq 1/3 + 1/18 = 7/18 \approx 0.39$. Choosing $n_0$ larger if 
necessary, we shall also assume that $\epsilon^{s}_{L^{0}} < 1/3$. In this case, a straightforward, but tedious 
calculation will reveal that $\hat{\rho} \leq 0.7$ and similarly $\hat{\tau} < 166$ (see the statement of 
Theorem \ref{thm:PerturbedSP} for the definitions of $\hat{\rho}$ and $\hat{\tau}$). We now turn our attention to 
$\epsilon_{\bfy}$.
\begin{equation*}
\|\bfy^{0}\|_{2} = \|\bfL^{0}_{\Omega_1}\bfone_{\Lambda}\|_{2} \geq \sqrt{1 - \delta_{s}}\|\bfone_{\Lambda}\|_{2} = 
\sqrt{1 - \delta_{s}}\sqrt{s}
\end{equation*}
while 
\begin{equation*}
\|\bfe\|_{2}  = \left\| \bfE\bfone_{\Omega} +\bfe_1 \right\|_{2} \leq \|\bfE\|_{2}\|\bfone_{\Omega\cup\{1\}}\|_{2} \leq \left(2
\beta r + \mathcal{O}(r^2)\right) \sqrt{10s}.
\end{equation*}   
So we have 
\begin{equation*}
\epsilon_{\bfy} := \frac{\|\bfe\|_{2}}{\|\bfy^{0}\|_{2}} = \frac{2\sqrt{10}\left(2
\beta r + \mathcal{O}(r^2)\right)}{\sqrt{1-\delta_{s}}} = \frac{2
\sqrt{10}\beta r_0}{\sqrt{n_0(1-\delta_s)}} + o(\frac{1}{\sqrt{n_0}})
\end{equation*}
Appealing to Theorem \ref{thm:Twostory} with $s = n_0/9$:
\begin{equation*}
\delta_{s} \leq 1/9 + \frac{64p^{-1/2}}{9\sqrt{n_0}} + o(\frac{1}{\sqrt{n_0}}). 
\end{equation*}
Again, we will assume that $n_0$ is large enough such that $\delta_{s} \leq 2/9 \approx 0.22$. 
Under this assumption:
\begin{equation*}
\epsilon_{\bfy} \leq \frac{2\sqrt{10}}{\sqrt{7/9}}\frac{\beta r_0}{\sqrt{n_0}} \leq 8 \frac{\beta r_0}{\sqrt{n_0}}
\end{equation*}
Finally, we use Theorem \ref{thm:Twostory} again, this time with $t = s = n_0/9$:
 \begin{equation*}
 \hat{\delta}_{s} \leq  1/9 +\frac{D_1r_0+D_2}{\sqrt{n_0}} + o(\frac{1}{\sqrt{n_0}})
 \end{equation*}
 and take $n_0$ large enough such that $\hat{\delta}_{s} \leq 2/9$ (this is possible as $D_1$ and $D_2$ are 
$\mathcal{O}(1)$ with respect to $n_0$). Now Theorem \ref{thm:PerturbedSP} guarantees that 
 \begin{equation*}
 \frac{\|\bfx^{m} - \bfone_{\Lambda}\|_{2}}{\|\bfone_{\Lambda}\|_{2}} \leq \left( \hat{\tau}\frac{\sqrt{1 + \hat{
\delta}_{s}}}{1 - \epsilon^{s}_{L^{0}}} + 1 \right)(\epsilon_{L^0}^{s} + \epsilon_{\bfy})
 \end{equation*}
 Where $m := \ln(\epsilon_{\bfL^0}^{s} + \epsilon_{\bfy})/\ln(\hat{\rho}) = \mathcal{O}(\ln(n_0))$. Substituting in for the various constants:
 \begin{align*}
& \frac{\|\bfx^{m} - \bfone_{\Lambda}\|_{2}}{\|\bfone_{\Lambda}\|_{2}} < \left( (166)\frac{\sqrt{1 + 2/9}}{1 - 1/3} 
\right)\left(\frac{8\beta r_0}{3\sqrt{n_0}} + \frac{8\beta r_0}{\sqrt{n_0}} \right) = \frac{166(\sqrt{11})(32)}{6}\frac{\beta r_0}{\sqrt{n_0}} \\
\Rightarrow & \|\bfx^{m} - \bfone_{\Lambda}\|_{2} < \left(\frac{\sqrt{n_0}}{3}\right) 2937 \frac{\beta r_0}{\sqrt{n_0}} = 979\beta r_0
 \end{align*}
As the Lemma~\ref{cp} below, as long as $\|\bfx^{m} - \bfone_{\Lambda}\|_{2} < 1$, we have that 
$\supp(\bfx^{m}) = \supp(\bfone_{\Lambda})$. Because $\beta \to 1$ and $r_0 < 1/979$, this will hold for large enough $n_0$.
\end{proof}

\begin{lemma}
\label{cp}
Suppose that $\bfx^{*}\in\mathbb{R}^{n}$ is a binary vector with $\|\bfx^{*}\|_{0} = t$, and $\bfx\in\mathbb{R}^{n}$ is any other vector that also has $\|\bfx\|_{0} = t$. If $\|\bfx^{*} - \bfx\|_{2} < 1$ then $\supp(\bfx) = \supp(\bfx^{*})$.
\end{lemma}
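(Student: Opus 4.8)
The plan is to argue by contradiction at the level of supports. Write $S := \supp(\bfx^{*})$ and $T := \supp(\bfx)$, and recall that by hypothesis $|S| = |T| = t$. The first step is a purely set-theoretic observation: if $S \neq T$, then since the two index sets have the same cardinality, we have $|S \setminus T| = |S| - |S \cap T| = |T| - |S \cap T| = |T \setminus S|$, so if either of these were empty then both would be, forcing $S \cap T = S = T$. Hence $S \neq T$ guarantees the existence of some index $i \in S \setminus T$.

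The second step exploits that $\bfx^{*}$ is \emph{binary}: for the index $i \in S \setminus T$ we have $x^{*}_{i} = 1$ (it lies in the support, and the only nonzero value a binary vector takes is $1$), while $x_{i} = 0$ (it lies outside $\supp(\bfx)$). Consequently the $i$-th coordinate alone contributes $(x^{*}_{i} - x_{i})^{2} = 1$ to $\norm{\bfx^{*} - \bfx}_{2}^{2}$, so $\norm{\bfx^{*} - \bfx}_{2} \geq 1$. This contradicts the assumption $\norm{\bfx^{*} - \bfx}_{2} < 1$; therefore $S = T$, i.e. $\supp(\bfx) = \supp(\bfx^{*})$, as claimed.

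There is no genuine obstacle here. The only points requiring a little care are the cardinality bookkeeping that rules out $S \subsetneq T$ (or $T \subsetneq S$) when $|S| = |T|$, so that $S \setminus T$ is genuinely nonempty whenever $S \neq T$, and the recognition that it is precisely the binariness of $\bfx^{*}$ that makes a single ``missed'' coordinate cost a full unit of squared error — without that hypothesis the statement would be false.
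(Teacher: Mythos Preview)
Your proof is correct and follows essentially the same approach as the paper: argue by contradiction, produce an index $i \in \supp(\bfx^{*}) \setminus \supp(\bfx)$, and observe that this single coordinate already forces $\norm{\bfx^{*} - \bfx}_{2} \geq |1 - 0| = 1$. Your version is slightly more detailed in justifying why $S \setminus T$ must be nonempty (via the equal-cardinality bookkeeping), whereas the paper simply asserts this, but the substance is identical.
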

\begin{proof}
Suppose otherwise. Then there exists an $i\in \supp(\bfx^{*})\setminus \supp(\bfx)$. Clearly:
\begin{equation*}
\|\bfx^{*} - \bfx\|_{2} \geq |x^{*}_{i} - x_{i}| = |1 - 0| = 1
\end{equation*}
which contradicts the hypotheses.
\end{proof}

As mentioned in the introduction, we may iterate Algorithm \ref{algorithm:CompClust} to find all the clusters of $G$. We call this algorithm Iterated Single Cluster Pursuit, or ISCP.

\section{Computational Complexity and Extensions}
 In this section we bound the computational complexity and explain some extensions.

\subsection{Computational Complexity}
 In this subsection we show that Algorithm~\ref{algorithm:CompClust} 
is faster than existing spectral methods. We do this by determining the 
approximate number of operations required in each step of 
Algorithm \ref{algorithm:CompClust}. Throughout, we shall assume that $\bfA$ and $\bfL$ are sparse matrices.
 \begin{enumerate}
\item Computing $d_{i} = \sum_{j}A_{ij}$ requires $d_i$ operations. We can safely assume that $d_{i} = d^{0}_{i} + d^{\epsilon}_{i} \leq n_0$ so the cost of computing each $d_i$ is at most $\mathcal{O}(n_0)$ This needs to be done a total of $n$ times to compute $\mathbf{D} = \text{diag}(d_1,\ldots, d_n)$, giving a total cost of $\mathcal{O}(nn_0)$.
 \item The cost of the thresholding step is dominated by the matrix-vector multiply $\bfL^\top _{-1}\ell_1$. Because $\ell_1$ has $d_1 + 1 \leq n_0$ non-zero entries, the cost of this is bounded by $\mathcal{O}(nn_0)$.
 \item The computational cost of solving the perturbed sparse recovery problem in step 3 using SP is equal to the number of iteration times the cost of each iteration. As shown in the proof of 
Theorem~\ref{thm:MainSuccess}, it suffices to perform 
$\mathcal{O}(\ln(n_0)) = \mathcal{O}(\ln(n))$ iterations. The cost of each 
iteration is determined by calculating the cost of each step in the 
iterative part of SP (see Algorithm \ref{algorithm:SP}): 
\begin{enumerate}
 \item Computing $\mathcal{L}_{s}(\bfL^\top _{\Omega}\bfr^{k-1})$ is dominated by the cost of the matrix-vector multiply $\bfL^\top _{\Omega}\bfr^{k-1}$, hence is $\mathcal{O}(nn_0)$.
 \item Solving the least square problem in step (2) using a good numerical solver (we used {\tt MATLAB}'s {\tt lsqr } algorithm) 
is on the order of the cost of a matrix-vector multiply, i.e. 
$\mathcal{O}(nn_0)$,  as explained in \cite{Needell2009}.
\item The cost of sorting and thresholding (step (3)) is $\mathcal{O}(n\ln(n))$.
 \item Finally the cost of computing the new residual $\bfr^{k}$ in step (4) is dominated by the matrix vector multiply $\bfL^\top _{\Omega_1}\bfr^{k}$, hence is $\mathcal{O}(nn_0)$.
 \end{enumerate} 
We conclude that the cost of a single iteration of  the SP Algorithm  is $\mathcal{O}(nn_0)$, and so the cost of step 3 of Algorithm~\ref{algorithm:CompClust} is $\mathcal{O}(n\ln(n)n_0)$
\end{enumerate}
Thus, the number of computations required to find a single cluster using Algorithm~\ref{algorithm:CompClust} is \mbox{$\mathcal{O}(n\ln(n)n_0)$}.  \\ 

To find all the clusters $C_{1},\ldots, C_{k}$ one can iterate Algorithm~\ref{algorithm:CompClust} $k-1$ times. The cost of this is 
certainly less than $\mathcal{O}(kn\ln(n)n_0) = \mathcal{O}(n^2\ln(n))$.

% \begin{remark}
% We point out that iterated single cluster pursuit (ISCP) is faster than the complexity bound above would suggest for the following reason. Suppose that SCP finds $C_1$ in $D\ln(n)nn_0$ operations. Let $G^{(i)}$ denote the subgraph $G_{\cup_{a=i+1}^{k}C_{a}}$ of $G$. Then $G^{(0)}$ is the original graph \footnote{Not to be confused with the $G^{0}$ of earlier sections!} and $G^{(k)}$ is the empty graph. Once $C_{1},\ldots C_{i}$ are found, we find $C_{i+1}$ by running SCP on $G^{i}$, which has $(k-i)n_0$ vertices. Hence this requires $D\ln\left((k-i)n_0\right)\left((k-i)n_0\right)n_0 \leq D\ln(n)(k-i)n_0^{2}$ operations. Adding this all up, the total number of operations required to find $C_{1},\ldots, C_{k}$ is less than:
% \begin{align*}
% & \sum_{i=0}^{k-2}D\ln(n)(k-i)n_0^{2} = D\ln(n)kn_0^{2}(k-1) - D\ln(n)n_0^{2} \sum_{i=0}^{k-2}i \\
% & = D\ln(n)k(k-1)n_0^{2} - D\ln(n)n_0^{2}\left(\frac{k-1)(k-2)}{2}\right) = D\ln(n)n_0^{2}(k-1)\left(k - \frac{k}{2} + 1\right) \\
% & \approx \frac{D}{2}\ln(n)n_0^{2}k^{2} = \frac{D}{2}\ln(n)n^2  
% \end{align*}
% \end{remark}

\subsection{Extensions}
\label{section:Extensions}
\begin{enumerate}
\item Our Algorithm~\ref{algorithm:CompClust} 
can be extended to deal with the communities of
non-equal size. Hence, our study can be extended too. 
A numerical example is included 
in the next section to demonstrate this extension. For simplicity, 
we leave the study for the interested reader.
\item Although Algorithm~\ref{algorithm:CompClust} 
works on binary adjacency matrices, it handles weighted adjacency matrices just as well (see \S \ref{section:GeneExpression}).
\item As mentioned in the introduction, graph clustering can also be 
applied to more general data sets. Given any (finite) set of points $\bfX 
= \{\bfx_1,\ldots, \bfx_{n}\}$ in some metric space $(M,d)$, there are 
several ways to associate a graph on $n$ vertices to $\bfX$. For example, 
we could attach an edge between vertices $i$ and $j$ whenever 
$d(\bfx_i,\bfx_j) \leq C$ for some constant $C$. Alternatively, for each vertex $i$ we could 
determine the $K$ closest points to $\bfx_i$, say $\{\bfx_{j_1}, 
\ldots, \bfx_{j_K}\}$, and insert the edges $\{i,j_1\},\ldots, 
\{i,j_K\}$ to obtain the \emph{$K$-nearest neighbours graph} $G^{K}$. 
Empirically, we have found the latter approach to perform better, as 
$d_{\min}(G^{K}) = K$ giving better control over the quantities $r_i$ 
crucial to the analysis of Algorithm~\ref{algorithm:CompClust}. See \S \ref{section:GeneExpression} for an example of this approach.
\item Algorithm \ref{algorithm:CompClust} can easily be extended to the co-clustering problem, as described in \cite{Dhillon2001}. Briefly suppose that we are given two categorically different sets of variables $\mathbf{X} = \{\bfx_1,\ldots, \bfx_n \}$ and $\mathbf{Y} = \{\bfy_1,\ldots, \bfy_{m}\}$ as well as a $n\times m$ matrix $\mathbf{B}$ such that the $\mathbf{B}_{ij}$-th entry represents the correlation between $\bfx_{i}$ and $\bfy_{j}$. The goal of co-clustering is to simultaneously partition $\mathbf{X}$ and $\mathbf{Y}$ into subsets $\mathbf{X} = \mathbf{X}_1\cup\ldots \mathbf{X}_{k}$ and $\mathbf{Y} = \mathbf{Y}_{1}\cup\ldots \cup \mathbf{Y}_{k}$ such that the correlations between $\bfx \in \mathbf{X}_{i}$ and $\bfy \in \mathbf{Y}_{i}$ are strong while the correlations between $\bfx \in \mathbf{X}_{i}$ and $\bfy \in \mathbf{Y}_{j}$ for $i\neq j$ are weak. As pointed out by Dhillon in \cite{Dhillon2001}, one can regard this as a graph clustering problem for a weighted\footnote{We assume that the correlations $B_{ij}$ are non-negative} bipartite graph with vertex set $V = \{u_{1},\ldots, u_{n}, v_{1},\ldots, v_{m}\}$ and weights $w(u_{i},v_{j}) = B_{ij},\  w(u_i,u_j) = w(v_i,v_j) = 0$. The adjacency matrix and Laplacian of G are:
\begin{equation*}
\bfA = \left[\begin{matrix} \mathbf{0} & \mathbf{B} \\ \mathbf{B}^{\top} & \mathbf{0} \end{matrix}\right] \quad \quad \bfL = \left[\begin{matrix} \mathbf{I}_{n\times n} & -\mathbf{D}_{\bfX}^{-1}\mathbf{B} \\ -\mathbf{D}_{\bfY}^{-1}\mathbf{B}^{\top} & \mathbf{I}_{m\times m}\end{matrix}\right]
\end{equation*}
where $\mathbf{D}_{\bfX}$ represents the row sums of $\mathbf{B}$ while $\mathbf{D}_{\bfY}$ represents the column sums. Assume temporarily that instead of clusters $G$ has $k$ connected components $C_{1},\ldots, C_{k}$. Each $G_{C_i}$ will be a bipartite graph, and $C_{i}$ splits as $C_{i} = C_{i}^{\bfX} \cup C_{i}^{\bfY}$ where $C_{i}^{\bfX}$ corresponds to $\bfX_{i}$ and similarly $C_{i}^{\bfY}$ corresponds to $\bfY_i$. The indicator vector will split similarly: $\bfone_{C_i} = [\bfone_{\bfX_i}^{\top} \ \bfone_{\bfY_i}^{\top} ]^{\top}$ and:
\begin{equation*}
\mathbf{0} = \bfL\bfone_{C_i} = \left[\begin{matrix} \bfone_{\bfX_i} -\mathbf{D}_{\bfX}^{-1}\mathbf{B}\bfone_{\bfY_i} \\  \bfone_{\bfY_i} -\mathbf{D}_{\bfY}^{-1}\mathbf{B}^{\top}\bfone_{\bfX_i} \end{matrix}\right]
\end{equation*}
We may convert these two `first-order' equations into a single `second-order' equation as $\bfone_{\bfY_i} = \mathbf{D}_{\bfY}^{-1}\mathbf{B}^{\top}\bfone_{\bfX_i}$ and so:
\begin{equation*}
\mathbf{0} = \bfone_{\bfX_i}  - \mathbf{D}_{\bfX}^{-1}\mathbf{B}\mathbf{D}_{\bfY}^{-1}\mathbf{B}^{\top}\bfone_{\bfX_i} = \left(\mathbf{I} - \mathbf{D}_{\bfX}^{-1}\mathbf{B}\mathbf{D}_{\bfY}^{-1}\mathbf{B}^{\top}\right)\bfone_{\bfX_i}
\end{equation*}
Defining the `bi-partite Laplacian' as $\bfL^{BP} = \mathbf{I} - \mathbf{D}_{\bfX}^{-1}\mathbf{B}\mathbf{D}_{\bfY}^{-1}\mathbf{B}^{\top}$ we see that if $\bfx_1 \in \bfX_1$ then the indicator vector $\bfone_{\bfX_1}$ is the unique solution to:
\begin{equation*}
\argmin \{ \|\bfz\|_{0}: \quad \bfz \in \mathbb{R}^{n}, \ \bfL^{BP}\bfz = \mathbf{0} \text{ and } z_{1} = 1\}
\end{equation*}
once we have found $\bfone_{\bfX_1}$ we can recover $\bfone_{\bfY_{1}}$ as $\bfone_{\bfY_1} = D_{\bfY}^{-1}B^{\top}\bfone_{\bfX_1}$. Returning to the clustering problem, we see that we may approximate $\bfone_{\bfX_1}$ ( and hence $\bfone_{\bfY_i}$) by solving the problem 
\begin{equation*}
\argmin \{ \|\bfL^{BP}_{-1}\bfz - (-\ell_{BP,1})\|_{2}:\quad \bfz \in \mathbb{R}^{n-1}, \ \|\bfz\|_{0} \leq n^{\bfX}_{0} - 1\}
\end{equation*}
using Algorithm \ref{algorithm:CompClust}, where $n^{\bfX}_{0}$ is the size of the clusters $\bfX_i$. We leave the technical analysis of this approach to the interested reader.
\item In future work, we intend to show how Algorithm \ref{algorithm:CompClust} can be adapted into an `online' algorithm which allows one to rapidly update the community assignments as more vertices are added to the graph.
\end{enumerate}

%\newpage

\section{Numerical Examples}
\subsection{Synthetic Data}
We tested Single Cluster Pursuit (SCP, Algorithm \ref{algorithm:CompClust}) on graphs drawn from the SBM. We implemented this algorithm in {\tt MATLAB} 
2016b and run on a mid-2010 iMac computer with 8 GB of RAM and a $3.06$ 
GHz Intel Core i3 processor. For comparison we used the Trembley, Puy, 
Gribonval and Vandergheynst implementation of Spectral Clustering (SC,
Algorithm \ref{algorithm:SC}) included in their Compressive Spectral 
Clustering toolbox available at \url{http://cscbox.gforge.inria.fr/} 
(\cite{Tremblay2016}).

\subsubsection{Example 1}
 Typical output of SCP applied to randomly drawn graphs from the SMB $\mathcal{G}(n,k,p,q)$ with $p = P\ln(n)/\sqrt{n}$ and $q= Q\ln(n)/n$ as in case $2$ of Theorem \ref{thm:MainSuccess} is shown in the third frame of figure \ref{fig:OneCluster}.  Both SCP and SC succeed in finding cluster 1 without error (indeed SC finds all clusters without error) but SCP is faster, taking $0.0262$ seconds compared to the $0.1018$ seconds required by SC. We remark that even for $Q$, $P$ such that $Q/P >> 1/979$ SCP is successful, suggesting that the bounds in Theorem \ref{thm:MainSuccess} are too conservative.

\begin{figure}[H]
\begin{center}
\bigskip
\includegraphics[width = 6in, height=4in]{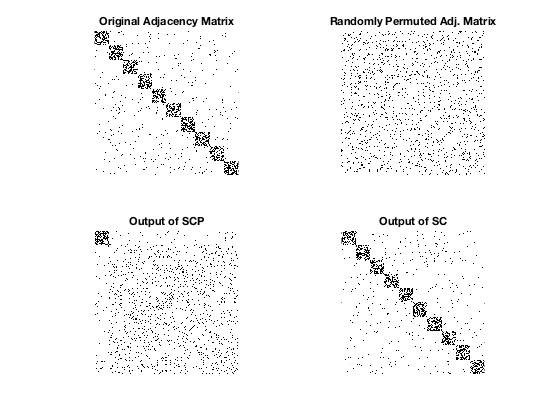}
\end{center}
\caption{The adjacency matrix of a graph drawn from $\mathcal{G}(1000,5,\frac{2\ln(n)}{\sqrt{n}},\frac{2\ln(n)}{n})$ is shown in the top left. The same matrix, with its rows and columns randomly permuted (using the same permutation for both) is shown in the top right. The output of SCP applied to the permuted matrix is shown in the bottom left, while the output of SC on the same matrix is shown on the bottom right.}\label{fig:OneCluster}
\end{figure}

\subsubsection{Example 2}
We then tested Iterated SCP (ISCP) for graphs randomly drawn from the SBM  $\mathcal{G}(n,k,p,q)$ with $p = P\ln(n)/\sqrt{n}$ and $q= Q\ln(n)/n$. Typical output is shown in Figure~\ref{fig:full_compare}. Again, both algorithms are successful (that is, both find all $10$ communities without error). 

\begin{figure}[H]
\begin{center}
\bigskip
\includegraphics[width = 6in, height=4in]{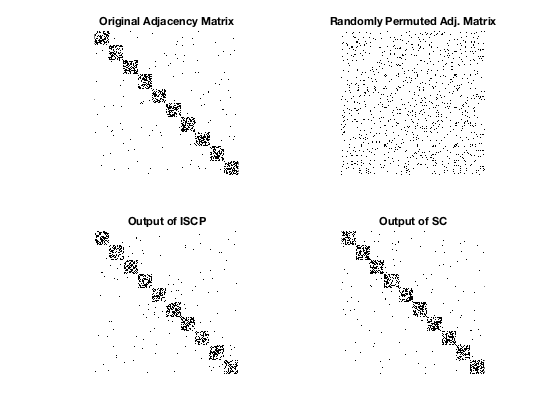}
\end{center}
\caption{The adjacency matrix of a graph drawn at random from $\mathcal{G}(5000,10,\frac{4\ln(n)}{\sqrt{n}},\frac{4\ln(n)}{n})$ is shown in the top left. The same matrix, with its rows and columns randomly permuted (using the same permutation for both) is shown in the top right. The output of SCP applied to the permuted matrix is shown in the bottom left, while the output of SC on the same matrix is shown on the bottom right.}\label{fig:full_compare}
\end{figure}

\subsubsection{Example 3}
Next, we tested the resilience of SCP to noise by running it on $G$ drawn at random from $\mathcal{G}(2400,6,0.5,q)$ with $q$ increasing from $0$ to $150/2000$. For each $q$, we ran SCP on $10$ different $G$ drawn independently and at random from $\mathcal{G}(2400,6,  
0.5,q)$ and computed the average fraction of indices that were misclassified. Figure \ref{fig:ResTest} shows the result of this experiment, with $\mathbb{E}[d^{\epsilon}_{i}] = q(n-n_0)$ on the $x$-axis. As observed in Experiment 1, SCP performs much better than theoretically guaranteed by Theorem \ref{thm:MainSuccess}. Indeed, Figure \ref{fig:ResTest} demonstrates that SCP detects community 1 without error up until $\mathbb{E}[d^{\epsilon}_{i}] = 40$, at which $r \approx 40/200 = 1/5$.  
\begin{figure}[H]
\begin{center}
\includegraphics[width = 3in, height=2in]{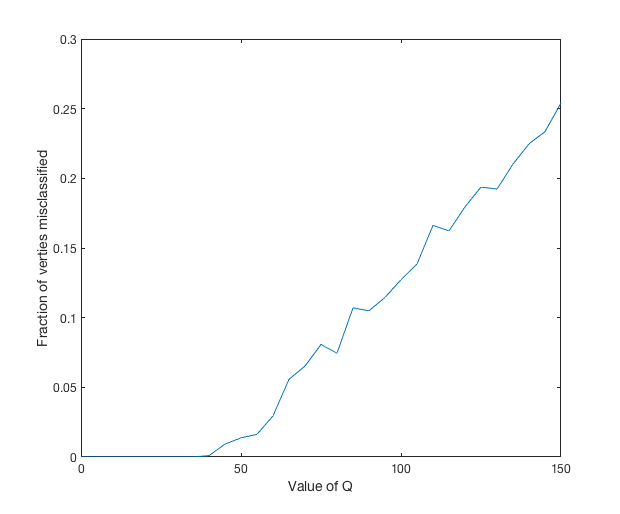}
\end{center}
\caption{Fraction of indices misclassified by SCP, for $G \in \mathcal{G}(2400,6,0.5,Q/2000)$. The horizontal axis represents $Q$, that is, the level of `noise' in the graph. The vertical axis represents the fraction of misclassified vertices $ = |C_1^{\#}\setminus C_1|/|C_1^{\#}|$. As can be seen, SCP performs without error until $Q \approx 40$, well beyond the theoretical guarantees given by Theorem \ref{thm:MainSuccess}} 
\label{fig:ResTest}
\end{figure}
\subsubsection{Example 5}
In our final example, we use ISCP to solve the co-clustering problem, as discussed in \S \ref{section:Extensions}. Typical output is shown in image \ref{fig:RectangleCluster}

\begin{figure}[H]
\begin{center}
\bigskip
\includegraphics[width = 0.7\textwidth]{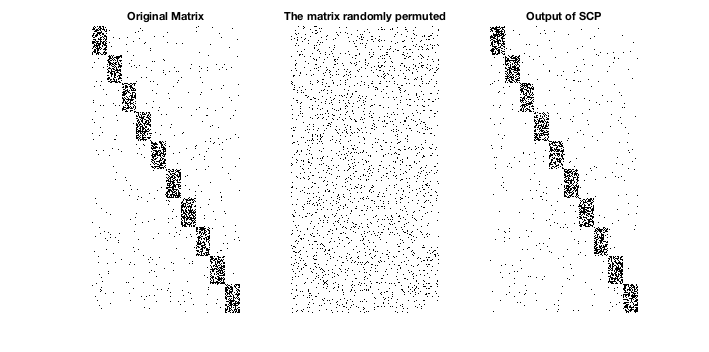}
\end{center}
\caption{A $2000\times 1000$ binary matrix with $10$ equally sized block, randomly permuted and then co-clustered using ISCP, applied as discussed in \S \ref{section:Extensions}.}\label{fig:RectangleCluster}
\end{figure}

\subsubsection{Example 4}
Finally we compared the run times of SCP, SC and ISCP on $\mathcal{G}(n,k,p,q)$ for increasing $n$. We studied three regimes: constant $k$, constant $n_0$ (cluster size) and $k,n_0 = \mathcal{O}(\sqrt{n})$. In all three cases we fixed $p = 2\log(n)/\sqrt{n}$ and $q = 2\log(n)/n$. The results of these experiments are presented in figures \ref{fig:CompareTime20Clusters} - \ref{fig:CompareTimevaryKAndN0} (the run times are in seconds, and are the average of ten independent trials for the same $n$). As is clear, SCP significantly outperforms SC in finding a single cluster. Moreover, when $k$ is large enough compared to the number of vertices ($n$), ISCP finds all clusters faster than SC.  

\begin{figure}[H]
\centering
\subfloat[Varying $k$ from $20$ to $120$ and $n$ accordingly, while fixing $n_0$. $G \in \mathcal{G}\left(400k,k,\frac{2\log(n)}{\sqrt{n}},\frac{2\log(n)}{n}\right)$]{\includegraphics{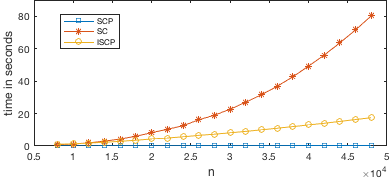}\label{fig:CompareTimevaryK}}

\subfloat[Varying $k$, $n_0$ and $n$. $G \in \mathcal{G}\left(kn_0,k, \frac{2\log(n)}{\sqrt{n}},\frac{2\log(n)}{n}\right)$]{\includegraphics{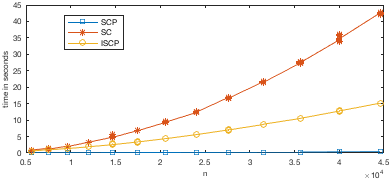}\label{fig:CompareTimevaryKAndN0}}

\subfloat[Varying $n$ and $n_0$ while fixing $k$. $G \in \mathcal{G}\left(n,40,\frac{2\log(n)}{\sqrt(n)}, \frac{2\log(n)}{n}\right)$]{\includegraphics{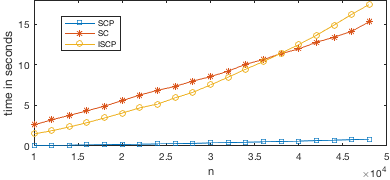}\label{fig:CompareTime20Clusters}}
%\caption{Comparing run time in seconds (shown on $y$-axis) for SCP, SC and ISCP, for $G \in \mathcal{G}(n,40,\frac{2\log(n)}{\sqrt(n)},\frac{2\log(n)}{n})$ as $n$ (shown on $x$-axis) ranges from $10000$ to $48000$}

\caption{Comparing SCP, ISCP and SC for graphs drawn from the SBM of increasing size. As can be seen, SCP is significantly faster than SC}
\end{figure}

\subsection{Real Data Sets}
We now present two examples of SCP and ISCP applied to real-world data sets.

\subsubsection{Political Blogs Data Set}
The {\tt polblogs} data set is a collection of $1494$ political weblogs, or blogs, collected by Adamic and Glance (\cite{Adamic2005}) prior to the 2004 American presidential election. The data is presented as an unweighted, undirected graph, with vertices corresponding to the blogs, and edges between blogs $i$ and $j$ if there is a hyperlink from $i$ to $j$, or \emph{vice versa}. This data set is well-studied (see, for example \cite{Krzakala2013}, \cite{Newman2006} and \cite{Olhede2014}, amongst others), and is a good test case for at least two reasons:
\begin{enumerate}
\item In addition to a natural division into two roughly equally sized clusters (liberal vs. conservative), the data set exhibits additional clustering at smaller scales. That is, within the set of all liberal blogs, one can identify subclusters of blogs, and similarly for the conservative blogs. This is explored further in \cite{Olhede2014}.
\item The ground truth for the division into liberal \emph{vs.} conservative is known, as Adamic and Glance (\cite{Adamic2005}) manually labelled the data set. This provides an opportunity to verify one's results that is rare for real-world data sets.
\end{enumerate}

Our methodology was as follows. We follow Olhede and Wolfe (\cite{Olhede2014}) in using only the $1224$ blogs with links to at least one other blog in the data set. As SCP can be thrown off by low degree vertices ( these will have high values of $r_i$) we experimented with different thresholds. That is, we discarded vertices of degree lower than $d_{\text{thresh}}$ for varying values of $d_{\text{thresh}}$. We then used SCP to detect a cluster containing an arbitrary liberal blog, of size approximately equal to the number of liberal blogs. We call this cluster Cluster 1 ($C_1$). We call the remaining vertices Cluster 2 $C_2$. The results are tabulated in \ref{table:polblogs}. We record the number of vertices left after thresholding, the percentage of the first cluster consisting of liberal blogs, the percentage of the second cluster consisting of conservative blogs and the time taken. The values shown are the averages of ten independent runs.

\begin{table}[H]
\begin{center}
\begin{tabular}{ccccc}
\hline
$d_{\text{thresh}}$ & $\#$ of vertices & $\%$ of $C_1$ liberal & $\%$ of $C_2$ conservative & time (in seconds) \\
\hline
$0$  & $1224$ & $63.71$ & $66.93$ & $0.0380$ \\
$2$  & $1087$ & $71.57$ & $75.68$ & $0.0928$ \\
$4$  & $903$  & $81.45$ & $83.86$ & $0.0467$ \\
$6$  & $813$  & $89.12$ & $91.06$ & $0.0479$ \\
$8$  & $735$  & $91.95$ & $93.69$ & $0.0428$ \\
$\mathbf{10}$ & $\mathbf{693}$  & $\mathbf{93.14}$ & $\mathbf{94.57}$ & $\mathbf{0.0411}$ \\
$12$ & $644$  & $93.95$ & $95.32$ & $0.0256$ \\
$14$ & $596$  & $93.05$ & $94.66$ & $0.0377$ \\
$16$ & $547$  & $94.24$ & $95.39$ & $0.0379$ \\
$18$ & $505$  & $93.97$ & $94.87$ & $0.0225$ \\
\hline
\end{tabular}
\label{table:polblogs}
\caption{Results of SCP applied to the {\tt polblogs} data set for varying degree thresholds. When $d_{\text{thresh}} = 10$ the fraction of cluster 1 classified as liberal, and the fraction of cluster 2 identified as conservative, closely resemble the compositions reported by Newman in \cite{Newman2006}}
\end{center}
\end{table}

As is clear, increasing $d_{\text{thresh}}$ above $10$ makes little difference. Our results also agree with the clusters found by Newman in \cite{Newman2006}, where he finds one cluster which is $93\%$ liberal, and a second cluster which is $97\%$ conservative. Of course, to achieve this accuracy we have had to discard over $40\%$ of our data points. An interesting future line of research would be to improve the handling of low degree vertices by SCP.\\
 To detect clustering at a finer scale, we ran SCP to find a cluster of size $80$ containing a randomly selected liberal blog ({\tt smithersmpls.com}). The output of this is shown in figure \ref{fig:PolBlogsSmallScale}. As can be seen by the increased density of the top-left hand corner, our algorithm finds a subset of blogs containing {\tt smithersmpls.com} that are more densely connected to each other than they are to the rest of the data set. We remark that this experiment took approximately $0.06$ seconds, so with even modest computational resources one could investigate clustering at a large range of scales in such a data set.
\begin{figure}[H]
\begin{center}
\includegraphics[trim = 5mm 45mm 5mm 45mm, clip, width =0.75\textwidth]{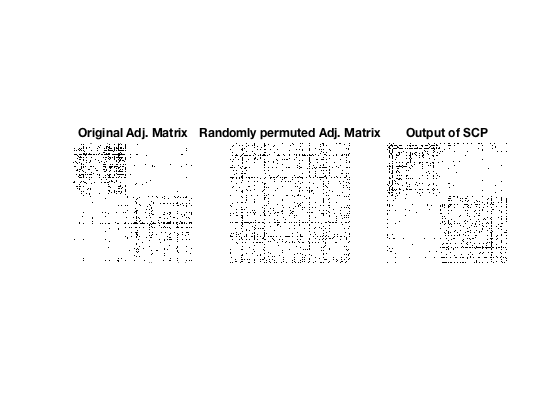}
\caption{Typical Output of SCP applied to the {\tt polblogs} data set with $d_{\text{thresh}} = 10$.}
\label{fig:PolBlogsLibCon}
\end{center}
\end{figure}

\begin{figure}[H]
\begin{center}
\includegraphics[trim = 0 0 0 15mm, width = 0.7\textwidth]{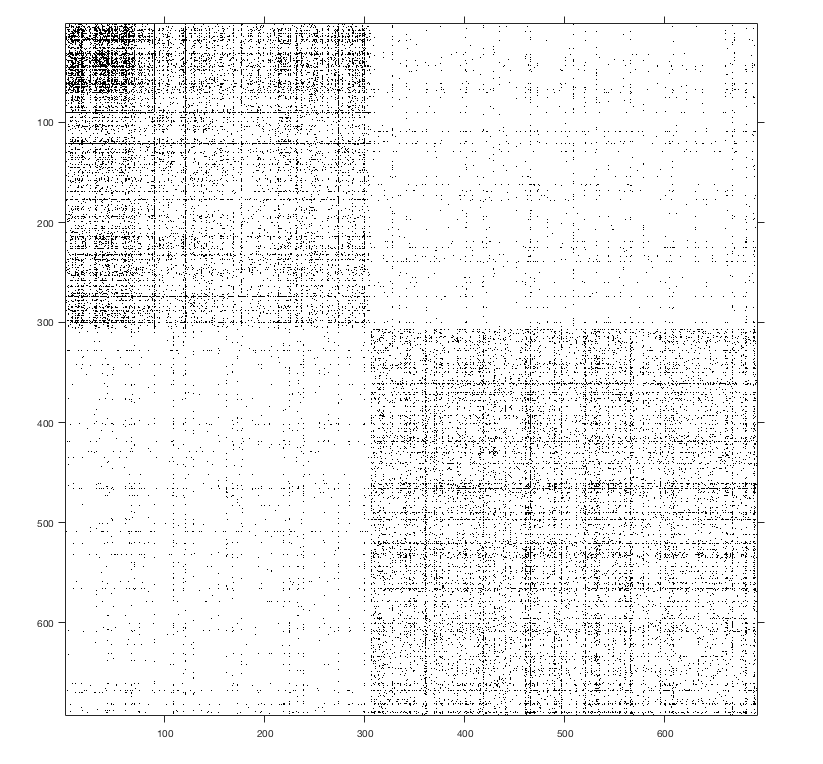}
\caption{Detecting fine scale structure in the {\tt polblogs} data set. The top left hand corner represents a sub-cluster of $80$ liberal blogs containing a given randomly selected liberal blog.}\label{fig:PolBlogsSmallScale}
\end{center}
\end{figure}

\subsubsection{Gene Expression Data Set}
\label{section:GeneExpression}
Our second data set consists of Gene Expression values collected via Microarray for a sample of \emph{Neurospora crassa} at $13$ different time points, originally studied in \cite{Dong2008}. The data consists of $2436$ time series, one for each Gene of interest, consisting of $13$ scaled expression readings. Each scaled expression reading is a floating point number between $4$ and $13$. We treat this data as a set of $2436$ data points $\bfX = \{\bfx_1,\ldots, \bfx_{2436}\}$ in $\mathbb{R}^{13}$. We then constructed an \emph{Affinity Matrix} $\bfA^1$ as suggested in \cite{Ng2001} defined as $\bfA^1_{ij} = \exp(-\|\bfx_{i} - \bfx_{j}\|^{2}_{2}/\sigma^2)$. Here $\sigma$ was chosen to be $\sqrt{10}$ although we note that experimenting with other values of $\sigma$ did not qualitatively change the results.\\

To promote sparsity, we added the additional step of constructing a $K$-nearest-neighbours adjacency matrix from $\bfA^1$ by retaining the (weighted) edge $\{i,j\}$ if and only if $\bfx_j$ is among the $K$ nearest neighbours of $\bfx_i$ or \emph{vice versa}. Call the resulting adjacency matrix $\bfA^{2,K}$. We remark that, much like the parameter $\sigma$ used in constructing $\bfA^1$, varying $K$ did not qualitatively affect the results obtained. Thus, we shall fix $K = 100$ and refer to $\bfA^{2,100}$ simply as $\bfA^{2}$. Figure \ref{fig:Gene_Expression_Before} shows the results of this preprocessing in reverse-grayscale (that is, larger values are darker).

\begin{figure}[H]
\begin{center}
\includegraphics[width=0.75\textwidth]{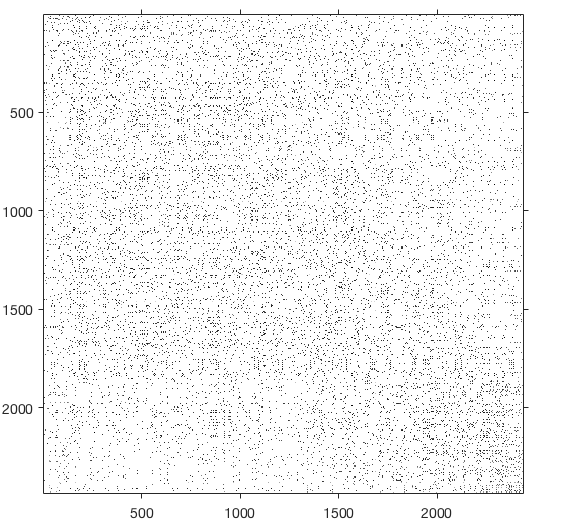}
\caption{The adjacency matrix $\bfA^2$, resulting from preprocessing the Gene Expression Data set originally studied in \cite{Dong2008}.}\label{fig:Gene_Expression_Before}
\end{center}
\end{figure}

We used ISCP to look for clusters on two different scales, informed by the underlying biology of the data set\footnote{and the authors gratefully acknowledge many informative discussions with Professor Jonathan Arnold and his lab}.  First, we looked for clusters of size $90 - 100$, as this is around the number of genes targeted by a single regulator in \emph{N. crassa}. The results of this experiment are shown in figure \ref{fig:Gene_Expression_Small_Cluster}.\\

In a well studied organism such as \emph{N. crassa}, the functions of many genes are known. In fact of the $2436$ genes in this data set, more than half of them ($1442$) have been assigned a MIPS label, which is an hierarchical system for labelling genes by function. Moreover, in \cite{Dong2008}, the genes in this data set are grouped into twelve categories according to their functions, for example `signaling' or `transcriptional control and regulation'. Biologically, an interesting question is which functional categories cluster together. Thus, for our second numerical experiment we used ISCP to detect clusters of size $450 - 550$ (results pictured in Figure \ref{fig:Gene_Expression_Large_Cluster}) and recorded the number of genes of each category \footnote{For a description of the categories the reader is referred to \cite{Dong2008}. The counts are displayed in Table \ref{table:Gene_Expression}. The categories are numbered by the order in which they appear in figure $15$ in \cite{Dong2008}, omitting the category `clock' as there are no MIPS codes associated with this category.} occurring in each cluster. We can assess the biological significance of this clustering by performing a chi-squared test. The null hypothesis is that there is no relation between the clusters and the functional categories, in which the expected number of genes in Category $j$ contained in Cluster $i$, in the notation of Table \ref{table:Gene_Expression}, is $E_{ij} = \mathbf{T}^{C}_{j} \times (\mathbf{T}^{R}_{i}/1442)$. For example, $E_{12} = 164 \times (321/1442) = 36.51$ Denoting the observed counts by $O_{ij}$ (i.e. $O_{12} = 32$, $O_{21} = 19$ and so on) we compute the chi-squared test statistic as $\sum_{i}\sum_{j} (E_{ij} - O_{ij})^2/E_{ij} = 73.87$. From a table of values for the $\chi^{2}$ distribution\footnote{Note that there are $(\# \text{rows} - 1)(\#\text{columns} - 1) = 40$ degrees of freedom here}, we get that assuming the null hypothesis there is a $0.1\%$ chance that this statistic is greater than $73.40$. Thus, we may safely reject the null hypothesis, and assume that the Clustering found by ISCP is related to the functions of the genes in the data set.

\begin{table}[h!]
\begin{center}
\scriptsize 
\begin{tabular}{|c|ccccccccccc|c|c|}
\hline 
\backslashbox{Cluster}{Category} & $\mathbf{1}$ & $\mathbf{2}$ & $\mathbf{3}$ & $\mathbf{4}$ & $\mathbf{5}$ & $\mathbf{6}$ & $\mathbf{7}$ & $\mathbf{8}$ & $\mathbf{9}$ & $\mathbf{10}$ & $\mathbf{11}$ & $\mathbf{T}^{R}$ & {\it Cluster size} \\
\hline \hline 
$\mathbf{1}$ & $13$ & $32$ & $15$ & $22$ & $42$ & $46$ & $\mathbf{36}$ & $\mathbf{19}$ & $8$ & $20$ & $68$ & $321$ & $527$ \\
\hline
$\mathbf{2}$ & $19$ & $36$ & $18$ & $28$ & $24$ & $37$ & $\mathbf{11}$ & $10$ & $12$ & $25$ & $70$ & $290$ & $513$ \\
\hline
$\mathbf{3}$ & $13$ & $\mathbf{23}$ & $21$ & $33$ & $34$ & $\mathbf{57}$ & $\mathbf{10}$ & $13$ & $11$ & $32$ & $71$ & $318$ & $452$\\
\hline  
$\mathbf{4}$ & $15$ & $32$ & $14$ & $17$ & $22$ & $27$ & $17$ & $\mathbf{2}$ & $5$ &$22$ & $47$ & $220$ & $554$ \\
\hline
$\mathbf{5}$ & $20$ & $41$ & $19$ & $31$ & $35$ & $\mathbf{26}$ & $18$ & $15$ & $9$ & $\mathbf{13}$ & $66$ & $293$ & $390$ \\
\hline\hline
$\mathbf{T}^{C}$ &  $80$ & $164$ & $87$ & $131$ & $157$ & $193$ & $92$ & $59$ & $45$ & $112$& $322$ & $1442$ & $2436$ \\
\hline
\end{tabular}
\caption{The number of genes in each category present in each cluster. The row totals represent the number of genes with a MIPS classifier present in each cluster (the total number of genes with a MIPS classifier is $1442$). Anomalously high or low counts are shown in bold.}
\end{center}
\label{table:Gene_Expression}
\end{table}

\begin{figure}[H]
\begin{center}
\includegraphics[width=0.65\textwidth]{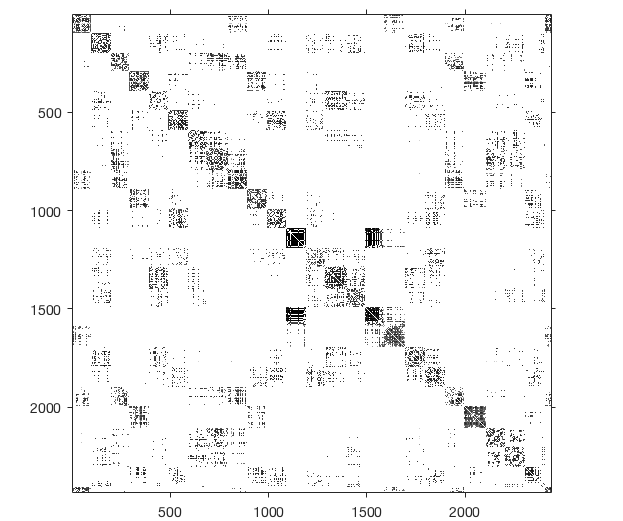}
\caption{Output of ISCP applied to $\bfA^2$, set to find clusters of size $90-100$.}\label{fig:Gene_Expression_Small_Cluster}
\end{center}
\end{figure}

\begin{figure}[H]
\begin{center}
\includegraphics[width=0.65\textwidth]{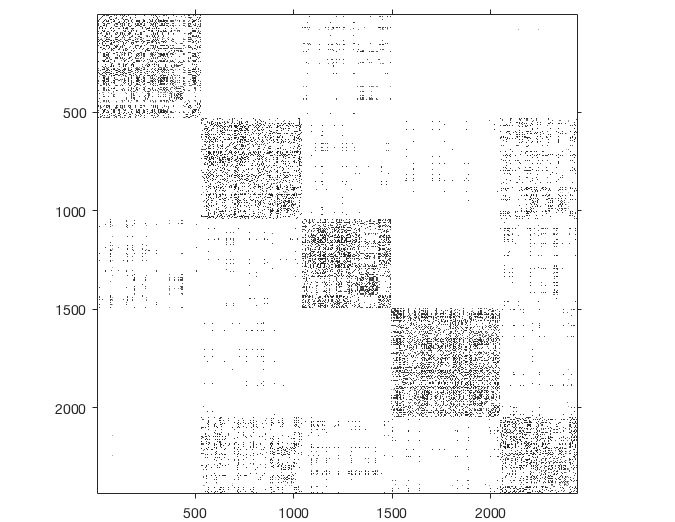}
\caption{Output of ICSP applied to $\bfA^2$, set to find clusters of size $450-550$.}\label{fig:Gene_Expression_Large_Cluster}
\end{center}
\end{figure}

\end{document}